\newcommand{\bra}[1]{{\left\langle{#1}\right\vert}}
\newcommand{\ket}[1]{{\left\vert{#1}\right\rangle}}
\newcommand{\qw}[1][-1]{\ar @{-} [0,#1]}
\newcommand{\gate}[1]{*+<.6em>{#1} \POS ="i","i"+UR;"i"+UL **\dir{-};"i"+DL **\dir{-};"i"+DR **\dir{-};"i"+UR **\dir{-},"i" \qw}
\newcommand{\meter}{*=<1.8em,1.4em>{\xy ="j","j"-<.778em,.322em>;{"j"+<.778em,-.322em> \ellipse ur,_{}},"j"-<0em,.4em>;p+<.5em,.9em> **\dir{-},"j"+<2.2em,2.2em>*{},"j"-<2.2em,2.2em>*{} \endxy} \POS ="i","i"+UR;"i"+UL **\dir{-};"i"+DL **\dir{-};"i"+DR **\dir{-};"i"+UR **\dir{-},"i" \qw}
\newcommand{\multigate}[2]{*+<1em,.9em>{\hphantom{#2}} \POS [0,0]="i",[0,0].[#1,0]="e",!C *{#2},"e"+UR;"e"+UL **\dir{-};"e"+DL **\dir{-};"e"+DR **\dir{-};"e"+UR **\dir{-},"i" \qw}
\newcommand{\ghost}[1]{*+<1em,.9em>{\hphantom{#1}} \qw}
\newcommand{\lstick}[1]{*!R!<.5em,0em>=<0em>{#1}}
\newcommand{\Qcircuit}{\xymatrix @*=<0em>}
\newclass{\IQP}{IQP}
\newclass{\stoqQMA}{stoqQMA}
\newclass{\sampBPP}{sampBPP}
\newcommand{\PostIQPH}{\ComplexityFont{PostIQP}(H)}
\newcommand{\sampIQPH}{\ComplexityFont{samp\text{-}IQP}(H)}
\newclass{\PostBPP}{PostBPP}
\DeclareMathOperator{\diag}{diag}
\DeclareMathOperator{\Lie}{Lie}
\DeclareMathOperator{\Tr}{Tr}
\DeclareMathOperator{\spn}{span}
\DeclarePairedDelimiter{\set}{\lbrace}{\rbrace}
\DeclarePairedDelimiter{\abs}{\lvert}{\rvert}
\begin{document}

\title{Complexity classification of two-qubit commuting hamiltonians}
\author[1]{Adam Bouland\thanks{email: adam@csail.mit.edu}}
\author[2]{Laura Man\v{c}inska\thanks{email: laura@locc.la}}
\author[1]{Xue Zhang\thanks{email: lzh@mit.edu}}
\affil[1]{Massachusetts Institute of Technology, Cambridge, MA USA}
\affil[2]{Centre for Quantum Technologies, National University of Singapore, Singapore}

\date{}

  \maketitle

\begin{abstract}
We classify two-qubit commuting Hamiltonians in terms of their computational complexity. 
Suppose one has a two-qubit commuting Hamiltonian $H$ which one can apply to any pair of qubits, starting in a computational basis state. 
We prove a dichotomy theorem: either this model is efficiently classically simulable or it allows one to sample from probability distributions which cannot be sampled from classically unless the polynomial hierarchy collapses.  
Furthermore, the only simulable Hamiltonians are those which fail to generate entanglement.
This shows that \emph{generic} two-qubit commuting Hamiltonians can be used to perform computational tasks which are intractable for classical computers under plausible assumptions. 
Our proof makes use of new postselection gadgets and Lie theory. 
\end{abstract}

\newtheorem{lemma}{Lemma}[section] 
\newtheorem{theorem}{Theorem}[section] 
\newtheorem{corollary}{Corollary}[section] 
\newtheorem{proposition}{Proposition}[section] 
\newtheorem{clm}{Claim}[section]
\newtheorem{definition}{Definition}[section]

\section{Introduction}

Quantum computers hold the promise of performing computational tasks which cannot be simulated efficiently using classical computers.
A hallmark example of this is Shor's quantum factoring algorithm \cite{Shor} for which no classical analog is known. 
However, proving that quantum computers hold an advantage over classical ones when it comes to factoring or any other decision problem would show that $\P\neq\PSPACE$, which is well beyond our current reach. 
Therefore, we aim to establish quantum advantage under widely accepted complexity assumptions like $\P\neq\NP$, non-collapse of the polynomial hierarchy $\PH$, and others.
In this submission we show that generic two-qubit commuting Hamiltonians can be used to perform computational tasks which are intractable for classical computers unless $\PH$ collapses. 
Since commuting gate sets allow for easier fault-tolerant implementation~\cite{experiment}, our results offer the possibility to experimentally perform classically intractable computations even before achieving universal quantum computation.

\subsection{Problem statement and results}

The evolution of a quantum system is determined by its Hamiltonian, which corresponds to a Hermitian matrix $H$. 
If we apply a Hamiltonian for time $t$, then this applies the unitary gate $e^{iHt}$ to the system.
The Hamiltonian of a system is governed by its underlying physics, so oftentimes in quantum computing experiments (e.g. in superconducting qubits) it is easy to apply certain Hamiltonians but not others.
From this perspective it is natural to study the computational power of a fixed Hamiltonian $H$  that can be applied to different ordered subsets of qubits for arbitrarily chosen amounts of time.  
Here we consider the model where
we have a fixed two-qubit\footnote{One-qubit Hamiltonians cannot create entanglement, so are efficiently classically simulable in this model.} Hamiltonian $H$ which we can apply to any ordered pair of qubits, where we initialize our system in a computational basis state and perform a computational basis measurement at the end. 
Now it is natural to ask: What is the computational power of this model for a fixed $H$? 
It is known that almost any choice of $H$ in this model yields universal quantum computation \cite{Deutsch,Weaver,2qubithamiltonians,Bauer}, but the classification of such universal Hamiltonians remains an open problem.
Curiously, there exist subsets of Hamiltonians that do not seem to offer the full power of $\BQP$ but nevertheless are hard to simulate classically under plausible complexity assumptions~\cite{IQP, IQPPH, shepherdIQPthesis}.

We focus on a particular family of Hamiltonians $H$ which, even though incapable of universal quantum computation, can perform computations that are hard for classical computers and might offer easier experimental implementation.
Specifically, we study Hamiltonians $H$ that can only give rise to mutually commuting gates:

\begin{definition}
We say that a two-qubit Hamiltonian $H$ is \emph{commuting} if $[H\otimes I, I \otimes H]=0$ and $[H\otimes I, I\otimes (THT)]=0$ and $[H, THT]=0$, where $T$ is the two-qubit swap gate. In other words, $H$ commutes with itself when applied to any pair of qubits.
\end{definition}

We are interested in classifying which commuting two-qubit Hamiltonians $H$ allow us to perform computational tasks that are hard for classical computers. In particular, we want to understand when $H$ gives rise to probability distributions which are hard to simulate classically:
\begin{definition}
We say that a family of probability distributions $\{\mathcal{D}_x\}_{x\in\{0,1\}^*}$ are \emph{hard to sample from classically} if there exists a constant $c>1$ such that no $\BPP$ machine $M$ can satisfy
\[
	\frac{1}{c}\Pr[\text{$M(x)$ outputs $y$}] \leq \mathcal{D}_x(y) \leq c\Pr[\text{$M(x)$ outputs $y$}]
\]
for all $y$ in the sample space of $\mathcal{D}_x$. 
\end{definition}

Clearly, if a commuting $H$ is not capable of creating entanglement from any computational basis state then the system will remain in a product state, so this model will be efficiently classically simulable. 
Surprisingly, we show that in all the remaining cases $H$ can perform sampling tasks which cannot be simulated classically unless $\PH$ collapses.

\begin{theorem}[Main Result]
If a commuting two-qubit Hamiltonian $H$ is capable of creating entanglement from a computational basis state, then it gives rise to probability distributions that are hard to sample from classically unless $\PH$ collapses.
\label{thm:Main}
\end{theorem}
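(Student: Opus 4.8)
The plan is to combine a structural classification of commuting two-qubit Hamiltonians with an $\IQP$-style hardness argument in the spirit of Bremner, Jozsa and Shepherd. \emph{Classification.} Writing $H=\sum_k s_k A_k\otimes B_k$ in its operator-Schmidt decomposition, I would first observe that the relation $[H\otimes I,\,I\otimes H]=0$ forces $[B_k,A_\ell]=0$ for all $k,\ell$, since the three-qubit operators $A_k^{(1)}\otimes[B_k,A_\ell]^{(2)}\otimes B_\ell^{(3)}$ are linearly independent across $(k,\ell)$. A short commutant computation then shows that $\spn\{A_k\}$ and $\spn\{B_k\}$ both lie inside a common two-dimensional commutative subspace $\spn\{I,\vec n\cdot\vec\sigma\}$, so that after a simultaneous single-qubit change of basis $U\otimes U$ (and dropping an additive multiple of the identity) one has $H=bZ\otimes I+cI\otimes Z+dZ\otimes Z$; the remaining two defining relations are then automatic. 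Lie theory enters when lifting to $n$ qubits: all available gates $e^{iH_p t}$ commute, so the reachable unitaries form the closure of the exponential of an abelian Lie algebra, a torus $\mathcal T$ of operators simultaneously diagonal in the product basis $(U^\dagger\{|0\rangle,|1\rangle\})^{\otimes n}$. Hence the model is equivalent to: prepare a fixed product state $|\psi\rangle^{\otimes n}$ with $|\psi\rangle=U^\dagger|0\rangle$, apply diagonal gates drawn from $\mathcal T$, and measure in the $\{|\psi\rangle,|\psi^\perp\rangle\}$ basis; and $H$ creates entanglement from a computational basis state precisely when $d\neq 0$ and $|\psi\rangle$ is not a $Z$-eigenstate, which I take as the standing hypothesis.

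\emph{Reduction.} I would then follow the standard route. If the distributions $\{\mathcal D_x\}$ produced by this model admitted a $\BPP$ sampler within multiplicative factor $c$, then postselecting both the quantum process and the classical sampler on the same non-negligible event preserves multiplicative closeness (after a preliminary in-circuit error reduction so that the relevant bias survives the resulting $c^2$ loss), so any language decidable by the postselected $H$-model would lie in $\PostBPP$. Since $\PostBPP\subseteq\PH$ and, by Aaronson's theorem, $\ComplexityFont{PostBQP}=\ComplexityFont{PP}$, it suffices to prove that the $H$-model with postselection is $\ComplexityFont{PostBQP}$-universal: this yields $\ComplexityFont{PP}\subseteq\PostBPP\subseteq\PH$, and Toda's theorem collapses $\PH$.

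\emph{Universality via postselection gadgets.} The heart of the argument is to realize a universal gate set from the very restricted primitives. I would use: (i) a genuinely entangling two-qubit gate --- for a suitable time $t$, $e^{iHt}$ is not a product of one-qubit unitaries because $d\neq 0$; (ii) arbitrary single-qubit $Z$-rotations, obtained by applying $e^{iHt}$ across a data qubit and an ancilla prepared in $|\psi\rangle$ and postselecting the ancilla measurement --- this requires first checking that $\mathcal T$ contains, for every pair, a generator with nonzero $ZZ$-component (it does, since $d\neq 0$) and treating the degenerate case $b=c$ separately; and (iii) a non-diagonal single-qubit gate, produced by a measurement-based-computation-style gadget: entangle a data qubit with a fresh ancilla through $e^{iHt}$, measure the data qubit, and postselect the favourable branch so that no uncorrectable Pauli byproduct survives, leaving the ancilla carrying a fixed single-qubit operation applied to the data state. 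Composing these teleportation gadgets for various times $t$ generates a subgroup of $SU(2)$; I would argue it is dense, using that $U$ is not diagonal and that $t$ is a continuous parameter, so an irrational rotation about an off-axis direction is available. A dense set of one-qubit gates together with any entangling two-qubit gate is universal for $\BQP$ (Brylinski--Brylinski, with Solovay--Kitaev), hence --- restoring the remaining postselection --- for $\ComplexityFont{PostBQP}$.

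The main obstacle is precisely this last step. The only primitives are the one-parameter family $\{e^{iHt}\}$ applied to ordered pairs, together with state preparation and measurement in a basis \emph{incompatible} with the common eigenbasis of these (diagonal) gates, so the usual teleportation and phase-rotation gadgets must be rebuilt and their postselected branches analyzed by hand; these are the new postselection gadgets the argument needs. Extra care is also required when the reachable torus $\mathcal T$ is low-dimensional (the coefficients $b,c,d$ rationally dependent) and in the degenerate sub-cases noted above. By contrast, the converse half of the dichotomy is immediate: if a commuting $H$ cannot create entanglement from a computational basis state, then starting from such a state the system remains in a product state under every $e^{iH_p t}$, so the model is trivially classically simulable.
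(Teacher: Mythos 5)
Your overall architecture matches the paper's: locally diagonalize $H$ via a Pauli-basis argument, reduce hardness-of-sampling to $\PostBQP$-universality \`{a} la Bremner--Jozsa--Shepherd, and supply universality via postselected teleportation gadgets that produce single-qubit operations. The structural characterization you derive (that $H$ is conjugate by $U\otimes U$ to $bZ\otimes I + cI\otimes Z + dZ\otimes Z$, and that entanglement generation is equivalent to the $ZZ$-coefficient being nonzero together with $U$ non-diagonal) is exactly the paper's Claim on local diagonalization plus its $b+c\neq 0$, $\alpha\beta\neq 0$ conditions.

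However, there is one conceptual error that sinks the ``heart of the argument'' as you've sketched it, and one missing case.

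\textbf{The gadgets are not unitary.} You write that composing the teleportation gadgets for various $t$ ``generates a subgroup of $SU(2)$,'' and then appeal to denseness of an irrational rotation. This is not what happens. When you entangle a data qubit with a $|\psi\rangle$ ancilla via a diagonal $D(t)$, rotate out by $U^\dagger$, and postselect one branch, the induced map on the remaining qubit is in general a genuinely \emph{non-unitary} element of $SL(2,\mathbb{C})$, not $SU(2)$ (the two branches implement non-conjugate, non-normalized maps; the singular values differ from $1$ whenever $|\alpha|\neq|\beta|$ or the phases in $D(t)$ are generic). This kills the clean $SU(2)$ picture in two ways. First, since $SL(2,\mathbb{C})$ is non-compact, the closed monoid generated by the $L(t)$'s need not contain inverses, so it is not a priori a group and you cannot directly invoke Lie-theoretic density arguments; the paper's main technical contribution is a separate family of (rather elaborate) postselection gadgets that implement $L(t)^{-1}$ exactly, which then makes the generated closed set an actual Lie subgroup of $SL(2,\mathbb{C})$ via Cartan's theorem. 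Second, once you are working in $SL(2,\mathbb{C})$ you also cannot use the ordinary Solovay--Kitaev theorem to compile; the paper needs the non-unitary Solovay--Kitaev variant of Aharonov, Arad, Eban and Landau, and has to verify the efficiency and precision conditions (exponential accuracy with polylog overhead) that this theorem demands. Your sketch doesn't touch either issue, and ``irrational off-axis rotation'' is not the mechanism at play.

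\textbf{Missing exceptional case.} Even with inverses, the Lie algebra generated by the $L(t)$'s is \emph{not} all of $\mathfrak{sl}(2,\mathbb{C})$ when $H$ is conjugate to $X(\theta)\otimes X(\theta)$ (in your parametrization, $b=c=0$, $d\neq 0$, and $|\alpha|=|\beta|$). In that case the gadget Lie algebra spans only a two-real-dimensional off-diagonal subspace, so the density-plus-postselected-universality route fails, and hardness has to come from a completely different argument (the paper cites an unpublished permanent-embedding construction of Fefferman, Foss-Feig and Gorshkov, reduced to hardness via Aaronson--Arkhipov and Stockmeyer counting). Your standing hypothesis $d\neq 0$ includes this case, so the proof as written would silently break there. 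Relatedly, even in the non-exceptional case the Lie-algebra computation has special sub-cases ($a'=\pm 1, -3$ in the paper's normalization) where additional gadgets beyond the $L(t)$'s are needed to fill out $\mathfrak{sl}(2,\mathbb{C})$; ``irrational rotation'' hides a fair amount of case analysis.

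The reduction scaffolding (postselecting both sides, $\PostBPP\subseteq\PH$, Aaronson's $\PostBQP=\PP$, Toda) and the converse direction (non-entangling $\Rightarrow$ product evolution $\Rightarrow$ simulable) are both fine and match the paper.
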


Additionally, given such an $H$, our result provides an algorithm which describes the experimental setup required to sample from these hard distributions.

\paragraph{Experimental implications.}
Universal quantum computers have proved challenging to implement in practice as they require large overheads for fault-tolerance.
As a result, some skeptics have questioned if quantum devices will ever be able to demonstrate an advantage over classical computers \cite{Kalai,Levin}.

One response to this challenge is to study weaker models of quantum computation which are incapable of universal quantum computation, but still demonstrate an advantage over classical computation \cite{Knill1998,DQC2014,Jordan,bosonsampling,IQPPH}. 
Aliferis et al.\ \cite{experiment} have shown that commuting gate sets may be easier to implement fault-tolerantly with superconducting qubits than universal gate sets, and provided numerical evidence that they may admit lower fault-tolerance thresholds.
Therefore, commuting computations form a good candidate for providing the first ``proof of concept" demonstration of quantum supremacy over classical computation \cite{bosonsampling}. Our Theorem~\ref{thm:Main} says that almost any commuting Hamiltonian could be used for this demonstration, and additionally provides the experimentalist with a straightforward criterion to determine whether a commuting Hamiltonian can be used to sample from hard distributions.

\subsection{Proof ideas}
\vspace{-0.5em}
Our proof proceeds in several steps. 
First, we use that fact that any commuting two-qubit Hamiltonian $H$ is locally diagonalizable:
\begin{lemma}[\cite{cubitt} Lemma 33]
For any commuting two-qubit Hamiltonian there exists a one-qubit unitary $U$ and a diagonal matrix $D$ such that $H=(U\otimes U) D (U^\dagger \otimes U^\dagger$).
\end{lemma}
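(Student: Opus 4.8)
The plan is to pass to the Pauli basis and observe that just the first of the three defining commutation relations, $[H\otimes I,\,I\otimes H]=0$ on three qubits, already forces $H$ into the desired form. Write $H=\sum_{i,j=0}^{3}h_{ij}\,\sigma_i\otimes\sigma_j$, where $\sigma_0=I$ and $\sigma_1,\sigma_2,\sigma_3$ are the Pauli matrices; Hermiticity of $H$ makes every $h_{ij}$ real. First I would expand the triple commutator using $[\sigma_j,\sigma_k]=2i\sum_m\epsilon_{jkm}\sigma_m$ together with the fact that $[\sigma_j,\sigma_k]=0$ whenever $j=0$ or $k=0$. Since the three‑qubit products $\sigma_i\otimes\sigma_m\otimes\sigma_l$ are linearly independent, the commutator vanishes exactly when, for every pair of indices $i,l$, the ``transverse row'' $\vec a^{(i)}:=(h_{i1},h_{i2},h_{i3})$ and the ``transverse column'' $\vec b_{(l)}:=(h_{1l},h_{2l},h_{3l})$ satisfy $\vec a^{(i)}\times\vec b_{(l)}=0$, i.e.\ are parallel.

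Next I would run a short case split. If every $\vec a^{(i)}$ vanishes then $h_{ij}=0$ for all $j\ge1$, so $H=A\otimes I$ for a one‑qubit Hermitian $A$; diagonalizing $A=UD_AU^{\dagger}$ gives $H=(U\otimes U)(D_A\otimes I)(U^{\dagger}\otimes U^{\dagger})$, as required. The case where every $\vec b_{(l)}$ vanishes is symmetric and gives $H=I\otimes A$. Otherwise there exist nonzero $\vec a^{(i_0)}$ and $\vec b_{(l_0)}$; then every $\vec b_{(l)}$ is parallel to $\vec a^{(i_0)}$, every $\vec a^{(i)}$ is parallel to $\vec b_{(l_0)}$, and $\vec a^{(i_0)}$ is parallel to $\vec b_{(l_0)}$, so all these vectors lie on one line, spanned by a unit vector $\hat n\in\mathbb{R}^3$. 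Hence the transverse $3\times3$ block of $h$ has every row and every column proportional to $\hat n$, so it equals $\lambda\,\hat n\hat n^{T}$, while $(h_{01},h_{02},h_{03})$ and $(h_{10},h_{20},h_{30})$ are scalar multiples of $\hat n$. Collecting terms,
\[
H=h_{00}\,I\otimes I+\nu\,(\hat n\!\cdot\!\vec\sigma)\otimes I+\mu\,I\otimes(\hat n\!\cdot\!\vec\sigma)+\lambda\,(\hat n\!\cdot\!\vec\sigma)\otimes(\hat n\!\cdot\!\vec\sigma),\qquad h_{00},\nu,\mu,\lambda\in\mathbb{R}.
\]
Finally, since $\hat n\cdot\vec\sigma$ has eigenvalues $\pm1$, there is a one‑qubit unitary $U$ with $U^{\dagger}(\hat n\cdot\vec\sigma)U=Z$; conjugating the displayed $H$ by $U\otimes U$ turns it into a real combination of $I\otimes I$, $Z\otimes I$, $I\otimes Z$ and $Z\otimes Z$, which is diagonal, and one takes $D:=(U^{\dagger}\otimes U^{\dagger})H(U\otimes U)$.

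The one step I expect to carry the actual content is showing that a \emph{single} one‑qubit unitary can diagonalize both tensor factors at once: each factor a priori comes with its own preferred axis (the span of its transverse rows, respectively columns), and it is precisely the three‑qubit relation — the cross‑product conditions linking rows to columns — that forces these two axes to coincide; without it one could only hope for a $U_1\otimes U_2$ with $U_1\ne U_2$. Everything else (the commutator expansion and the degenerate $A\otimes I$, $I\otimes A$ cases) is routine bookkeeping, and one sees as a byproduct that for two‑qubit Hermitian $H$ the first defining relation already implies the other two.
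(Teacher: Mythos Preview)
Your proposal is correct and follows essentially the same route as the paper's proof: expand $H$ in the Pauli basis, show that the single relation $[H\otimes I,\,I\otimes H]=0$ forces $\vec a^{(i)}\times\vec b_{(l)}=0$ for all $i,l$, conclude that all transverse rows and columns lie on a common line, and then diagonalize $\hat n\cdot\vec\sigma$ by a single one-qubit $U$. The paper writes the rank-one transverse block as $\vec v\,\vec v^{T}$ rather than your $\lambda\,\hat n\hat n^{T}$ and skips your explicit $A\otimes I$, $I\otimes A$ degenerate cases, but these are cosmetic differences; your observation that the first defining relation already suffices is exactly the strengthening the paper records.
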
 
The proof of this follows from expanding $H$ in the Pauli basis, and  deducing relationships between the Pauli coefficients.

Next, we use postselection gadgets to construct a family of one-qubit operations $L(t):\mathbb{C}^2\rightarrow\mathbb{C}^2$ for $t\in\mathbb{R}$ that that can be applied to the input state using postselection. 
We then show that these gadgets are universal on a qubit whenever $H$ generates entanglement, so long as $H$ is not some exceptional subcase.
The exceptional subcase is $H=X(\theta)\otimes X(\theta)$ where $X(\theta)=\left(\begin{smallmatrix}0&e^{i\theta/2} \\ e^{-i\theta/2} & 0 \end{smallmatrix}\right)$. 

\begin{lemma} 
If $H$ is capable of creating entanglement from a computational basis state and $H$ is not $X(\theta)\otimes X(\theta)$ for some $\theta$, then it is possible to construct any one-qubit gate by taking products of the $L(t)$ gadgets.
\end{lemma}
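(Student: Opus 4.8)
The plan is to reduce to diagonal $H$, work out an explicit form for $L(t)$, and then determine the group generated by $\{L(t)\}_{t\in\mathbb R}$ by a Lie-theoretic argument; the exceptional family $X(\theta)\otimes X(\theta)$ will turn out to be exactly the locus where this argument degenerates. Concretely: by the locally-diagonalizable lemma we may conjugate the whole model by $U$ on every qubit and assume $H=D=\diag(0,d_{01},d_{10},d_{11})$ is diagonal, having absorbed $d_{00}I$ into an irrelevant global phase. Since $H$ now acts diagonally and the gadget uses only a bounded number of applications of $e^{iDt}$ together with fixed ancilla preparations and postselections (themselves rotated by $U$), each $L(t)$ is an explicit expression built from the phases $e^{id_{xy}t}$ and from the fixed one-qubit matrices coming from $U$ and the gadget; computing this form is the first task. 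In this frame the hypothesis that $H$ is entangling becomes a nondegeneracy condition on $(d_{01},d_{10},d_{11})$ --- essentially $d_{00}-d_{01}-d_{10}+d_{11}\neq 0$ together with $U$ being genuinely off-diagonal --- while the exceptional family $X(\theta)\otimes X(\theta)$ becomes $D\propto Z\otimes Z$ up to the identity shift, i.e.\ $d_{01}=d_{10}$ and $d_{11}=0$.

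Next, let $G$ be the subgroup of $\mathrm{GL}_2(\mathbb C)$ generated by the $L(t)$, taken modulo the global scalar that postselection makes irrelevant, and let $\mathfrak g=\Lie(\overline G)$ --- a genuine Lie algebra, since $\overline G$ is a closed subgroup. Differentiating $L(t)$ at the identity and conjugating by the fixed gadget gates and by the $L(s)$ yields a spanning set of $\mathfrak g$, and the crux is to show that in every entangling, non-exceptional case these elements span a \emph{non-abelian} subalgebra. A non-abelian Lie subalgebra of $\mathfrak u(2)$ necessarily contains $\mathfrak{su}(2)$ --- its derived subalgebra is a nonzero semisimple subalgebra of $\mathfrak{su}(2)$, hence all of $\mathfrak{su}(2)$ --- and the same conclusion survives if the $L(t)$ are only projectively unitary; thus $\mathfrak{su}(2)\subseteq\mathfrak g$ and $\overline G$ contains the one-qubit unitaries modulo phase, $\mathrm{PU}(2)\cong SO(3)$. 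In particular $G$ contains full one-parameter rotation subgroups about two distinct axes, and products of these realize every one-qubit gate --- densely, and in fact exactly, by standard finite-generation facts for $SO(3)$. Equivalently one may argue straight from the classification of closed subgroups of $SU(2)$: $G$ is infinite as $t$ varies, and once non-commutativity is in hand it lies in no maximal torus or its normalizer, hence is dense.

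The hard part, and the step I expect to be the main obstacle, is the converse: showing that the \emph{only} way the spanning set above fails to be non-abelian --- so that $G$ lies in a single maximal torus or its normalizer and is non-universal --- is that $D$ is proportional to $Z\otimes Z$ up to the identity shift, that is $H=X(\theta)\otimes X(\theta)$, and that no other entangling Hamiltonian is degenerate in this way. This requires a finite but delicate case analysis over $(d_{01},d_{10},d_{11})$ --- including the edge cases where several of the $d_{xy}$ coincide --- and over the choice of ancilla states and postselection outcomes used in the gadget; making it exhaustive, rather than merely dispatching a ``generic'' $D$, is where the real work lies. A secondary difficulty is obtaining closed forms for $L(t)$ and its derivative clean enough that the (non)vanishing of the relevant off-diagonal components can actually be read off.
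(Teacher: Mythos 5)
Your proposal rests on a misconception that the paper is explicitly built to overcome: you treat the gadgets $L(t)$ as (projectively) unitary, placing the Lie-theoretic argument in $\mathfrak{u}(2)$ and concluding via the classification of closed subgroups of $SU(2)$. But postselection does not produce a unitary on the input qubit. The $L(t)$ are genuinely non-unitary linear maps; after scaling to determinant $1$ they live in the non-compact group $SL(2,\mathbb{C})$, and they are not proportional to elements of $SU(2)$ (their singular values are unequal for generic $t$). Two consequences of this break your argument at the root.

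First, compactness is what guarantees that the closure of the monoid generated by a set of matrices is automatically a group. In $SL(2,\mathbb{C})$ this fails: the closure of $\langle\{L(t)\}\rangle$ need not contain $L(t)^{-1}$, and since each $L(t)$ is realized by postselection, being able to run the Hamiltonian backwards does not give you $L(t)^{-1}$ either. The paper's proof therefore has to construct explicit additional postselection gadgets implementing $L(t)^{-1}$ (Claim~\ref{clm:inverses} and Appendix~\ref{app:inverses}); this is a substantial piece of the proof and is the reason the statement ``take products of $L(t)$ gadgets'' is nontrivial. Your plan assumes the group $G$ exists and is physically realizable without addressing this, which is exactly the gap the paper flags.

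Second, even granting a group, the structural shortcut ``non-abelian subalgebra of $\mathfrak{u}(2)$ contains $\mathfrak{su}(2)$, hence density'' has no analogue in $\mathfrak{sl}(2,\mathbb{C})$: the real Lie algebra $\mathfrak{sl}(2,\mathbb{C})$ is six-dimensional and has proper non-abelian subalgebras such as $\mathfrak{su}(2)$, $\mathfrak{su}(1,1)\cong\mathfrak{sl}(2,\mathbb{R})$, and the Borel (upper-triangular) subalgebra. Likewise the ``torus / normalizer of torus / finite'' trichotomy for closed subgroups of $SU(2)$ does not carry over --- closed subgroups of $SL(2,\mathbb{C})$ are far richer, as the paper itself remarks in the open-problems section. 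The paper instead shows $\Lie(S)=\mathfrak{sl}(2,\mathbb{C})$ by explicitly computing tangent vectors $g(v)=\partial_t[L(t)L(s)^{-1}]|_{s=t=v}$ and exhausting the finitely many degenerate values of $a'$ with bespoke extra gadgets. A further small inaccuracy: the exceptional family $X(\theta)\otimes X(\theta)$ is not determined by the diagonal $D$ alone --- it also requires $|\alpha|=|\beta|$ in the local change of basis $U$, and Hamiltonians with the same $D$ but $|\alpha|\neq|\beta|$ are handled separately (Subcase A of Case 3 in Appendix~\ref{app:density}), so your reduction ``$D\propto Z\otimes Z$ up to a shift'' conflates these.
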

The main difficulty in proving this fact is that the maps $L(t)$ are in general  \emph{non-unitary}. Furthermore since they are generated with postselection, it is unclear how to invert them, so \emph{a priori} they might not even form a group. Fortunately, we find new (and somewhat complicated) postselection gadgets to construct the $L^{-1}$ operations, thus allowing us to apply group-theoretic and  Lie-theoretic techniques to address this problem.

The rest of the proof follows from standard techniques in complexity. 
Since one-qubit gates plus any entangling Hamiltonian form a universal gate set \cite{anyhamiltonianuniversal,anygateuniversal}, our model can perform universal quantum computation under postselection. 
\begin{lemma}
If $H$ is capable of creating entanglement from a computational basis state and $H$ is not $X(\theta)\otimes X(\theta)$ for some $\theta$, then postselected circuits involving $H$ are universal for $\BQP$.
\end{lemma}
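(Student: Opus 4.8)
The plan is to bootstrap from the preceding lemma together with the standard fact that an entangling two-qubit gate and all one-qubit gates form a universal gate set. First I would unpack the hypothesis: ``$H$ is capable of creating entanglement from a computational basis state'' means exactly that there is a time $t_0\in\mathbb{R}$ and a basis state $\ket{ab}$ for which $e^{iHt_0}\ket{ab}$ is entangled. Hence $G:=e^{iHt_0}$ is an \emph{entangling} two-qubit unitary, i.e.\ it is not of the form $A\otimes B$ or $(A\otimes B)T$ for one-qubit unitaries $A,B$. Crucially, $G$ can be applied directly in our model, to any ordered pair of qubits, with no postselection at all; and because the model is initialized in a computational basis state of our choosing, we may freely allocate ancillas in $\ket 0$ (and, via the gadgets below, in any one-qubit state we like).

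Next I would invoke the preceding lemma: since $H$ generates entanglement and $H$ is not $X(\theta)\otimes X(\theta)$ for any $\theta$, every one-qubit unitary can be realized, \emph{inside a postselected $H$-circuit}, as a suitable product of the $L(t)$ gadgets acting on the target qubit (with each gadget's ancillas postselected onto a fixed measurement outcome). Thus in the postselected model we have at our disposal (i) all one-qubit unitaries and (ii) the entangling two-qubit gate $G$. By the universality results of \cite{anyhamiltonianuniversal,anygateuniversal}, the set $\{G\}\cup\mathrm{SU}(2)$ is universal for quantum computation. Therefore, given any language in $\BQP$, we take a polynomial-size quantum circuit deciding it with bounded error, recompile it over $\{G\}\cup\mathrm{SU}(2)$ (to inverse-polynomial precision, which the $\BQP$ error gap absorbs), and then substitute for each one-qubit gate its $L(t)$-gadget implementation. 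The result is a postselected $H$-circuit that reproduces the action of the original circuit and hence decides the same language, so such circuits are universal for $\BQP$.

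The only genuine thing to verify carefully is the bookkeeping around the non-unitary gadgets, and this is where I expect the (mild) difficulty to lie. Each $L(t)$ gadget succeeds only with some probability $p_t<1$, and each compiled one-qubit gate is a bounded-length product of such gadgets, so a single compiled gate succeeds with constant probability; since the circuit has polynomially many gates, the joint event that every gadget postselection in the entire circuit yields its prescribed outcome has probability at least $2^{-p(n)}$ for some polynomial $p$. This is precisely the lower bound needed for the construction to qualify as a legitimate postselected circuit, and conditioned on that event the circuit acts as the intended noiseless unitary on the computational qubits, so correctness transfers verbatim. (Should the preceding lemma furnish the one-qubit gates only to within a fixed density rather than exactly, one inserts a Solovay--Kitaev step, costing only polylogarithmic overhead and not affecting the probability bound.) Everything else — choosing $t_0$, checking that $G$ is entangling, and the recompilation — is routine.
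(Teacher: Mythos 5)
Your high-level plan — one-qubit gates (via the preceding lemma) plus the directly available entangling gate $G=e^{iHt_0}$, then universality of $\{G\}\cup SU(2)$ and recompilation — does match the skeleton of the paper's argument. However, the step you dismiss in the parenthetical, ``one inserts a Solovay--Kitaev step, costing only polylogarithmic overhead and not affecting the probability bound,'' is exactly where the paper's real technical work lies, and your treatment of it is incorrect in two respects.

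First, the $L(t)$ gadgets are \emph{non-unitary} maps in $SL(2,\mathbb{C})$, and, because they are produced by postselection, it is not a priori clear that the set they generate is closed under inversion — indeed, the paper explicitly warns that the closure $S$ of $\langle L(t)\rangle$ ``might not be a group.'' The standard Solovay--Kitaev theorem requires a finite generating set together with its inverses, so it cannot be invoked here without more work. The paper resolves this by constructing explicit \emph{inverse gadgets} $L(t)^{-1}$ from additional constant-size postselected subcircuits (Claim~\ref{clm:inverses}, Appendix~\ref{app:inverses}), and then invokes a non-unitary generalization of Solovay--Kitaev due to Aharonov, Arad, Eban, and Landau (Theorem~\ref{thm:sk}), applied to a finite $\epsilon_0$-net drawn from products of $L$'s and $L^{-1}$'s. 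Your proposal does not mention inverse gadgets or the non-unitary compilation theorem at all, yet without them there is no efficient way to realize a prescribed one-qubit gate as a gadget product; the preceding lemma only gives density, not an efficient synthesis.

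Second, the precision requirement is not inverse-polynomial. You argue that inverse-polynomial gate precision suffices ``which the $\BQP$ error gap absorbs,'' but because the $L$'s are non-unitary, their singular values are not bounded by one, and an $\epsilon$-error in one gadget is amplified by the operator norms of the subsequent gadgets in the product (the paper makes this point in a footnote when analyzing Newton's method for the inverse-gadget times). Since a compiled circuit contains polynomially many gadgets, each with constant postselection probability, the joint postselection probability is exponentially small and errors can be amplified by an exponential factor; one therefore needs exponential precision per gadget. The paper is explicit that the ability to compute the gadget times to \emph{exponential} accuracy in $\polylog(1/\epsilon)$ time is essential (and indeed this exponential precision is what lets the argument continue on to $\PostBQP\subseteq\PostIQPH$, which is the use the lemma is put to in the main theorem). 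So the ``bookkeeping'' you describe as mild is precisely the gap: your proposal needs both the inverse-gadget construction and the non-unitary Solovay--Kitaev machinery, with exponential rather than inverse-polynomial precision.
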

The proof of this statement uses a non-unitary version of the Solovay-Kitaev theorem proven by Aharonov et al. \cite{aharonov} to show our choice of  gate set is irrelevant. 

Next, a result of Aaronson \cite{postbqp} tells us that postselecting our circuits further enables us to solve $\PP$-hard problems.
It then follows by the complexity arguments put forth by Bremner, Jozsa, and Shepherd \cite{IQPPH} and Aaronson \cite{bosonsampling} that a randomized classical algorithm cannot sample from the probability distributions produced by our circuits unless the polynomial hierarchy collapses.

This completes the classification for all cases except the case $H=X(\theta)\otimes X(\theta)$. Hardness of sampling from these Hamiltonians was previously shown by Fefferman, Foss-Feig, and Gorshkov \cite{feffermanpersonal} using a construction which embeds permanents directly in the output distributions of such Hamiltonians. Hardness then follows from the arguments of Aaronson and Arkhipov \cite{bosonsampling}. We provide a summary of their hardness result for completeness.

\subsection{Relation to prior work}

Our work is inspired by Bremner, Jozsa, and Shepherd \cite{IQP, IQPPH, shepherdIQPthesis}, who showed that  certain computations with commuting gates are hard to simulate classically unless the polynomial hierarchy collapses. 
In particular, they show hardness of simulating the gate set comprised of $HZH$, $(H \otimes H)(\text{controlled-}Z)(H \otimes H)$, and $HPH$, where $P$ is the $\pi/8$-phase gate.
Similarly, Shepherd \cite{shepherdmatroids,shepherdIQPthesis} considers the power of applying quantum Hamiltonians which are diagonal in the $X$ basis, where the Hamiltonians can be applied only for discrete amounts of time $\theta$; he describes the values of $\theta$ for which the resulting circuits are efficiently classically simulable or hard to weakly simulate (that is, to sample from the output probability distribution with a classical computer).
Our work differs from these in several ways. First, We consider Hamiltonians rather than gates, and show hardness of \emph{generic} or \emph{average-case} commuting Hamiltonians, rather than showing hardness for worst-case commuting operations. 
Furthermore, we fully classify the computational complexity of all commuting Hamiltonians, and prove a dichotomy between hardness and classical simulability.

The hardness results we obtain in this paper (as well as those in \cite{IQPPH,shepherdmatroids,shepherdIQPthesis}) are based on the difficulty of sampling the output probability distribution on all $n$ output qubits.  
A number of other works have considered the power of computations with commuting Hamiltonians, where one only considers the output distribution on a small number of output qubits.
For example, Bremner, Jozsa and Shepherd \cite{IQPPH} showed that computing the  marginal probability distributions on $O(\log(n))$ qubits of their model is in $\P$.
Ni and Van den Nest \cite{NiNest} showed that this holds for arbitrary 2-local commuting Hamiltonians, but also showed there exist 3-local commuting Hamiltonians for which this task is hard. 
Hence the problem of strongly simulating the output distributions  (that is, being able to compute the probability of any event) of arbitrary $k$-local Hamiltonians is hard for $k\geq 3$. 
Along a similar line of thought, Takahashi et al.\ \cite{Takahashi} showed that there is a system of 5-local commuting Hamiltonians for which weakly simulating the output on $O(\log(n))$ bits is hard.

Additionally, a number of other authors have also considered ``weak" models of quantum computation which can sample from difficult probability distributions. Some examples include the one clean qubit model \cite{Knill1998,DQC2014}, the boson sampling model \cite{bosonsampling}, the quantum fourier sampling model \cite{fefferman}, and temporally unstructured quantum computing \cite{IQPPH}. 
Like many of these models (e.g. \cite{DQC2014,IQPPH}), we prove it is difficult for a classical computer to sample from the distribution output by the quantum device with multiplicative error  on every output probability. 
For some of these models, the authors prove stronger hardness results for sampling the output distribution with additive error (as measured in trace distance) \cite{bosonsampling, IQPadditive, fefferman}, but at the cost of making additional complexity-theoretic assumptions which are not as widely accepted. 
In comparison with boson sampling, one clean qubit sampling, and quantum fourier sampling, our model has the advantage of possibly having lower fault-tolerance thresholds for implementation \cite{experiment}.

Finally, other works have addressed the classification of universal two-qubit gates and Hamiltonians. 
Childs, Leung, Man\v{c}inska, and  Ozols \cite{2qubithamiltonians} classified the set of two-qubit Hamiltonians which give rise to $SU(4)$ when acting on two qubits, and are hence universal. 
Lloyd \cite{Lloyd2qubit} and others \cite{Deutsch,Weaver,2qubithamiltonians,Bauer} have shown that a Haar-random two-qubit gate is universal with probability 1.
Our work differs from these in that our Hamiltonians only become universal under postselection. Additionally, Cubitt and Montanaro \cite{cubitt} previously classified the complexity of two-qubit Hamiltonians in the Local Hamiltonian Problem setting. Specifically, given a two qubit Hamiltonian $H$, they classify the computational complexity of determining the ground state energy of Hamiltonians of the form $\sum_{ij} c_{ij}H_{ij}$ for real coefficients $c_{ij}$. This is incomparable with our classification, since we are studying the power of the Hamiltonian dynamics (in which the system is not in the ground state), rather than the complexity of their ground states.

\section{Preliminaries and statement of Main Theorem}

A two-qubit Hamiltonian $H$ is a $4\times 4$ Hermitian matrix. 
Let $T$ denote the SWAP gate. 
Given $H$, we assume that one can apply either $H$ or $THT$ to any pair of qubits. 
In other words, we can apply the Hamiltonian oriented from qubit $i$ to qubit $j$, or from qubit $j$ to qubit $i$. 
We will use $H_{ij}$ to denote the Hamiltonian applied from qubit $i$ to qubit $j$. 
Additionally, we will assume we can apply $-H$ as well, i.e., we can perform the inverse Hamiltonian.\footnote{If we had only assumed access to $H$ and positive time evolution, we could always approximate the action of $-H$; this follows from compactness of the unitary group and was shown e.g.\ in Appendix A of \cite{2qubithamiltonians}. However, here we are assuming we have exact access to $-H$; this will be useful when arguing about post-selected versions of these circuits.}

Suppose we are given some input string $x\in\{0,1\}^n$, and we want to define a distribution on $n'=\poly(n)$ bits which we can efficiently sample from using $H$. 
Suppose we initialize a system of $n'$ qubits in a computational basis state $\ket{y}$ for $y\in\{0,1\}^{n'}$, apply each Hamiltonian $H_{ij}$ for time $t_{ij}\in\mathbb{R}$, and then measure all the qubits in the computational basis.
(Here the times $t_{ij}$ and the string $y$ may depend on $x$.) 
This will induce some probability distribution $\mathcal{D}_x$ over bit strings of length $n'$ on the output bits.
Intuitively, these are the sorts of distributions one can efficiently sample from using $H$, using circuits which start and end in the computational basis.

However, this definition does not quite suffice to capture a realistic model of computation, because we have not specified how the initial state $y$ and the times $t_{ij}$ are chosen. 
To fix this, we will require that one could use a classical computer to efficiently calculate the experimental setup for each $n$. In other words, we will require that there exists a polynomial-time algorithm which, given $x\in \{0,1\}^*$, computes the values of $y$ and $t_{ij}$ used in the experiment. 
Furthermore, we will require that the times $t_{ij}$ can be represented with polynomially many bits, and that they are all bounded in magnitude by a polynomial in $n$.
This ensures that as the size of the system grows, the amount of time one needs to run the Hamiltonian does not grow too quickly.
In complexity theory this is called a \emph{uniformity} condition. This requirement ensures that any advantage over classical computation arising from this model comes from the power of the quantum computation performed, not the computation of the experimental setup.

This is stated more formally as follows:
\begin{definition} Let $\sampIQPH$ denote those families of probability distributions $\{\mathcal{D}_x\}$ for which there exists a classical poly-time algorithm $\mathcal{A}$ which, given an input $x\in \{0,1\}^n$, outputs the specifications for a quantum circuit using $H$ whose output distribution family is $\{\mathcal{D}_x\}$. In particular, $\mathcal{A}$ specifies a number of qubits $n'=\poly(n)$, a string $y\in\{0,1\}^{n'}$ and and a series of times $t_{ij}\in\mathbb{R}$, such that running a quantum circuit starting in the state $\ket{y}$, applying the operator $e^{it_{ij}H_{ij}}$ for each $(i,j)$, and then measuring in the computational basis will yield a sample from $\mathcal{D}_x$. Each $t_{ij}$ must be specifiable with $\poly(n)\) bits and be bounded in magnitude by a polynomial in \(n\).
\end{definition}

In short, the class $\sampIQPH$ captures the set of probability distributions one can efficiently sample from using $H$. In our work, we will show that a classical randomized algorithm cannot sample from this same set of distributions. More precisely, we say that a classical randomized algorithm ``weakly simulates" a quantum circuit if its output distribution is close to the output distribution of the quantum circuit. To derive our hardness result, we will consider classical circuits which produce every output with approximately the correct probability, up to multiplicative error: 
\begin{definition} A $\BPP$ (bounded-error probabilistic polynomial time) machine $M$ \emph{weakly simulates} a family of probability distributions $\{P_x: x\in\set{0,1}^*\}$, where $P_x$ is a distribution over $\{0,1\}^{\abs{x}}$, with multiplicative error $c\geq 1$ if, for all $y\in \{0,1\}^n$,
\[
	\frac{1}{c}\Pr[\text{$M(x)$ outputs $y$}] \leq P(x) \leq c\Pr[\text{$M$ outputs $y$}].
\]
\end{definition}

We can now more precisely state our Main theorem: that our commuting circuits cannot be weakly simulated unless the polynomial hierarchy $\PH$ collapses:

\begin{theorem}[Main Theorem] If $H$ is capable of generating entanglement from the computational basis, then $\BPP$ machines cannot weakly simulate $\sampIQPH$ with multiplicative error $c<\sqrt{2}$ unless $\PH$ collapses to the third level. \label{maintheorem}
\end{theorem}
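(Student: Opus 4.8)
The plan is to establish the contrapositive along the chain of reductions sketched in the proof-ideas section: a $\BPP$ weak simulator of $\sampIQPH$ with multiplicative error $c<\sqrt2$ would collapse $\PH$ to its third level. First I would invoke the local-diagonalization lemma (\cite{cubitt}, Lemma~33) to write $H=(U\otimes U)\,D\,(U^\dagger\otimes U^\dagger)$ with $U$ a one-qubit unitary and $D$ diagonal. This normal form is what makes the rest tractable: from it one reads off exactly which $H$ fail to create entanglement from a computational basis state (those stay in a product state throughout, are trivially simulable, and so lie outside the hypothesis), and it also isolates the single exceptional entangling family $H=X(\theta)\otimes X(\theta)$. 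For that family I would just cite the separate hardness proof, namely the permanent-embedding construction of Fefferman, Foss-Feig and Gorshkov \cite{feffermanpersonal} together with the Aaronson--Arkhipov machinery \cite{bosonsampling}.

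The core of the argument is to show that for every other entangling $H$, postselected $H$-circuits are universal for $\BQP$. I would first build, from postselection gadgets, a one-parameter family of (in general non-unitary) one-qubit maps $L(t)\colon\mathbb{C}^2\to\mathbb{C}^2$: couple a work qubit to one or more fixed ancillas via $H$, evolve for time $t$, and postselect the ancillas onto chosen computational-basis outcomes. The delicate point --- and the step I expect to be the real obstacle --- is that $L(t)$ need not be unitary and, being produced by postselection, carries no obvious way to run it backwards, so a priori $\{L(t)\}$ need not even generate a group. I would therefore exhibit the additional (and necessarily more intricate) postselection gadgets that realize the inverses $L(t)^{-1}$. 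Once both $L(t)$ and $L(t)^{-1}$ are available, they generate a subgroup $G\subseteq GL(2,\mathbb{C})$, and I would compute its Lie algebra from $\tfrac{d}{dt}L(t)\big|_{t=0}$ and a handful of conjugates obtained by composing gadgets; verifying that this Lie algebra is large enough (generically all of $\mathfrak{sl}(2,\mathbb{C})$, so that $G$ is dense in $SU(2)$ after normalization) and that it degenerates only for $H=X(\theta)\otimes X(\theta)$ shows that products of the $L(t)$ approximate every one-qubit gate.

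The rest is standard. Since all one-qubit gates together with any entangling two-qubit Hamiltonian form a universal gate set \cite{anyhamiltonianuniversal,anygateuniversal}, and the non-unitary Solovay--Kitaev theorem of Aharonov et al.\ \cite{aharonov} makes the choice of generating set irrelevant and controls the approximation error, postselected $H$-circuits simulate arbitrary polynomial-size quantum circuits, i.e.\ they contain $\BQP$ under postselection; postselecting once more yields $\ComplexityFont{PostBQP}$, which equals $\PP$ by Aaronson \cite{postbqp}. Now suppose a $\BPP$ machine $M$ weakly simulates $\sampIQPH$ with error $c<\sqrt2$. Encode a suitably amplified $\ComplexityFont{PostBQP}$ computation into $\sampIQPH$ by deferring every postselection to the final measurement, producing a ``garbage'' register whose target value still has inverse-exponential probability; conditioning both the quantum distribution and $M$'s output on that value turns the multiplicative error $c$ into $c^{2}<2$, which after amplification still keeps the accept bit on the correct side of $1/2$. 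Hence the postselected classical computation decides the $\PP$ language, so $\PP\subseteq\PostBPP$. Since $\PostBPP\subseteq\BPP^{\NP}$, which is closed under $\P$-reductions, and $\PH\subseteq\P^{\PP}$ by Toda's theorem, we conclude $\PH\subseteq\BPP^{\NP}\subseteq\Sigma_3^{p}$, i.e.\ $\PH$ collapses to its third level, exactly as in \cite{IQPPH,bosonsampling}.

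To restate the main obstacle: everything hinges on inverting the non-unitary maps $L(t)$ by fresh postselection gadgets, so that $\{L(t)\}$ becomes a group on which the Lie-theoretic argument can run, and on then computing the Lie algebra carefully enough to confirm that $X(\theta)\otimes X(\theta)$ is the unique entangling family for which one-qubit universality fails.
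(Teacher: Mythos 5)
Your proposal matches the paper's argument step for step: local diagonalization via Cubitt--Montanaro, the exceptional $X(\theta)\otimes X(\theta)$ family handled by Fefferman et al.\ with Aaronson--Arkhipov, postselection gadgets $L(t)$ with separately constructed inverse gadgets, a Lie-algebra density argument in $\mathfrak{sl}(2,\mathbb{C})$, the non-unitary Solovay--Kitaev theorem of Aharonov et al., $\PostBQP=\PP$, and the Bremner--Jozsa--Shepherd/Toda collapse. One small technical slip: you propose extracting the Lie algebra from $\tfrac{d}{dt}L(t)\big|_{t=0}$, but $L(t)$ is singular (indeed undefined) at $t=0$ because of the normalization factor $\propto 1/\sqrt{\sin 2t}$, so $L(0)$ is not the identity; the paper instead differentiates the curve $L(t)L(s)^{-1}$ at $s=t=v$ for $v$ in the admissible interval, which does pass through the identity. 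This is easily repaired once the inverse gadgets are in hand, and otherwise your outline is faithful.
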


In other words, there is a dichotomy: either computations which $H$ are efficiently classically simulable, or else they cannot be efficiently simulated unless the polynomial hierarchy collapses. As the non-collapse of the polynomial hierarchy is a widely accepted conjecture in computational complexity, this is strong evidence that $\sampIQPH$ circuits are not efficiently classically simulable.

\subsection{Complexity Preliminaries}

Before proceeding to a proof of the Main Theorem, we will introduce some of the complexity-theoretic preliminaries necessary to understand our proof. We assume the reader is familiar with the standard complexity classes such as $\P$, $\BPP$, and $\NP$, as well as oracle notation; for background we refer the reader to Arora and Barak \cite{complexity} for details. Those readers already familiar with the complexity theoretic techniques of Bremner, Jozsa, and Shepherd \cite{IQPPH} and Aaronson and Arkhipov \cite{bosonsampling} may wish to skip to the proof of the Main Theorem.

In order to reason about the computational complexity of $\sampIQPH$ distributions, we will need to introduce the idea of postselected circuits. 
A postselected quantum circuit is a circuit where one specifies the value of some measurement results ahead of time, and discards all runs of the experiment which do not obtain those measurement outcomes. 
This is not something one can realistically do in a laboratory setting, because the measurement outcomes you specify may occur extremely infrequently---in fact, they may be exponentially unlikely. 
However, postselection can help you examine the \emph{conditional probabilities} found in the output distribution of your circuit.
In particular, if you can show that those conditional probabilities can encode the answers to very difficult computational problems, then this can provide evidence against the ability to classically simulate such circuits. 
Therefore, we will now define what it means for a set of probability distributions to decide a problem under postselection. 
The basic idea is that if some of the conditional probabilities of the system encode the answer to a problem, then we say that problem can be decided by postselected versions of these probability distributions. 
We define this more formally below:

\begin{definition} 
Let $\PostIQPH$ be the set of languages $L\subseteq \{0,1\}^*$ for which there exists a family of $\sampIQPH$ circuits $\{\mathcal{D}_x\}$ and a classical poly-time algorithm which, given an input length $n$, outputs a subset $B$ of qubits and a string $z\in\{0,1\}^{|B|}$ such that 
\begin{itemize}
	\item If $x\in L$, then $\Pr[\text{ $\mathcal{D}_x$ outputs 1 on its first bit } | \text{ bits $B$ take value $z$ }] \geq 2/3$.
	\item If $x\notin L$, then $\Pr[\text{ $\mathcal{D}_x$ outputs 1 on its first bit } | \text{ bits $B$ take value $z$ }] \leq 1/3$.
\end{itemize}
\end{definition}
In other words, there exists a poly-time algorithm which outputs an experimental setup and a postselection scheme such that the conditional probabilities of $\mathcal{D}_x$ encode the answer to the problem.
In general, the choice of  constants $1/3$ and $2/3$ in the above definition might matter. 
For instance, when $\PostIQPH$ is not capable of universal classical computation, it is unclear how to take the majority vote of many repetitions to amplify the success probability. 
However, we only consider the class $\PostIQPH$ in cases where $\PostIQPH$ can perform universal classical computation, and thus the choice of constants $1/3$ and $2/3$ is arbitrary.

One can likewise define the classes $\PostBQP$ and $\PostBPP$\footnote{ $\PostBPP$ is more commonly known as $\BPPpath$.} which capture the power of postselected quantum computation and postselected randomized computation, respectively.

Finally, we introduce the polynomial hierarchy. The $i$th level of the polynomial hierarchy, denoted $\Delta_i$, is defined as follows: let $\Delta_1 = \P$, let $\Delta_2=\P^\NP$, let $\Delta_3 = \P^{\NP^\NP}$, let $\Delta_4 = \P^{\NP^{\NP^{\NP}}}$, and so on. 
Here, we write $A^B$ to refer to computations that can be performed with an $A$ machine which has been augmented with the ability to solve problems in $B$ in a single timestep. 
The polynomial hierarchy, denoted $\PH$, is defined as $\PH = \bigcup_{i\in\mathbb{N}} \Delta_i$. 
It is widely conjectured that each level of the polynomial hierarchy is distinct; in other words, $\Delta_i \subsetneq \Delta_{i+1}$ for all $i\in\mathbb{N}$. This can be seen as a generalization of the conjecture that $\P\neq\NP$.

One of the main technical tools we will use in our proof is the following lemma, which was first shown by Bremner, Jozsa, and Shepherd \cite{IQPPH}, but which we will make extensive use of in our paper:
\begin{lemma} \label{hardnesslemma}
Suppose that $\PostBQP \subseteq \PostIQPH$ for some $H$. 
Then $\BPP$ machines cannot weakly simulate $\sampIQPH$ with multiplicative error $c<\sqrt{2}$ unless $\PH$ collapses to the third level.
\end{lemma}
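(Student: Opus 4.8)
The plan is to argue the contrapositive: assuming a $\BPP$ machine $M$ weakly simulates $\sampIQPH$ with some fixed multiplicative error $c<\sqrt{2}$, I would deduce that $\PH$ collapses to its third level. The crux is to show that this assumption forces $\PostIQPH\subseteq\PostBPP$; granting that, the hypothesis $\PostBQP\subseteq\PostIQPH$ gives $\PostBQP\subseteq\PostBPP$, and the collapse then follows by assembling known results.

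To prove $\PostIQPH\subseteq\PostBPP$, fix $L\in\PostIQPH$ together with its witnessing $\sampIQPH$ family $\{\mathcal{D}_x\}$ and the poly-time computable postselection register $B$ and string $z$. Since we only consider $\PostIQPH$ in regimes where the circuits can carry out universal classical computation, the success gap can be amplified: running polynomially many independent copies, postselecting on all their registers $B$, and outputting the majority of their answer bits, we may assume $\Pr[\text{first bit }1\mid B=z]\ge 1-2^{-n}$ when $x\in L$ and $\le 2^{-n}$ when $x\notin L$. Now let $\widetilde{\mathcal{D}}_x$ be the simulator's output distribution, so $c^{-1}\mathcal{D}_x(y)\le\widetilde{\mathcal{D}}_x(y)\le c\,\mathcal{D}_x(y)$ for all $y$. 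The postselected acceptance probability is the ratio $\Pr[\text{first bit }1\wedge B=z]/\Pr[B=z]$; numerator and denominator are each sums of values $\mathcal{D}_x(y)$, hence each is reproduced by $\widetilde{\mathcal{D}}_x$ up to a factor $c$, so the ratio is reproduced up to a factor $c^2<2$. Therefore the simulated postselected acceptance probability is at least $(1-2^{-n})/c^2$ when $x\in L$ and at most $c^2\,2^{-n}$ when $x\notin L$, which for all sufficiently large $n$ gives a fixed constant separation about $1/2$. Hence ``sample from $M(x)$, discard unless $B=z$, accept iff the first output bit is $1$'', followed by a standard majority amplification, is a $\PostBPP$ algorithm for $L$ (the finitely many small $n$ being hardcoded), so $L\in\PostBPP$.

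It remains to combine $\PostBQP\subseteq\PostIQPH\subseteq\PostBPP$ with three off-the-shelf facts. By Aaronson~\cite{postbqp}, $\PostBQP=\PP$, so $\PP\subseteq\PostBPP=\BPPpath$. By Toda's theorem, $\PH\subseteq\P^{\PP}$. Finally, it is known (Han, Hemaspaandra, and Thierauf) that $\BPPpath\subseteq\P^{\NP^\NP}=\Delta_3$, and a $\P$ machine equipped with a $\Delta_3$ oracle still lies in $\Delta_3$ (each oracle query is resolved by a $\P^{\NP^\NP}$ subroutine, and polynomially many such queries inline into one). Chaining the inclusions, $\PH\subseteq\P^{\PP}\subseteq\P^{\BPPpath}\subseteq\P^{\Delta_3}=\Delta_3\subseteq\PH$, so $\PH=\Delta_3$; that is, $\PH$ collapses to its third level, contrary to hypothesis. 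This proves the lemma.

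The step I expect to be the main obstacle is controlling the error when passing to conditional probabilities: the multiplicative slack on the postselected acceptance probability is $c^2$, not $c$, which is exactly why the threshold $c<\sqrt{2}$ is forced and why one must first amplify the $\PostIQPH$ protocol so that an exponentially small error survives the $c^2$ blow-up. One must check that this amplification is genuinely available in the model (it is, since the circuits can perform the required classical majority vote) and that the error does not degrade further even when $\Pr[B=z]$ is exponentially small (it does not, since numerator and denominator are each perturbed multiplicatively by at most $c$). Everything past that point is routine bookkeeping with known complexity-class inclusions.
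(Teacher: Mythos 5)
Your proof is correct and follows the same route as the paper's, which itself just summarizes Theorem 2 and Corollary 1 of Bremner--Jozsa--Shepherd: show the hypothetical simulator forces $\PostIQPH \subseteq \PostBPP$, combine with $\PostBQP\subseteq\PostIQPH$ and $\PostBQP=\PP$ to get $\PP\subseteq\PostBPP$, then chain with Toda's theorem and $\PostBPP\subseteq\Delta_3$. The one added value of your write-up is that you actually supply the computation the paper elides, and in particular you correctly locate where $c<\sqrt{2}$ enters: after amplifying the quantum protocol so its conditional acceptance is $\ge 1-2^{-n}$ or $\le 2^{-n}$, the simulator reproduces the conditional probability (a ratio of two sums, each distorted by at most $c$) only up to a factor $c^2$, so the yes-probability degrades to roughly $1/c^2$, which must still exceed $1/2$ to feed a majority-vote $\PostBPP$ algorithm. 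One point you should be slightly more careful about is the amplification step itself: justifying that the majority-vote gadget can be appended to the parallel copies inside the $\sampIQPH$ model relies on the hypothesis $\PostBQP\subseteq\PostIQPH$ (hence $\P\subseteq\PostIQPH$) giving postselected universality, which is the same assumption the paper makes explicit when it says the constants $1/3,2/3$ are arbitrary; but naming it as such would tighten the argument.
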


In other words, if postselected commuting Hamiltonian circuits are capable of performing (postselected) universal quantum computation, then they cannot be weakly simulated by a classical computer under plausible complexity assumptions. 
The fundamental reason this is true is that the class $\PostBQP$ is substantially more powerful than the class $\PostBPP$. 
In particular, Aaronson \cite{postbqp} showed that $\PostBQP=\PP$. Here $\PP$ (which stands for Probabilistic Polynomial-time) is the set of languages $L$ decidable by a poly-time randomized algorithm $M$, such that 
\begin{itemize}
	\item If $x\in L$, then $\Pr[M(x) \text{ accepts}] > 1/2$;
	\item otherwise, $ \Pr[M(x) \text{ accepts}] \leq 1/2$.
\end{itemize}
In other words, the class $\PP$ represents the class of problems solvable by randomized algorithms, where the probability of acceptance of ``yes" and ``no" instances is different, but may only differ by an exponentially small amount\footnote{Note the difference is probabilities is always at least $2^{-\text{poly}(n)}$, because a $\PP$ algorithm can only make polynomially many coin flips.}. 
A famous result in complexity, known as Toda's Theorem \cite{Toda}, states that $\PH \subseteq \P^\PP$. 
In other words, the class $\PP$ is nearly as powerful as the entire polynomial hierarchy.

On the other hand, the class $\PostBPP$ is far weaker; it lies in the third level of the polynomial hierarchy.
So if one assumes that $\PH$ does not collapse to the third level, then $\PostBPP \neq \PostBQP$; i.e. $\PostBQP$ is a stronger complexity class than $\PostBPP$.

From this, we can now state why the inclusion $\PostBQP \subseteq \PostIQPH$ implies there cannot exist an algorithm to simulate $\PostIQPH$ circuits. Suppose there were a $\BPP$ algorithm to weakly simulate such circuits. Then, by postselecting this $\BPP$ algorithm, we could solve a $\PostBQP$-hard problem in $\PostBPP$, which would imply the collapse of the polynomial hierarchy. A more formal statement of this proof is given below:

\begin{proof}[Proof of Lemma \ref{hardnesslemma}] The proof of this corollary is given in \cite{IQPPH} Theorem 2 and Corollary 1, but we provide a summary for completeness. 
Suppose that a $\BPP$ machine $M$ can weakly simulate $\sampIQPH$ circuits to multiplicative error $c<\sqrt{2}$. 
Then for any individual output string $x$, we have  $\frac{1}{c}\Pr[M\text{ outputs }x] \leq P(x) \leq c\Pr[M\text{ outputs } x]$. Since $\PostBQP \subseteq \PostIQPH$, and $\PostBQP=\PP$ \cite{postbqp}, this can be shown to imply $\PP \subseteq \PostBPP$.
But $\PostBPP\subseteq \PostBQP = \PP$, so this implies $\PostBPP=\PP$.
Hence by Toda's theorem \cite{Toda}, we have $\PH \subseteq \P^\PP= \P^\PostBPP \subseteq \Delta_3$, where $\Delta_3$ is the third level of the polynomial hierarchy. Hence $\PH=\Delta_3$ as claimed.
\end{proof}

Note that in certain cases, Fujii et al. \cite{Fujii2014} showed that this hardness of simulation result could be improved to imply the collapse of $\PH$ to the second level rather than the third, using a different complexity-theoretic argument involving the class $\NQP$. 
However, their argument is gate-set dependent, so does not apply to our model for arbitrary commuting Hamiltonians.

We now proceed to a proof of the Main Theorem.

\section{Proof of Main Theorem}

The basic idea is to use postselection gadgets to show that postselected $\sampIQPH$ circuits are capable of performing universal quantum computation. 
Hence, adding further postselections allows one to decide any language in $\PostBQP$. 
By Lemma \ref{hardnesslemma}, this proves hardness of weakly simulating such circuits unless $\PH$ collapses.

\begin{proof}[Proof of Theorem \ref{maintheorem}]

Suppose we have a commuting two-qubit Hamiltonian $H$. The first step in our proof is to characterize the structure of such $H$.
It is clear that if $H$ is diagonal under a local change of basis, i.e. $H=(U\otimes U) D (U^\dagger  \otimes U^{\dagger})$ for some one-qubit $U\in SU(2)$ and diagonal matrix $D$, then $H$ is commuting. 
However, it is possible \emph{a priori} that there exist commuting Hamiltonians which are not of this form. 
If $T$ is the gate that swaps two qubits, then the fact that $H$ is commuting implies that $H\otimes I$, $(THT)\otimes I$, $I\otimes H$, and $I\otimes (THT)$ are all simultaneously diagonalizable. However, it might be that this simultaneous diagonalization can only happen under a non-local change of basis.
Fortunately, it turns out this is not possible - any commuting Hamiltonian must be locally diagonalizable.
This was first shown by Cubitt and Montanaro \cite{cubitt}. 
\begin{clm}[\cite{cubitt} Appendix B Lemma 33]\label{clm:localdiag} If $H$ is a 2-local commuting Hamiltonian, then $H=(U\otimes U) D (U^\dagger \otimes U^\dagger)$ for some one-qubit $U\in SU(2)$ and diagonal matrix $D$.
\end{clm}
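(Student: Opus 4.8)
The plan is to carry out the strategy hinted at above: expand $H$ in the Pauli basis and convert the commutation hypothesis into rigid linear constraints on the Pauli coefficients. Write
\[
  H \;=\; c_0\, I\otimes I \;+\; \vec a\cdot\vec\sigma\otimes I \;+\; I\otimes \vec b\cdot\vec\sigma \;+\; \sum_{j,k=1}^{3} M_{jk}\,\sigma_j\otimes\sigma_k ,
\]
where $\vec\sigma=(\sigma_1,\sigma_2,\sigma_3)=(X,Y,Z)$; Hermiticity of $H$ makes $c_0\in\mathbb R$, $\vec a,\vec b\in\mathbb R^3$ and $M\in\mathbb R^{3\times 3}$. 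The scalar term $c_0\, I\otimes I$ affects neither the hypothesis nor the conclusion, so I discard it. Recall that conjugating by $U\otimes U$ with $U\in SU(2)$ acts on this expansion through the rotation $R\in SO(3)$ associated to $U$ by the double cover $SU(2)\to SO(3)$: it sends $\vec a\mapsto R\vec a$, $\vec b\mapsto R\vec b$ and $M\mapsto R\,M\,R^{T}$. Since a two-qubit operator is diagonal exactly when its only nonzero Pauli components lie among $I\otimes I$, $Z\otimes I$, $I\otimes Z$, $Z\otimes Z$, it suffices to prove the structural statement that $\vec a$ and $\vec b$ are parallel and $M=\mu\,\hat n\hat n^{T}$ for some scalar $\mu$ and some unit vector $\hat n$ parallel to whichever of $\vec a,\vec b$ is nonzero; any rotation sending $\hat n$ to $\hat z$ then lifts to a $U\in SU(2)$ that diagonalizes $H$ under local conjugation, giving the claimed decomposition.

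To obtain the structural statement I would expand the commutation condition $[H\otimes I,\ I\otimes H]=0$ on three qubits. Writing $\sigma_j^{(m)}$ for a Pauli on qubit $m$, this condition is $[H_{12},H_{23}]=0$ with $H_{12}=\vec a^{(1)}+\vec b^{(2)}+\sum_{j,k}M_{jk}\sigma_j^{(1)}\sigma_k^{(2)}$ and $H_{23}=\vec a^{(2)}+\vec b^{(3)}+\sum_{j,k}M_{jk}\sigma_j^{(2)}\sigma_k^{(3)}$, and expanding it with $[\sigma_j,\sigma_k]=2i\,\epsilon_{jkl}\sigma_l$, every surviving term is a Pauli string supported either on qubit $2$ alone, on qubits $\{1,2\}$, on qubits $\{2,3\}$, or on all three qubits. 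These four families consist of distinct (hence linearly independent) Pauli strings, so the single operator equation splits into four. Matching coefficients yields: $(i)$ $\vec a\times\vec b=0$ from the qubit-$2$ part $2i(\vec b\times\vec a)\cdot\vec\sigma$; $(ii)$ $\vec b\times(\text{column }l\text{ of }M)=0$ for all $l$ from the qubit-$\{2,3\}$ part; $(iii)$ $(\text{row }j\text{ of }M)\times\vec a=0$ for all $j$ from the qubit-$\{1,2\}$ part; and $(iv)$ $(\text{row }j\text{ of }M)\times(\text{column }l\text{ of }M)=0$ for all $j,l$ from the triple-support part $\sum_{k,l}M_{jk}\,\epsilon_{kln}\,M_{lm}=0$. (Notably, this single commutation condition already suffices; the other two conditions from the definition of ``commuting'' are not needed.)

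The structural statement now follows by a short case analysis on which of $\vec a,\vec b,M$ vanish. If $M=0$, then $(i)$ gives $\vec a\parallel\vec b$ and we are done. If $M\neq 0$, then $(iv)$ forces all rows of $M$ to be mutually parallel and all columns mutually parallel (two non-parallel rows would force every nonzero column to be parallel to both, hence zero, contradicting $M\neq 0$), so $M=\vec w\,\vec q^{T}$ for some nonzero $\vec w,\vec q$; applying $(iv)$ once more to row $j=w_j\vec q$ and column $m=q_m\vec w$ gives $\vec q\times\vec w=0$, so $M=\mu\,\hat n\hat n^{T}$ with $\hat n$ a unit vector along $\vec w$. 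Then $(ii)$ forces $\hat n\parallel\vec b$ whenever $\vec b\neq 0$, $(iii)$ forces $\hat n\parallel\vec a$ whenever $\vec a\neq 0$, and $(i)$ keeps $\vec a\parallel\vec b$ when both are nonzero, which is exactly the desired form.

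I expect the only genuine obstacle to be bookkeeping: carefully expanding the three-qubit commutator, keeping the four families of Pauli strings and the $\epsilon$-index (cross-product) manipulations straight, and checking that all degenerate subcases ($\vec a$ and/or $\vec b$ zero, or $M=0$) are covered. There is no conceptual difficulty beyond this; the rigidity of the Pauli commutation relations forces the locally-diagonal form and rules out the \emph{a priori} possibility of a simultaneous diagonalization achievable only under a nonlocal change of basis. This is the argument of Cubitt and Montanaro \cite{cubitt}, to whose Lemma~33 I would defer for the remaining details.
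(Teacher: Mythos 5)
Your proof is correct and takes essentially the same approach as the paper's: both expand $H$ in the Pauli basis, use only the single commutation condition $[H\otimes I, I\otimes H]=0$, and extract cross-product constraints ($\vec r_A\times\vec c_B=0$ in the paper's notation, your conditions (i)--(iv)) to conclude that the $3\times 3$ block is a symmetric rank-one matrix aligned with the local vectors. The only difference is organizational---you group the surviving three-qubit Pauli strings by support, whereas the paper fixes the outer tensor factors $A\otimes\underline{\phantom{A}}\otimes B$ and reads off the middle factor---and your case analysis for $M=0$ is in fact slightly more explicit than the paper's.
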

We provide a proof of Claim \ref{clm:localdiag} in Appendix \ref{app:localdiag}, which uses expansion in the Pauli basis. 
One can also prove this fact using linear algebra, but the proof becomes  complicated in the case of degenerate eigenvalues. 
We thank Jacob Taylor for pointing us to this simplified proof, and Ashley Montanaro for pointing us to the proof in reference \cite{cubitt}.

By Claim \ref{clm:localdiag}, we know that $H = (U\otimes U) \diag (a,b,c,d) (U^\dagger \otimes U^\dagger)$ for some one-qubit unitary $U=\left(\begin{smallmatrix} \alpha & -\beta^* \\ \beta & \alpha^*\end{smallmatrix}\right)$ and some real parameters $a,b,c,d$.
The trace of $H$ contributes an irrelevant global phase to the unitary operator it generates, so without loss of generality we can assume $H$ is traceless, i.e., $a+b+c+d=0$.

Note that if $a=d=-1$, $b=c=1$, and $|\alpha|=|\beta|$, then we have that $H=X(\theta)\otimes X(\theta)$, where $e^{i\theta}=\alpha/\beta$. As mentioned previously, these Hamiltonians are hard to simulate by an independent hardness result of Fefferman \emph{et al.} \cite{feffermanpersonal}, so in the rest of our proof, we will assume we are not in the case $a=d=-1$, $b=c=1$ and $|\alpha|=|\beta|$.  For completeness we will provide a summary of their work at the end of this proof.

We now consider the conditions under which computations with $H$ are efficiently classically simulable. First, if $H$ is diagonal in the computational basis, then it is obviously classically simulable, because it cannot generate entanglement from the computational basis. This corresponds to the case that $\alpha=0$ or $\beta=0$. So we can assume for the result of the proof that $\alpha\neq 0$ and $\beta \neq 0$.

Another way that $H$ can fail to generate entanglement from the computational basis is if $b+c=a+d$. Since we are assuming the Hamiltonian is traceless this is equivalent to the condition $b+c=0$. 
Indeed if $H$ satisfies $b+c=0$, and $H$ is traceless so $a+d=0$, then it is easy to check that $e^{iHt}$ is nonentangling for all $t\in\mathbb{R}$. So we can assume in the rest of the proof that $b+c\neq 0$. 

We now show that for all remaining $H$, we have $\PostBQP\subseteq \PostIQPH$. 
To do so, we break into two cases. 
Either $b=c$, so $H=THT$ and the Hamiltonian is identical when applied from qubit 1 to 2 vs. from 2 to 1, or $b\neq c$ so $H\neq THT$. 
For clarity of presentation, we will prove our main theorem in the case $b=c$, as this proof uses simpler notation.
An analogous proof holds for the case $b\neq c$, which we provide in Appendix \ref{app:edgecase}.

Now in the case $b=c$, consider the rescaled Hamiltonian $H'=H/b$. Since $b+c\neq0$ and $b=c$ this Hamiltonian is well-defined, and we have $H'=(U\otimes U) \diag (a',1,1,d') (U^\dagger \otimes U^\dagger)$
for some real parameters $a'$ and $d'$ which obey $a'+d' =-2$.
Now consider the two-qubit unitary $V(t)$ we obtain from running $H'$ for time $t\in \mathbb{R}$ 
\[
V(t) = e^{itH'} = (U^{\otimes 2}) D(t) ({U^\dagger}^{\otimes 2}),
\]
where $D(t)\triangleq \diag(e^{ia't},e^{it},e^{it},e^{id't})$. Here we have used the fact that if $U$ is an arbitrary unitary, then $e^{UH'U^\dagger} = U e^{H'} U^\dagger$.

A $\sampIQPH$ circuit is specified by times $t_{ij}$ for all unordered\footnote{This is because we are considering the case $b=c$ i.e. $H=THT$.} pairs of qubits $(i,j)$, as well as an initial basis state $\ket{y}$ for $y\in\{0,1\}^{\poly(n)}$. The circuit consists of applying $ V(t_{ij})$ to each pair of qubits $(i,j)$ to $\ket{y}$, and then measuring in the computational basis. 
This can be easily seen to be equivalent to the following circuit: Start in the state $\ket{y}$, apply $U$ to every qubit, then apply $D(t_{ij})$ to each pair of qubits; finally, apply $U^\dagger$ to every qubit and measure in the computational basis. 
(This is true because all factors of $U$ and $U^\dagger$ in the circuit cancel except those at the beginning and end). 

We will now show how to make post-selected gates of this form perform universal quantum computing. 
The basic idea is that we already have a two-qubit entangling Hamiltonian at our disposal. Therefore, if we could show how to perform arbitrary one-qubit gates using post-selection, this would form a universal gate set for quantum computing by the result of Dodd et al.\ \cite{anyhamiltonianuniversal} or Bremner et al.\ \cite{anygateuniversal}. Following the method of Bremner, Jozsa, and Shepherd \cite{IQPPH}, we consider the following post-selection gadget, denoted $L(t)$, which performs an operation on a single qubit state $\ket{\psi}$:
\[\Qcircuit @C=1em @R=1em {
\lstick{\ket{\psi}} & \qw & \multigate{1}{D(t)} & \gate{U^\dagger} & \meter & \bra{0}   \\
\lstick{\ket{0}} & \gate{U} & \ghost{D(t)} & \qw & \ket{\psi'} 
}
\]
The postselection is denoted in the circuit by $\bra{0}$. 
Note that this gadget preserves the property that every line begins with $U\ket{0}$, and ends with $U^\dagger$ and a measurement. 
Hence, if we could use these postselection gadgets to perform arbitrary single-qubit gates, then we could perform universal quantum computing under postselection as follows: Given a target quantum circuit to simulate, compile the circuit out of gates of the form $D(t)$ and single-qubit gates.
Additionally, add a $UU^\dagger$ (which is the identity) at the beginning and end of every line, so that each line starts with a $U$ and ends with a $U^\dagger$. 
Now this circuit consists of applying a column of $U$'s, then a series of diagonal gates $D(t)$ and one-qubit gates, followed by a column of $U^\dagger$s. 
This almost has the form of a $\sampIQPH$ circuit, with the exception of the one-qubit gates (note that these include both the gates $U^\dagger$ in the second column and the gates $U$ in the second to last column). 
Now for each one-qubit gate $g$, replace it with its implementation using postselection gadgets $L(t)$.
After this transformation, each line begins with a $U$, ends with a $U^\dagger$, and contains only diagonal gates $D(t)$ in the interior of the circuit.
However, now we've additionally specified some postselection bits, so we have created a $\PostIQPH$ circuit which simulates universal quantum computing.

Let us examine what transformation $L(t)$ actually performs on the qubits involved.  The gadget performs some linear transformation on the input state $\ket{\psi}$. In particular, it acts on $\ket{\psi}$ by
\[L(t) = \frac{1}{|\alpha||\beta|\sqrt{-2i\sin(2t)}}  \left( \begin{matrix} |\alpha|^2 e^{ia't}& \alpha \beta^* e^{it} \\ \alpha^* \beta e^{it} & |\beta|^2 e^{id't)} \end{matrix} \right).\]
This is a non-unitary transformation, so it does not preserve the norms of vectors.
Since we only care about how \(L(t)\) behaves on the projective Hilbert space of quantum states, we can choose the overall normalization so that $L(t) \in SL(2, \mathbb{C})$. 
Note that this operator is well-defined only if the denominator above is non-zero, so we will require that $t \in (0,\pi)\cup(\pi,2\pi)$.

In addition to being able to perform the transformation $L(t)$ as 
$t$ ranges over  $t \in (0,\pi)\cup(\pi,2\pi)$, we can also perform products of such transformations.
In fact, we can perform any operation in the set
\[
S \triangleq \overline{ \left\langle \left\{ L(t) : t \in  (0,\pi)\cup(\pi,2\pi)   \right\} \right\rangle }.
\] 
Here the angled brackets $\left\langle A \right\rangle$ denote the set of all matrices obtained by finite products of elements of $A$. 
The bar above $\overline{\langle A \rangle}$ means that we take the closure of this set in $SL(2,\mathbb{C})$; in other words, we include all matrices that one can obtain by taking limits of sequences of finite products of $A$, so long as the limit point belongs to $SL(2,\mathbb{C})$.

If the matrices $L(t)$ were in a compact space such as $SU(2)$, then it would immediately follow that $S$ contains inverses of all its elements.\footnote{To see this, take an element $s\in S$. If $s$ has finite order, than its inverse is clearly in $S$. If $s$ has infinite order, consider the sequence $1, s, s^2, \ldots$. Since the matrices are in a compact space $T$, the sequence of powers must have a convergent subsequence, i.e.\ there must be positive $n_1,n_2,n_3 \ldots$ such that $n_1 < n_2 < \ldots$ and $s^{n_1},s^{n_2}, \ldots$ approach some element $t \in T$. Therefore the sequence $s^{n_2-n_1}, s^{n_3-n_2}, \ldots$ must approach the identity, and the sequence $s^{n_2-n_1-1}, s^{n_3-n_2-1}, \ldots$ must approach $s^{-1}$.}
Therefore we would know that $S$ is a group, and we could apply tools from group theory to categorize $S$. 
However, our matrices are in the non-compact space $SL(2,\mathbb{C})$. 
Therefore it is not clear whether $S$ is closed under taking inverses, so $S$ \emph{might not be a group}! 
Furthermore, since $L$ is obtained under post-selection, the assumption that we can perform the inverse of $H$ does not imply we can perform $L^{-1}$.

To fix this problem, we find additional gadgets which allow us to construct $L^{-1}$ by adding additional postselections to our circuit. In particular, we will show that for each $L(t)$, there exists a postselection gadget of finite size which performs $L(t)^{-1}$ exactly. An important restriction on this construction is that this inverse must be efficiently computable. 
Specifically, for each $L(t)$ the size of the postselection gadget required to invert $L(t)$ is of constant size.
Additionally, the construction of the postselection gadget will in general contain several time parameters which one needs to set in order to obtain $L(t)^{-1}$. We also require that we can set these times so that we obtain $L^{-1}$ to accuracy $\epsilon$ in poly$\log(k_L 1/\epsilon)$ time, where $k_L$ is a constant which depends on $L(t)$ only. 
Furthermore, the amount of time needed to run the Hamiltonians in the inverse gadget are bounded above by a polynomial.
For convenience we will refer to these properties as ``the construction is efficiently computable."

At first glance it might sound like this definition of ``efficiently computable" is too weak, because the inverses of arbitrary $L$ matrices might require large postselection gadgets. 
However, later in our construction we will use the fact that for any fixed Hamiltonian $H$, we will only need to invert a finite set of $L$ matrices.
Hence for fixed $H$, the size of the postselection gadgets which appear in our circuit will be upper bounded by a constant depending on $H$ only, but not on the size of the problem we are solving under postselection. Furthermore, for fixed $H$, we can compute the times in the inversion gadgets to invert the relevant $L$ matrices to exponential accuracy in polynomial time. This ability to invert the $L$ matrices to exponential precision will later be crucial for our hardness of sampling result.

Furthermore, note that in the case that $H\neq THT$, the construction of these gadgets can be made substantially simpler. 
In particular, the gadgets to construct $L^{-1}(t)$ are of size 4 for any $t$, and the times used in running the Hamiltonians are trivially efficiently computable to polynomial digits of accuracy. 
From a practical experimental perspective these circuits would be easier to construct, and since $H\neq THT$ is the generic case for commuting Hamiltonians, would be applicable for almost all commuting Hamiltonians. We include this construction in Appendix \ref{app:edgecase}.

\begin{clm} \label{clm:inverses} For any given $L(t)$, where $t\in (0,\pi)\cup(\pi,2\pi)$, it is possible to construct $L(t)^{-1}$ by introducing a constant number of postselections and a constant number of ancillas into the circuit. Furthermore, this construction is efficiently computable in the manner described above.
\end{clm}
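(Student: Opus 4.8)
The plan is to work directly in $SL(2,\mathbb{C})$ and exploit the explicit form of $L(t)$. Conjugating by the fixed unitary $U$, it suffices to invert the matrix
\[
M(t) \;=\; \frac{1}{|\alpha||\beta|\sqrt{-2i\sin 2t}}\,\begin{pmatrix} |\alpha|^2 e^{ia't} & \alpha\beta^* e^{it} \\ \alpha^*\beta e^{it} & |\beta|^2 e^{id't}\end{pmatrix},
\]
which, after absorbing the diagonal phase matrix $\diag(\alpha/|\alpha|,\beta/|\beta|)$ on each side, is conjugate to a matrix of the shape $\frac{1}{\sqrt{-2i\sin 2t}}\bigl(\begin{smallmatrix} |\alpha|^2 e^{ia't} & |\alpha||\beta| e^{it} \\ |\alpha||\beta| e^{it} & |\beta|^2 e^{id't}\end{smallmatrix}\bigr)$. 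The first step is therefore to record that inverting $L(t)$ reduces to producing, from the available gadgets, the inverse of this concrete $SL(2,\mathbb{C})$ matrix up to the fixed conjugation. Since $L(t)^{-1}$ again has unit determinant, and $SL(2,\mathbb{C})$ matrices can be written in terms of at most a bounded number of ``elementary'' building blocks, the goal is to realize $L(t)^{-1}$ as a constant-length word in the $L(s)$'s together with the diagonal gates $D(s)$ (which are freely available and act as genuine — invertible — one-qubit phase gates via the identity, since $D(s)D(-s)=I$) and the postselection-free single-qubit operations we already have.

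Next I would identify a small sub-family of the $L(t)$ that is easy to understand. Concretely, for special values of $t$ the denominator's phase becomes controllable and $L(t)$ becomes (up to scalar) a diagonal or anti-diagonal matrix times a fixed one; composing two or three such special gadgets, together with $D(s)$ gates, should already generate a rich enough subgroup — e.g. all of the diagonal torus and one off-diagonal involution — so that $L(t)^{-1}$ for a \emph{general} $t$ can be expressed by sandwiching $L(t)$ between elements of that subgroup. The key algebraic fact to verify is that $L(t)$ and $L(t)^{-1}$ lie in the same double coset of this easily-generated subgroup, which is a finite linear-algebra computation using the explicit entries above; the determinant-one normalization is what makes this work cleanly. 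Once the word is written down, its length is manifestly a fixed constant independent of $t$, and the only free parameters are the times $s$ appearing in the auxiliary $L(s)$ and $D(s)$ gates.

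Then I would check the ``efficiently computable'' bookkeeping: the auxiliary times $s$ are given by explicit closed-form expressions (arctangents and logarithms of ratios built from $|\alpha|,|\beta|,a',d',t$), so each can be approximated to accuracy $\epsilon$ in $\mathrm{poly}\log(k_L/\epsilon)$ time for a constant $k_L$ depending only on $H$ and $t$, and all of them can be taken bounded by a constant (wrapping angles mod $2\pi$), hence certainly by a polynomial; the number of ancillas and postselections is the word length, a constant. The main obstacle I anticipate is the middle step: ensuring that the subgroup we can generate ``for free'' (from special-parameter $L$'s and the $D$'s) is actually large enough to conjugate an arbitrary $L(t)$ to its inverse without ever needing an inverse we don't yet have — there is a genuine danger of circularity here, since $L(s)$ gadgets are themselves non-invertible \emph{a priori}. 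Resolving this requires carefully choosing the special parameter values so that the corresponding $L(s)$ have \emph{finite order} (so their inverses are trivially powers of themselves) or are conjugate into the compact group $SU(2)$ where the closure argument in the footnote applies; pinning down such values, and verifying they suffice, is where the real work lies, and it is also the point at which the excluded case $H = X(\theta)\otimes X(\theta)$ and the degenerate cases $\alpha\beta=0$, $b+c=0$ must be invoked to guarantee enough genericity in the entries of $L(t)$.
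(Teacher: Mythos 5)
Your proposal takes a genuinely different route from the paper, but I believe it has gaps that would prevent it from succeeding as stated.

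The paper's key move is to introduce \emph{new} postselection gadgets rather than work inside the family $\{L(t)\}$. Specifically, it defines $M(t)$ by initializing the ancilla in $\ket{1}$ (instead of $\ket{0}$) and postselecting on $\ket{1}$ (instead of $\ket{0}$). This produces a matrix whose off-diagonal entries are \emph{exactly} those of $L(t)^{-1}$ and whose diagonal entries have the correct moduli $|\beta|^2$, $|\alpha|^2$ (note: swapped relative to $L(t)$) but the wrong phases ($e^{ia't}$ where $e^{id't}$ should be and vice versa). A second gadget $N(t)$, which postselects the $U^\dagger$-ended line on $\ket{1}$ while keeping the data qubit, produces genuine one-qubit diagonal matrices; compositions of these are then used to fix the two diagonal phases, giving $L(t)^{-1}$ as a phase-correction sandwich around $M(t)$.

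Your plan runs into trouble at several points. First, the claimed ``easy sub-family'' where $L(t)$ becomes diagonal or anti-diagonal (possibly up to a fixed matrix) does not exist: the four entries of $L(t)$ are $|\alpha|^2 e^{ia't}$, $\alpha\beta^* e^{it}$, $\alpha^*\beta e^{it}$, $|\beta|^2 e^{id't}$, and for $\alpha\beta\neq 0$ and $t\in(0,\pi)\cup(\pi,2\pi)$ all four are nonzero for every $t$, so the shape of $L(t)$ never degenerates and varying $t$ only changes phases, not which entries vanish. Second, the claim that $D(s)$ is ``freely available'' as an invertible \emph{one-qubit} phase gate does not hold in this model: $D(s)$ is a diagonal two-qubit gate, and in an $\sampIQPH$ circuit the partner qubit is necessarily in state $U\ket{0}$ or $U\ket{1}$, not $\ket{0}$ or $\ket{1}$, so $D(s)$ entangles the pair; extracting a one-qubit diagonal action requires exactly the $N(t)$-type postselection gadget you were hoping to avoid. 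Third, the double-coset assertion $L(t)^{-1}\in G L(t) G$ for a cheaply-generated subgroup $G$ is unverified, and your proposed resolution of the circularity (choosing special $s$ so that $L(s)$ has finite order or is conjugate into $SU(2)$) is likewise unverified and has no obvious reason to hold for arbitrary $a',\alpha,\beta$. The central missing idea is the swap-initialization/postselection gadget $M(t)$; without it, you are stuck inside the semigroup generated by the $L(s)$'s (and whatever diagonal operations you can extract), and it is not clear that semigroup ever contains $L(t)^{-1}$. Once you have $M(t)$, the problem reduces to the explicit diagonal-phase correction that Appendix~\ref{app:inverses} carries out, including its case analysis on $a'\in\{\pm1,-3\}$.
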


The proof of Claim \ref{clm:inverses} can be found in Appendix \ref{app:inverses}, and is somewhat involved.

We now redefine $S$ so that its base set contains these inverses:
\[
S \triangleq  \overline{ \left\langle\left\{ L(t) : t\in (0,\pi)\cup(\pi,2\pi)  \right\} \cup \left\{ L(t)^{-1} :t \in  (0,\pi)\cup(\pi,2\pi) \right \}\right\rangle }.
\]
Using this definition, we can now show using standard techniques that $S$ is a Lie group---this is essentially a consequence of Cartan's closed subgroup theorem \cite{Cartan1952} and the fact that inversion is a continuous operation in the matrix entries on $SL(2,\mathbb{C})$. 
Once we know that $S$ has the structure of a Lie group, we can apply the theory of Lie algebras to identify what set of matrices are in $S$. In particular, we can show that $S$ generates all of $SL(2,\mathbb{C})$.

\begin{clm}\label{density}  $S=SL(2,\mathbb{C})$.
\end{clm}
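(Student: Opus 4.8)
The plan is to leverage what has just been established: $S$ is a closed subgroup of the connected group $SL(2,\mathbb{C})$, hence a Lie group with some Lie algebra $\mathfrak{s}\subseteq\mathfrak{sl}(2,\mathbb{C})$, and since $SL(2,\mathbb{C})$ is connected it suffices to prove $\mathfrak{s}=\mathfrak{sl}(2,\mathbb{C})$. We will use the standard fact that every proper real Lie subalgebra of $\mathfrak{sl}(2,\mathbb{C})$ has real dimension at most $4$; consequently it is enough to exhibit inside $\mathfrak{s}$ a linear subspace that is \emph{not} closed under the Lie bracket, since the subalgebra it generates is then all of $\mathfrak{sl}(2,\mathbb{C})$.

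To produce elements of $\mathfrak{s}$, note that for any two parameters $t_0,t\in(0,\pi)\cup(\pi,2\pi)$ the curve $t\mapsto L(t_0)^{-1}L(t)$ lies in $S$ and equals the identity at $t=t_0$; because $S$ is closed, its tangent vector there, $X_{t_0}:=L(t_0)^{-1}L'(t_0)$, lies in $\mathfrak{s}$. Since $\mathfrak{s}$ is a closed linear subspace and $t\mapsto X_t$ is real-analytic, every $t$-derivative of $X_t$ also lies in $\mathfrak{s}$; moreover $\mathfrak{s}$ is closed under the Lie bracket and under $\mathrm{Ad}(g)$ for every $g\in S$, in particular under $\mathrm{Ad}(L(s))$. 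A direct computation from $D(t)=\diag(e^{ia't},e^{it},e^{it},e^{id't})$, using $a'+d'=-2$, gives that, up to a nonzero real scalar,
\[
X_t\ \propto\ \begin{pmatrix} -\mu\, e^{-2it} & \tfrac{\beta^{*}}{\alpha^{*}}(-\mu-2)\,e^{-i\mu t}\\[0.8ex] \tfrac{\alpha^{*}}{\beta^{*}}(\mu-2)\,e^{i\mu t} & \mu\, e^{-2it}\end{pmatrix},\qquad \mu:=a'+1,
\]
where we have already used $\alpha,\beta\neq 0$.

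The proof then splits according to $\mu$. In the generic case $\mu\notin\{0,2,-2\}$ the three ``frequencies'' $-2,-\mu,\mu$ appearing above are distinct and nonzero, so as $t$ ranges over $(0,\pi)\cup(\pi,2\pi)$ the linear independence of $\cos 2t,\sin 2t,\cos\mu t,\sin\mu t$ shows that $\mathrm{span}_{\mathbb{R}}\{X_t\}$ is a concrete $4$-dimensional real subspace, which a short calculation shows is not closed under the bracket; hence $\mathfrak{s}=\mathfrak{sl}(2,\mathbb{C})$. If $\mu\in\{2,-2\}$ (so $\{a',d'\}=\{1,-3\}$) the matrix $X_t$ is triangular and equals a fixed regular semisimple matrix $Y_0$ times a scalar that sweeps out a full complex line, so $\mathfrak{s}$ contains the Cartan subalgebra $\mathbb{C}Y_0$; one checks that $L(s)$ does not normalize the torus of $Y_0$, so $\mathrm{Ad}(L(s))Y_0$ is a Cartan element not proportional to $Y_0$ and sharing no eigenvector with it, and the span of two such Cartan subalgebras is a $4$-dimensional non-subalgebra, again forcing $\mathfrak{s}=\mathfrak{sl}(2,\mathbb{C})$. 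Finally, if $\mu=0$ (that is, $a'=d'=-1$), the hypothesis that $H$ is not $X(\theta)\otimes X(\theta)$ is exactly the statement $|\alpha|\neq|\beta|$; here $X_t$ is a \emph{real} multiple of a single anti-diagonal matrix $Z$, so $\mathbb{R}Z\subseteq\mathfrak{s}$, and $|\alpha|\neq|\beta|$ guarantees that $Z':=\mathrm{Ad}(L(s))Z\in\mathfrak{s}$ is not proportional to $Z$; then $H\propto[Z,Z']\in\mathfrak{s}$, and $\mathrm{Ad}(L(s))H$ supplies an element of $\mathfrak{s}$ with a nonzero component along $iH$. The resulting $4$-dimensional subspace of $\mathfrak{s}$ is again not a subalgebra, so $\mathfrak{s}=\mathfrak{sl}(2,\mathbb{C})$, completing the argument for the case $b=c$ (the case $b\neq c$ is handled analogously in Appendix~\ref{app:edgecase}).

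The main obstacle is the degenerate family $\mu\in\{0,\pm 2\}$, and especially $\mu=0$: there the gadget curve $\{L(t)\}$ alone sweeps out only a one- or two-dimensional slice of $\mathfrak{s}$, so the whole argument rests on showing that conjugating by the \emph{other} gadgets $L(s)$ genuinely enlarges $\mathfrak{s}$ and, in particular, that $S$ is not trapped inside a conjugate of $SL(2,\mathbb{R})$ (nor inside a Borel subgroup nor the normalizer of a torus). This is precisely the point at which the exclusion of the exceptional Hamiltonian $X(\theta)\otimes X(\theta)$ (equivalently $|\alpha|\neq|\beta|$ when $a'=d'=-1$) is indispensable, as is the fact---provided by Claim~\ref{clm:inverses}---that $S$ contains the inverses $L(t)^{-1}$, since only then is $S$ a genuine Lie group to which Cartan's closed subgroup theorem and the subalgebra classification above apply.
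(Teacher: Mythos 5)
Your overall strategy is the same as the paper's: invoke Cartan's closed subgroup theorem to make $S$ a Lie group, compute tangent vectors at the identity from the one-parameter family of $L$'s, and do a case analysis on $\mu=a'+1$. In the generic case $\mu\notin\{0,\pm 2\}$ and in the case $\mu=0$ your argument is essentially the paper's (the paper's Cases 1 and 3). The significant divergence is in the case $\mu\in\{\pm 2\}$ (that is $a'\in\{1,-3\}$): the paper introduces an \emph{additional} postselection gadget $P(t)\propto\diag(e^{2it},e^{-2it})$, uses it to enlarge the span before taking commutators, and then conjugates; you instead try to get by with $\mathrm{Ad}(L(s))$-conjugation of the single Cartan direction $\mathbb{C}Y_0$ alone.

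There is a genuine logical gap at exactly this point. You write ``one checks that $L(s)$ does not normalize the torus of $Y_0$, so $\mathrm{Ad}(L(s))Y_0$ is a Cartan element not proportional to $Y_0$ and sharing no eigenvector with it.'' That ``so'' does not follow: failing to normalize the torus rules out proportionality, but it does \emph{not} rule out $L(s)$ mapping one eigenline of $Y_0$ to an eigenline of $Y_0$, which is exactly the case in which the two Cartan subalgebras sit inside a common Borel and their four-dimensional real span \emph{is} a subalgebra, collapsing the argument. What you actually need is a direct verification that $Y_0$ and $\mathrm{Ad}(L(s))Y_0$ share no eigenvector (equivalently, that $L(s)$ sends neither eigenline of $Y_0$ to an eigenline of $Y_0$), and this is where $\alpha,\beta\neq 0$ must be used. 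The claim does hold --- for $a'=1$, $Y_0$ is upper triangular, $L(s)e_1\propto\bigl(\begin{smallmatrix}|\alpha|^2\\\alpha^*\beta\end{smallmatrix}\bigr)$ is not an eigenvector of $Y_0$, and the image of the other eigenvector of $Y_0$ is proportional to $e_2$ which is also not an eigenvector --- but the check is neither stated nor performed in your proposal. You also lean on the fact that every proper real subalgebra of $\mathfrak{sl}(2,\mathbb{C})$ has real dimension at most $4$; this is true ($\mathfrak{sl}(2,\mathbb{C})_{\mathbb{R}}\cong\mathfrak{so}(3,1)$, whose maximal proper subalgebras have dimension $\le 4$), but it is a heavier classification result than the paper uses, which gets by with explicit spanning and bracket computations, and at minimum you should cite or justify it. Finally, a minor point: the displayed formula for $X_t$ has a sign inconsistency (the factors $-\mu$, $-\mu-2$, and $\mu-2$ are not all off by the same overall sign from the correct entries), though this does not affect the structure of the argument.
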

The proof of this claim is a tedious but straightforward calculation using Lie algebras and properties of the exponential map on $SL(2,\mathbb{C})$. 
The proof uses the fact that we are not in one of the cases excluded by our theorem (i.e. $H$ does generate entanglement and is not $X(\theta)\otimes X(\theta)$ for some $\theta$) - in these cases one does \emph{not} find that $S=SL(2,\mathbb{C})$ as one would expect.
In certain special cases, the gadgets $L(t)$ alone do not generate $SL(2,\mathbb{C})$, specifically when $a'=\pm1$ or $a'=-3$. In these cases, we show that one can add additional postselection gadgets, which are closed under taking inverses, which boost the power of the $L(t)$ transformations to cover all of $SL(2,\mathbb{C})$. This simply reflects that for very particular Hamiltonians, our $L$ matrices need additional help to span all $1$-qubit operations. We include the proof in Appendix \ref{app:density}.

Now that we have shown density in $SL(2,\mathbb{C})$, as well as the fact that we can produce inverses of the gates in our generating set, our proof of yielding $\PP$ under postselection follows almost immediately. 
In particular, we will invoke the following theorem by Aharanov, Arad, Eban and Landau \cite{aharonov}:
\begin{theorem}[\cite{aharonov} Theorem 7.6, adapted to our case]\label{thm:sk}
There exists a constant $\epsilon_0 >0$ such that, for any $G=\{g_1 \ldots g_k\} \subset SL(2,\mathbb{C})$ which is an $\epsilon_0$-net over $B$, where $B$ is the set of operations in $SL(2,\mathbb{C})$ which are $2.1$-far from the identity (which in particular contains $SU(2)$), then for any unitary $U\in SU(2,\mathbb{C})$, there is an algorithm to find an $\epsilon$-approximation to $U$ using $\polylog(1/\epsilon)$ elements of $G$ and their inverses which runs in $\polylog(1/\epsilon)$ time.
\end{theorem}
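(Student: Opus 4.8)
The statement is the Solovay--Kitaev theorem transported from the compact group $SU(2)$ to $SL(2,\mathbb{C})$ (the adaptation being carried out in~\cite{aharonov}); the plan is to imitate the classical recursive argument while confining the whole construction to the bounded --- in fact compact --- region $B \subset SL(2,\mathbb{C})$ of operators within distance $2.1$ of the identity, which contains $SU(2)$ and hence every target $U$. First I would pass to the Lie algebra $\mathfrak{sl}(2,\mathbb{C})$, where the exponential map is a local diffeomorphism near $0$ with $\|e^A - I\| = \Theta(\|A\|)$ for small $\|A\|$, and isolate the two estimates that drive everything. The first is the group-commutator bound from the Baker--Campbell--Hausdorff series, $[e^A,e^B] := e^A e^B e^{-A} e^{-B} = e^{[A,B] + O((\|A\| + \|B\|)^3)}$, which holds verbatim in the finite-dimensional Lie algebra $\mathfrak{sl}(2,\mathbb{C})$, together with the fact that $(a,b) \mapsto [a,b]$ has vanishing first derivative at the identity. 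The second is a ``square-root'' decomposition: every $C \in \mathfrak{sl}(2,\mathbb{C})$ of small norm can be written $C = [A,B]$ with $\|A\|,\|B\| = O(\sqrt{\|C\|})$. Over $\mathfrak{su}(2) \cong \mathbb{R}^3$ this is the elementary fact that $C = A \times B$ is solvable with $\|A\| = \|B\| = \sqrt{\|C\|}$, and over $\mathfrak{sl}(2,\mathbb{C}) \cong \mathbb{C}^3$ I would run the same computation with the complex cross product, taking a little care to stay off the isotropic cone of the Killing form.

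The heart of the argument is then the standard recursive routine $\mathrm{Solve}_k$. The base case $\mathrm{Solve}_0(W)$ returns the nearest element of the $\epsilon_0$-net $G$, achieving error $\epsilon_0$ with a single factor. For $k \ge 1$ and a target $W \in B$, I would set $W_{k-1} = \mathrm{Solve}_{k-1}(W)$, so that $\Delta := W W_{k-1}^{-1} = e^C$ with $\|C\| = O(\epsilon_{k-1})$; decompose $C = [A,B]$ with $\|A\|,\|B\| = O(\sqrt{\epsilon_{k-1}})$; recurse to obtain $\tilde A = \mathrm{Solve}_{k-1}(e^A)$ and $\tilde B = \mathrm{Solve}_{k-1}(e^B)$ (legitimate since $e^A, e^B$ lie well inside $B$); and return $[\tilde A, \tilde B]\, W_{k-1}$. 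Combining the BCH bound with the vanishing-derivative property --- replacing $e^A, e^B$ by their $\epsilon_{k-1}$-accurate, $O(\sqrt{\epsilon_{k-1}})$-small approximations perturbs the group commutator by only $O(\epsilon_{k-1}^{3/2})$ --- yields new error $\epsilon_k = O(\epsilon_{k-1}^{3/2})$ at the cost of three recursive calls, hence $\ell_k = 5\ell_{k-1}$ factors and $T_k = 3T_{k-1} + O(1)$ running time.

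Unrolling gives $\epsilon_n \lesssim \epsilon_0^{(3/2)^n}$, valid once $\epsilon_0$ is below an absolute threshold --- this is the constant $\epsilon_0 > 0$ in the statement, and the threshold is whatever makes the commutator decomposition available and the geometric error series convergent. Then $n = O(\log\log(1/\epsilon))$ levels suffice to reach accuracy $\epsilon$, using $\ell_n = 5^{O(\log\log(1/\epsilon))} = \polylog(1/\epsilon)$ factors drawn from $G$ and their inverses, and running in $T_n = 3^{O(\log\log(1/\epsilon))} = \polylog(1/\epsilon)$ time, since each level performs only a constant-size search of the fixed net and one constant-size linear-algebra step over $\mathbb{C}$ to produce the decomposition $C = [A,B]$.

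The one place this departs seriously from the textbook $SU(2)$ proof --- and the step I expect to be the real work --- is controlling operator norms in the non-compact group. In $SU(2)$ every intermediate product automatically has norm $1$, so multiplication and conjugation are uniformly Lipschitz; in $SL(2,\mathbb{C})$ one only has $\|g h g^{-1} - g h' g^{-1}\| \le \|g\|^2 \|h - h'\|$, which is useless unless $g$ stays bounded. The remedy is to keep the entire recursion inside $B$ (or a fixed bounded superset of it) and to check at every level that the elements being multiplied and conjugated --- $W_{k-1}$, which is close to a point of $B$, and $\tilde A, \tilde B$, which are only $O(\sqrt{\epsilon_{k-1}})$ from the identity --- never escape, so all the Lipschitz constants above remain uniformly bounded. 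The remaining technical point is establishing the $C = [A,B]$ decomposition with the bound $\|A\|,\|B\| = O(\sqrt{\|C\|})$ over $\mathbb{C}$ rather than $\mathbb{R}$, where one must avoid the null directions of the complex bilinear form; everything else should be a routine port of the compact argument.
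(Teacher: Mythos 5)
The paper does not prove this theorem; it invokes it as a black box, citing Theorem~7.6 of Aharonov, Arad, Eban, and Landau~\cite{aharonov}, so there is no in-paper proof to compare against. Your sketch is nonetheless a faithful reconstruction of how that reference (and the Dawson--Nielsen literature it builds on) adapts Solovay--Kitaev to $SL(2,\mathbb{C})$: the recursion via balanced group commutators and BCH, the commutator decomposition in $\mathfrak{sl}(2,\mathbb{C})\cong\mathbb{C}^3$ via the complex cross product, the $\epsilon_k=O(\epsilon_{k-1}^{3/2})$ bootstrap below a threshold $\epsilon_0$, and --- the essential non-compact point --- keeping the whole recursion inside a fixed bounded region $B$ so that operator norms, inverse norms (automatic here since $\|W^{-1}\|=\|W\|$ for $W\in SL(2,\mathbb{C})$), and the relevant Lipschitz constants for multiplication and conjugation stay uniformly controlled. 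You correctly flagged the two places that actually need care beyond the compact case (norm control, and avoiding the isotropic cone when solving $C=[A,B]$), so as a proposal for what the cited proof does this is accurate; the only caveat is that none of this appears in the present paper.
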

In the above theorem, when we say an operation is ``$\epsilon$-far" from another, we are referring to the operator norm.

From this, we can immediately prove the main theorem. Suppose we wish to compute a language $L_0 \in \PP$, and we have a commuting Hamiltonian $H$ of the form promised in Theorem \ref{maintheorem}. By Aaronson's result that $\PP \subseteq \PostBQP$ \cite{postbqp}, there is an efficiently computable postselected quantum circuit $C$ composed of Hadamard and Toffoli gates which computes $L$. Additionally, by Claim \ref{density} there exists a finite set $G$ of products of $L$'s and $L^{-1}$'s which form an $\epsilon_0$-net over $B$ (which can be computed in finite time). Hence by Theorem \ref{thm:sk} there is a poly-time algorithm which expresses single-qubit gates as products of elements of $G$ to exponential accuracy. Likewise, since $H$ is entangling, we can generate some entangling two-qubit gate $g$, as well as its inverse $g^{-1}$ (by applying $-H$). Since $g$ and single-qubit gates are universal \cite{anygateuniversal}, by the usual Solovay--Kitaev theorem \cite{solovay}, we can express the circuit $C$ in terms of $g$, $g^{-1}$, and single-qubit gates to exponential accuracy with polynomial overhead. Combining these, we can express the circuit $C$ as a polynomial sized product of $g$'s, $g^{-1}$'s, $L$'s, and $L^{-1}$'s , which we can express as a $\PostIQPH$ circuit using the gadgets described previously. Hence this $\PostIQPH$ circuit decides the language $L_0$. 

Note that in this construction, it is crucial that we only ever need to invert a finite number of $L(t)$ matrices. This ensures that the size of the postselection gadgets involved to construct the $L^{-1}$ operations are upper bounded by a constant depending on the choice of $H$ only. Additionally, it is important that we can construct the $L^{-1}$ matrices exponential accuracy. This is crucial because in order to perform $\PostBQP$ under postselection, one needs to be able to simulate Aaronson's algorithm to exponential accuracy \footnote{This is because the algorithm postselects on an exponentially unlikely event, so to maintain polynomial accuracy after postselection we require exponential accuracy prior to postselection.}. Fortunately our construction allows us to simulate the algorithm to high accuracy, and hence these Hamiltonians can be used to sample from probability distributions which are not possible to simulate with a classical computer unless the polynomial hierarchy collapses.

This completes the proof in all cases except the exceptional case $H=X(\theta)\otimes X(\theta)$. This has a separate hardness of sampling result which was shown by Fefferman, Foss-Feig, and Gorshkov \cite{feffermanpersonal}. In particular, they showed the following:
\begin{theorem}[Fefferman et al. \cite{feffermanpersonal}]\label{feffermanlastcase}
If $H=X(\theta)\otimes X(\theta)$ for some $\theta$, then a $\BPP$ machine cannot weakly simulate \sampIQPH with any constant multiplicative error unless $\PH$ collapses to the third level.
\end{theorem}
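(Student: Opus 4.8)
\textbf{Proof proposal for Theorem~\ref{feffermanlastcase}.}
The plan is to follow the template used by Aaronson and Arkhipov for hardness of exact (and constant‑multiplicative‑error) boson sampling \cite{bosonsampling}, adapted to the spin dynamics generated by $H=X(\theta)\otimes X(\theta)$, and then to close with Toda's theorem exactly as in the proof of Lemma~\ref{hardnesslemma}. Note that the postselection‑gadget route of the Main Theorem is unavailable here: as flagged before Claim~\ref{density}, for $X(\theta)\otimes X(\theta)$ one has $a'=-1$ and the gadgets $L(t)$ do not generate all of $SL(2,\mathbb{C})$, so we cannot argue $\PostBQP\subseteq\PostIQPH$. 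Instead we embed permanents directly into the output distribution.

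\emph{Step 1 (the permanent encoding).} The crux is to show that, for a family of matrices $A$ over a domain rich enough that multiplicatively approximating $|\mathrm{Per}(A)|^2$ is $\#\P$-hard, there is a \emph{uniform} $\sampIQPH$ circuit --- a $\poly(n)$-qubit circuit built from gates $e^{it_{ij}H_{ij}}$ with each $t_{ij}$ specifiable in $\poly(n)$ bits and bounded in magnitude by a polynomial --- together with a designated output string $y^\ast$ and an efficiently computable positive normalization $N$, such that $\Pr[\mathcal D_x=y^\ast]=|\mathrm{Per}(A)|^2/N$. The idea is to use $e^{it X_i X_j}=\cos t\,I+i\sin t\,X_iX_j$, which coherently flips the pair $(i,j)$, to simulate a linear‑optical network: qubits play the role of modes, a flipped qubit plays the role of a photon, and with the times chosen appropriately the amplitude for a fixed transition between an ``input'' configuration and an ``output'' configuration --- restricted to the relevant number‑conserving sector --- is exactly $\mathrm{Per}(A)$, where $A$ is read off from the $t_{ij}$. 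One must check that the single‑qubit phase rotations $\diag(e^{i\phi},e^{-i\phi})$ implicit in writing $X(\theta)$ as a conjugate of $X$ can be absorbed (or exploited), so that the same encoding works for every value of $\theta$, and that the uniformity constraints of $\sampIQPH$ are satisfied.

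\emph{Step 2--3 (transfer to a permanent oracle, then Toda).} Suppose a $\BPP$ machine $M$ weakly simulates $\sampIQPH$ with some constant multiplicative error $c$; then the probability $q_{y^\ast}$ that $M(x)$ outputs $y^\ast$ satisfies $q_{y^\ast}/c\le\Pr[\mathcal D_x=y^\ast]\le c\,q_{y^\ast}$. By the counting argument of \cite{IQPPH,bosonsampling} (Stockmeyer approximate counting applied to the poly‑size randomized computation $M$, together with postselection), one obtains a $(1\pm 1/\poly)$-multiplicative estimate of $q_{y^\ast}$, and hence an $O(c)$-multiplicative estimate of $|\mathrm{Per}(A)|^2$, inside $\PostBPP\subseteq\Delta_3$. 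But multiplicatively approximating $|\mathrm{Per}(A)|^2$ over the ensemble of Step~1 is $\#\P$-hard \cite{bosonsampling} (this rests on $\mathrm{Per}$ being a $\#\P$-complete $\mathrm{GapP}$ function, plus a padding/scaling reduction from exact to approximate evaluation), so $\#\P\subseteq\P^{\PostBPP}\subseteq\Delta_3$. By Toda's theorem \cite{Toda}, $\PH\subseteq\P^{\#\P}\subseteq\Delta_3$, so $\PH$ collapses to the third level, contradicting our assumption. The fact that this argument tolerates \emph{any} constant $c$ --- rather than only $c<\sqrt2$ as in Lemma~\ref{hardnesslemma} --- is inherited from the robustness of the permanent‑approximation hardness, which is why the theorem statement carries no $\sqrt2$ restriction.

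\emph{Main obstacle.} The hard part is Step~1: producing an \emph{exact} permanent in the output distribution. Because $e^{itX_iX_j}$ changes the number of flipped qubits by $0$ or $2$ rather than conserving it, one does not get for free the ``photon‑number superselection'' that makes linear‑optical amplitudes equal permanents; the circuit --- which pairs to couple, the input string, the measured/postselected bits, and the times --- must be engineered so that all higher‑order flip contributions either cancel or are exactly accounted for, while keeping the whole construction uniform and the times polynomially bounded. A secondary concern is ensuring $A$ and $N$ are efficiently computable from $x$, and verifying that the $\theta$‑dependent phase rotations do not obstruct the reduction for any $\theta$.
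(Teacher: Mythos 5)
Your outline follows the same route as the paper, which in turn summarizes the unpublished argument of Fefferman, Foss-Feig, and Gorshkov: embed a permanent into an output amplitude, then run Stockmeyer approximate counting and finish with Toda's theorem, exactly as in the boson-sampling hardness proof. The place where you diverge --- and where you flag your main obstacle --- is your insistence in Step~1 on an \emph{exact} encoding $\Pr[\mathcal{D}_x=y^\ast]=|\text{Perm}(A)|^2/N$. That is stronger than what the cited construction achieves or needs. As stated in the paper, Fefferman et al.\ only produce $\bra{1^n}U\ket{0^n}=k\big(\text{Perm}(A)+\epsilon\big)$ with $k$ an efficiently computable, $A$-independent normalization that is exponentially small in $n$, and $\epsilon$ a leakage term of norm $o(2^{-n})$. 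Since $\text{Perm}(A)$ is an integer for $A\in\{0,\pm1\}^{n\times n}$, this additive leakage is negligible against any nonzero permanent, so a constant-factor multiplicative estimate of the output probability still yields a constant-factor multiplicative estimate of $\text{Perm}(A)^2$, which remains $\#\P$-hard by \cite{bosonsampling}. In other words, the cross-sector contributions you worry about (arising because $e^{itX_iX_j}$ shifts Hamming weight by $0$ or $\pm2$) do not have to cancel exactly; they only have to be uniformly suppressed to $o(2^{-n})$, which is the weaker target you should aim at. Your Steps~2--3 then match the paper's reduction (the paper places the Stockmeyer step in $\BPP^{\NP}\subseteq\Delta_3$ rather than $\PostBPP\subseteq\Delta_3$, and the final inclusion is better phrased as $\P^{\#\P}\subseteq\Delta_3$ since $\#\P$ is a function class), and your explanation for why the bound is ``any constant'' rather than $\sqrt{2}$ --- it comes from the robustness of permanent-squared approximation hardness rather than from $\PostBQP=\PP$ --- is correct.
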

Their proof makes use of that fact that using such Hamiltonians, for any matrix $A\in\{0,\pm1\}^n$, one can perform 
a unitary $U$ on a system of $O(n)$ qubits such that $\bra{1^n}U\ket{0^n} = k\left(\text{Perm}(A) + \epsilon\right)$,
where $k$ is independent of $A$ and exponentially small in $n$, $\text{Perm}(A)$ denotes the permanent of $A$, and $\epsilon$ is a term with norm $o(2^{-n})$. 
Note that Perm$(A)^2$ is $\#\P$-hard to compute with any constant multiplicative error \cite{bosonsampling}.
Therefore Theorem \ref{feffermanlastcase} immediately follows by the techniques of Aaronson and Arkhipov \cite{bosonsampling} - because if there were an efficient classical simulation of such circuits, then using approximate counting \cite{stockmeyer}, one could approximate $\text{Perm}(A)^2$ to multiplicative error $\left(1+\frac{1}{\poly(n)}\right)$ in $\BPP^\NP$. But $\BPP^\NP \subseteq \Delta_3$, so again by Toda's theorem \cite{Toda} this implies the collapse of $\PH$ to the third level.

This completes the last remaining case, and hence completes the proof.

\end{proof}

\section{Open Problems}
Our results leave a number of open problems.
\begin{enumerate} 
\item An interesting open problem is to classify \emph{all} Hamiltonians in terms of their computational power under this model. Childs et al. \cite{2qubithamiltonians} previously classified which two-qubit Hamiltonians can perform any unitary on two qubits. However, this does not classify which Hamiltonians are computationally universal for two reasons. First, as Childs et al. point out in their paper, it is possible that $H$ fails to generate all unitaries on two qubits, but does generate all unitaries on three qubits (i.e. adding ancillas helps one attain universality). It remains open to classify which two-qubit $H$ generate all unitaries on sufficiently large systems. Second, even if a Hamiltonian $H$ does not generate all unitaries, it is still possible that $H$ is computationally universal. For example, $H$ could be universal on an encoded subspace. Classifying which Hamiltonians are universal under an encoding seems to be a challenging task. We conjecture that the power of any two-qubit Hamiltonian obeys a dichotomy: either $H$ is efficiently classicaly simulable in this model, or it is universal under postselection and hence cannot be weakly simulated unless $\PH$ collapses. This is true of all known two-qubit Hamiltonians, and our classification proves this result rigorously in the case of commuting Hamiltonians.
\item  
In this paper we considered the power of quantum circuits with commuting Hamiltonians.
A more difficult related problem is classify the power of quantum circuits with commuting gate sets.
The challenge in solving this problem would be to classify when a discrete set of $L$'s generates a continuum of gates. 
There are some sufficient conditions under which this holds (see e.g. Aharonov et al.\ \cite{aharonov}, Corollary 9.1).
However, finding necessary and sufficient conditions under which a finite set of operators densely generates a continuous subgroup of $SL(2,\mathbb{C})$ seems very difficult, in part because there is no complete, explicit classification of discrete subgroups of $SL(2,\mathbb{C})$. 
Indeed, discrete subgroups of $SL(2,\mathbb{C})$ are related to the theory of M\"{o}bius transformations \cite{beardonsl2c}, where they are known as ``Kleinian subgroups," and they are the subject of a deep area of mathematical research.

\end{enumerate}

\section{Acknowledgements}
We thank Bill Fefferman, Michael Foss-Feig, and Alexey Gorshkov for allowing us to include a summary of their unpublished work \cite{feffermanpersonal}. 
We also thank Jacob Taylor for pointing us to a simplified proof of Claim \ref{clm:localdiag}, Michael Bremner for pointing us to references  \cite{shepherdIQPthesis,shepherdmatroids}, and Scott Aaronson, Joseph Fitzsimons and Ashley Montanaro for helpful discussions.
A.B. was supported in part by the National Science Foundation Graduate Research Fellowship under Grant No.\ 1122374, in part by the Center for Science of Information (CSoI), an NSF Science and Technology Center, under grant agreement CCF-0939370, and in part by Scott Aaronson's NSF Waterman award. 
L.M.~was supported by Singapore Ministry of Education (MOE) and National Research Foundation Singapore, as well as MOE Tier 3 Grant ``Random numbers from quantum processes'' (MOE2012-T3-1-009). 
X.Z. was supported by the MIT UROP program.

\appendix
\section*{Appendix}
\section{Commuting Hamiltonians are locally diagonalizable}\label{app:localdiag}

\newcommand{\braket}[2]{\langle#1|#2\rangle}
\newcommand{\proj}[1]{|#1\rangle\langle#1|}
\newcommand{\id}{I}
\renewcommand{\C}{\mathbb{C}}
\renewcommand{\R}{\mathbb{R}}

\newcommand{\Eq}[1]{Equation~(\ref{eq:#1})}
\newcommand{\EqRef}[1]{(\ref{eq:#1})}
\newcommand{\Lem}[1]{Lemma~\ref{lem:#1}}

To establish Claim \ref{clm:localdiag}, we prove the following stronger statement.

\begin{clm}
If $H$ is a two-qubit Hamiltonian and $[H\otimes \id, \id \otimes H]= 0$, then $(U\otimes U) H (U\otimes U)^\dagger$ is diagonal for some one-qubit unitary $U$.
\end{clm}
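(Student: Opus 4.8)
The plan is to expand $H$ in the two-qubit Pauli basis and read off the constraints that $[H\otimes I, I\otimes H]=0$ imposes on its coefficients. Write $H = \sum_{\mu,\nu\in\{I,X,Y,Z\}} h_{\mu\nu}\,\sigma_\mu\otimes\sigma_\nu$; since $H$ and every $\sigma_\mu$ are Hermitian, all $h_{\mu\nu}$ are real. First I would compute the commutator on three qubits, where only the middle tensor factors fail to commute:
\[
[H\otimes I,\; I\otimes H] \;=\; \sum_{\mu,\nu,\alpha,\beta} h_{\mu\nu}\,h_{\alpha\beta}\;\sigma_\mu\otimes[\sigma_\nu,\sigma_\alpha]\otimes\sigma_\beta .
\]
Since the operators $\sigma_\mu\otimes E\otimes\sigma_\beta$, with $E$ ranging over a basis of the $2\times2$ matrices, form a basis of the $8\times8$ matrices, this commutator vanishes if and only if $\sum_{\nu,\alpha\in\{X,Y,Z\}} h_{\mu\nu}h_{\alpha\beta}[\sigma_\nu,\sigma_\alpha]=0$ for every pair $(\mu,\beta)$.

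Next I would recast this geometrically. For each index $\mu$ set $\vec v^{(\mu)} = (h_{\mu X}, h_{\mu Y}, h_{\mu Z})$ and $\vec w^{(\mu)} = (h_{X\mu}, h_{Y\mu}, h_{Z\mu})$ in $\mathbb{R}^3$. Using $[\sigma_i,\sigma_j]=2i\sum_k \epsilon_{ijk}\sigma_k$ for $i,j\in\{X,Y,Z\}$ (and that the commutator is $0$ whenever an $I$ appears), the condition becomes $\vec v^{(\mu)}\times\vec w^{(\beta)}=0$ for all $\mu,\beta\in\{I,X,Y,Z\}$; that is, every $\vec v^{(\mu)}$ is parallel to every $\vec w^{(\beta)}$. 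Let $M$ be the $3\times3$ real matrix $M_{ij}=h_{ij}$ for $i,j\in\{X,Y,Z\}$, whose rows are the $\vec v^{(i)}$ and columns the $\vec w^{(j)}$. The statement ``all rows parallel to all columns'' forces $M$ to have rank at most one; writing a nonzero such $M$ as $\vec n\,\vec c^{\mathsf T}$, its rows are parallel to $\vec c$ and its columns to $\vec n$, so the rows-parallel-to-columns condition gives $\vec c\parallel\vec n$, hence $M = \lambda\,\vec n\,\vec n^{\mathsf T}$ for a unit vector $\vec n\in\mathbb{R}^3$ and some $\lambda\in\mathbb{R}$; the remaining $I$-indexed constraints then give $\vec v^{(I)} = a\,\vec n$ and $\vec w^{(I)} = b\,\vec n$ for scalars $a,b$. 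Consequently $H = h_{II}\,I\otimes I + b\,(\vec n\cdot\vec\sigma)\otimes I + a\,I\otimes(\vec n\cdot\vec\sigma) + \lambda\,(\vec n\cdot\vec\sigma)\otimes(\vec n\cdot\vec\sigma)$, where in the degenerate subcases ($M=0$, or $\vec v^{(I)}$ or $\vec w^{(I)}$ vanishing) one just takes $\vec n$ to be a common direction of whatever vectors are nonzero, chosen arbitrarily if they all vanish.

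Finally, choose a one-qubit unitary $U$ with $U(\vec n\cdot\vec\sigma)U^\dagger = Z$, i.e.\ rotating the Bloch vector $\vec n$ onto $\hat z$. Then $(U\otimes U)\,H\,(U\otimes U)^\dagger = h_{II}\,I\otimes I + b\,Z\otimes I + a\,I\otimes Z + \lambda\,Z\otimes Z$, which is diagonal, proving the claim; Claim~\ref{clm:localdiag} then follows by reading off $D$ and absorbing a global phase so that $U\in SU(2)$. I expect the main obstacle to be not any single step but the middle bookkeeping: justifying the $8\times8$ linear-independence reduction cleanly, and — the real crux — extracting from ``all rows parallel to all columns, and to the $I$-row and $I$-column'' the precise conclusion that a single unit vector $\vec n$ governs everything via $M=\lambda\vec n\vec n^{\mathsf T}$, $\vec v^{(I)}=a\vec n$, $\vec w^{(I)}=b\vec n$, while correctly handling all the cases where vectors vanish.
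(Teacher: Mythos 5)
Your proposal follows essentially the same route as the paper's Appendix~A proof: expand $H$ in the Pauli basis, reduce the commutation condition to $\vec v^{(\mu)}\times\vec w^{(\beta)}=0$ for all index pairs (the paper writes these as $\vec r_A$ and $\vec c_B$ and gets there by subtracting the two products rather than computing the commutator directly), conclude the $3\times3$ block $M$ has rank at most one with coinciding row and column spaces, and then diagonalize $\vec n\cdot\vec\sigma$ by a single-qubit rotation. In fact your writing $M=\lambda\,\vec n\vec n^{\mathsf T}$ with a signed scalar $\lambda$ is slightly more careful than the paper's $M=\vec v\vec v^{\mathsf T}$, which tacitly absorbs a possibly negative coefficient into $\vec v$; the end result is the same.
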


This is actually slightly stronger than Lemma 33 of \cite{cubitt}, which shows that if $[H\otimes \id, \id \otimes H]= [H\otimes \id, \id \otimes THT]= [THT\otimes \id, \id \otimes H]=0$, then $H$ is locally diagonalizable. Here we merely require that $[H\otimes \id, \id \otimes H]= 0$.

\begin{proof}
As a first step we expand $H$ in Pauli basis and let $\alpha_{AB}$ be the coefficient at $A\otimes B$ term for any $A,B\in\set{\id,X,Y,Z}$. Also, for all $A\in\set{\id,X,Y,Z}$, let 
\begin{equation}
  \vec{c}_A := (\alpha_{XA},\alpha_{YA},\alpha_{ZA})^T
  \quad \text{and} \quad
  \vec{r}_A := (\alpha_{AX},\alpha_{AY},\alpha_{AZ})^T.
\end{equation}
Given a vector $\vec{v} = (v_x,v_y,v_z)^T\in\R^3$, we adopt a commonly used notation and write $\vec{v}\cdot\vec{\sigma}$ to denote the linear combination $v_x X + v_y Y + v_z Z$.

Since $(H\otimes \id)(\id \otimes H) =(\id \otimes H)  (H\otimes \id)$, we know that both products must have the same expansion in Pauli basis. Let us fix $A,B\in \set{\id,X,Y,Z}$ and consider the terms of the form $A\otimes \underline{\phantom{A}} \otimes B$ in the Pauli expansion of each of the products. 

First, for $(H\otimes \id)(\id \otimes H)$ we notice that, when restricted to terms of the form $A\otimes \underline{\phantom{A}} \otimes B$, its Pauli expansion is given by
\begin{align}
  &\big( A \otimes (\alpha_{AI} \id + \vec{r}_A\cdot \vec{\sigma}) 	
  \otimes\id \big)
  \big( \id \otimes (\alpha_{IB} \id + \vec{c}_B\cdot \vec{\sigma}) 	
  \otimes B \big) = \\
  & A \otimes \big( \alpha_{AI}\alpha_{IB} \id + 
  (\alpha_{AI}\vec{c}_B + \alpha_{IB}\vec{r}_A) \cdot \vec{\sigma} +
  (\vec{r}_A \cdot \vec{\sigma})  (\vec{c}_B \cdot \vec{\sigma})
  \big) \otimes B = \\
  & A \otimes \big( 
  (\alpha_{AI}\alpha_{IB} + \vec{r}_A \cdot \vec{c}_B) \id + 
  (\alpha_{AI}\vec{c}_B + \alpha_{IB}\vec{r}_A + 
  i (\vec{r}_A \times \vec{c}_B)) \cdot \vec{\sigma}
  \big) \otimes B, \label{eq:Pauli1}
\end{align}
where we have applied the identity $(\vec{v}\cdot\vec{\sigma}) (\vec{w}\cdot\vec{\sigma}) = (\vec{v}\cdot \vec{w}) \id + i (\vec{v} \times \vec{w})\vec{\sigma}$ in the last step.

Next, we consider the product $(\id \otimes H)(H\otimes \id)$ and similarly obtain that, when restricted to terms of the form $A\otimes \underline{\phantom{A}} \otimes B$, its the Pauli expansion is given by
\begin{align}
  &\big( \id \otimes (\alpha_{IB} \id + \vec{c}_B\cdot \vec{\sigma}) 	
  \otimes B \big)
  \big( A \otimes (\alpha_{AI} \id + \vec{r}_A\cdot \vec{\sigma}) 	
  \otimes \id \big) = \\
  & A \otimes \big( 
  (\alpha_{AI}\alpha_{IB} + \vec{c}_B \cdot \vec{r}_A) \id + 
  (\alpha_{AI}\vec{c}_B + \alpha_{IB}\vec{r}_A + 
  i (\vec{c}_B \times \vec{r}_A)) \cdot \vec{\sigma}
  \big) \otimes B. \label{eq:Pauli2}
\end{align}
Since the coefficients in the Pauli expansions of $(H\otimes \id)(\id \otimes H)$ have to coincide with those in the expansion of $(\id \otimes H)(H\otimes \id)$, we know that the difference between expressions \EqRef{Pauli1} and \EqRef{Pauli2} equals zero. Considering the middle tensor and canceling some therms gives
\begin{equation}
  (\vec{r}_A \times \vec{c}_B) \cdot \vec{\sigma} = 
  (\vec{c}_B \times \vec{r}_A) \cdot \vec{\sigma}.
\end{equation}
Since $\vec{v} \times \vec{w} = - \vec{w} \times \vec{v}$, we obtain that $\vec{r}_A \times \vec{c}_B = 0$. This further implies that $\vec{r}_A$ and $\vec{c}_B$ are collinear, that is, $\dim(\spn\set{\vec{r}_A,\vec{c}_B}) \le 1$. Since we can choose arbitrary $A,B \in \set{I,X,Y,Z}$, it must be that all the vectors $\vec{r}_A$ and $\vec{c}_B$ must lie in the same one-dimensional subspace, i.e.,
\begin{equation}
  \dim\big( \spn\set[\big]{\vec{r}_A,\vec{c}_B 
  : A, B \in \set{I,X,Y,Z}} \big) \le 1.
\label{eq:rank1}
\end{equation}
Let us now consider a $3\times 3$ matrix $M$ whose rows and columns are indexed by Pauli matrices $X,Y$ and $Z$ and its entries are defined  via $M_{AB} = \alpha_{AB}$. Then the vectors $\vec{c}_A$ are the columns of $M$ and $\vec{r}_B$ are its rows. From \Eq{rank1}, we see that $M$ has rank at most one. Moreover, the row and column spaces of $M$ must coincide as 
\begin{equation}
  \spn\big(\set{\vec{r}_X,\vec{r}_Y,\vec{r}_Z}\big) = 
  \spn\big(\set{\vec{c}_X,\vec{c}_Y,\vec{c}_Z}\big).
\end{equation}
These two observations imply that $M = \vec{v}\vec{v}^T$ for some $\vec{v} \in \R^3$. So we can express our Hamiltonian $H$~as
\begin{equation}
  H  =  \alpha_{II} \id\otimes \id + 
		(a\vec{v} \cdot \vec{\sigma}) \otimes \id +
		\id \otimes (b\vec{v} \cdot \vec{\sigma}) +
		(\vec{v} \cdot \vec{\sigma}) \otimes (\vec{v} \cdot \vec{\sigma}),
\label{eq:HamForm}
\end{equation}
where $a,b\in\R$ are such that $\vec{r_I} = a \vec{v}$ and $\vec{c_I} = b \vec{v}$. If we pick a unitary $U$ that diagonalizes $\vec{v}\cdot\vec{\sigma}$, then from \Eq{HamForm} we see that $U\otimes U$ diagonalizes our Hamiltonian $H$. This concludes the proof.
\end{proof}

\section{Inverting $L$ matrices using postselection gadgets} \label{app:inverses}

We now prove Claim \ref{clm:inverses}.

\begin{proof}

We will need two additional gadgets for our construction. First, consider a modification of the gadget for $L(t)$, where we start the qubit in the $\ket{1}$ state and postselect on the $\ket{1}$ state:
\[\Qcircuit @C=1em @R=1em {
\lstick{\ket{\psi}} & \qw & \multigate{1}{D(t)} & \gate{U^\dagger} & \meter & \bra{1}   \\
\lstick{\ket{1}} & \gate{U} & \ghost{D(t)} & \qw & \ket{\psi'} 
}
\]
By a direct calculation, one can show the linear transformation performed on $\ket{\psi}$ is given by
\[M(t) = \frac{1}{|\alpha| |\beta| \sqrt{e^{-2it} - e^{2it}}}  \left( \begin{matrix} |\beta|^2 e^{ia't}& -\alpha \beta^* e^{it} \\ -\alpha^* \beta e^{i} & |\alpha|^2 e^{id't} \end{matrix} \right)\]
This is tantalizingly close to the inverse of $L$, which is 
\[L(t) ^{-1} = \frac{1}{|\alpha| |\beta|\sqrt{e^{-2it} - e^{2it}}  }  \left( \begin{matrix} |\beta|^2 e^{id't}& -\alpha \beta^* e^{it} \\ -\alpha^* \beta e^{it} & |\alpha|^2 e^{ia't} \end{matrix} \right)\]

The only thing that is off is that the phase of the upper left and bottom right entries are incorrect. 
We now break into three cases to describe how to correct the phases in each. (Recall that $d'=-2-a'$ as our without loss of generality our Hamiltonian is traceless).

Case 1: $a'=d'=-1$
In this case we already have $M(t)=L^{-1}(t)$, so we have found the inverse. 

Case 2: $a'=1, d'=-3$ OR $a'=-3, d'=1$

We will prove the case $a'=1$; an analogous proof holds for $a'=-3$.

To correct the phases in $M(t)$, we need to introduce an additional gadget:
\[\Qcircuit @C=1em @R=1em {
\lstick{\ket{\psi}} & \qw & \multigate{1}{D(t)} &  \qw & \qw & \ket{\psi'} \\
\lstick{\ket{0}} & \gate{U} & \ghost{D(t)} & \gate{U^\dagger}  &\meter & \bra{1}  
}
\]
In other words, instead of using the gate in a teleportation-like protocol, we instead use it to apply phases to $\ket{\psi'}$. This gate performs the following transformation on the input state:
\[
N(t) = \frac{1}{\sqrt{( e^{it}-e^{ia't} )( e^{id't } -e^{it} ) }} \left(\begin{matrix} e^{it}-e^{ia't}   & 0 \\ 0 & e^{id't}  -e^{it}  \end{matrix}\right)
\]

In the case that $a'=1$, this gadget becomes singular, and hence it performs the operation 
\[\left(\begin{matrix} 0   & 0 \\ 0 & 1  \end{matrix}\right)\]
In other words, this gadget postselects the qubit involved on the state $\ket{1}$.
This holds in particular for $t=\pi/4$. (In fact it holds for any $t$ such that $e^{-3it}\neq e^{it}$, in which case it becomes undefined). 

By composing $N(\pi/4)$ with other gadgets, this now empowers us to create gadgets in which we postselect on $\ket{1}$ on lines which do not end in $U^\dagger$. For instance, we can create the following gadget:
\[\Qcircuit @C=1em @R=1em {
\lstick{\ket{\psi}} & \qw & \multigate{1}{D(t)} & \qw &  \ket{\psi'}   \\
\lstick{\ket{0}} & \gate{U} & \ghost{D(t)}      &\qw & \meter & \bra{1} 
}
\]
Which one can easily check is equivalent to the following circuit, which maintains the property that every line begins and ends with $U$ and $U^\dagger$.
\[\Qcircuit @C=1em @R=1em {
\lstick{\ket{\psi}} & \qw       & \multigate{1}{D(t)} & \qw                          & \qw & \qw &  \ket{\psi'}   \\
\lstick{\ket{0}}    & \gate{U} & \ghost{D(t)}          &\multigate{1}{D(\pi/4)} & \gate{U^\dagger} &  \meter  \\
\lstick{\ket{0}} & \gate{U} & \qw                         & \ghost{D(\pi/4)} & \gate{U^\dagger} & \meter & \bra{1} \
}
\]
This is simply composing the gadget with $N(\pi/4)$. (Here the output of the middle qubit is an independent sample from measuring the state $U^\dagger \ket{1}$ in the computational basis).

This gadget performs the following operation on $\ket{\psi}$:
\[
P(t) \propto  \left(\begin{matrix} e^{it} &0 \\ 0 & e^{-3it} \end{matrix}\right) \propto \left(\begin{matrix} e^{2it} &0 \\ 0 & e^{-2it}\end{matrix}\right)
\]
In other words, the matrix $P(t)$ is a phase gate by phase $\theta=2t$. 

The construction of arbitrary phase gates suffices to correct the diagonal phases of $M(t)$, because for any matrix $\left(\begin{matrix} a & b \\ c &d\end{matrix}\right)$ we have that
\[ \left(\begin{matrix} e^{i\theta/2} & 0 \\ 0 & e^{-i\theta/2}\end{matrix}\right)\left(\begin{matrix} a & b \\ c &d\end{matrix}\right) \left(\begin{matrix} e^{i\theta/2} & 0 \\ 0 & e^{-i\theta/2}\end{matrix}\right)= \left(\begin{matrix} ae^{i\theta} & b \\ c &de^{-i\theta}\end{matrix}\right)\]
Hence by choosing $\theta=(d'-a')t$, and multiplying $M(t)$ by this matrix on both sides, we obtain $L^{-1}(t)$ as desired. Clearly this construction is efficient, i.e. the postselection gadget is of constant size, and one can efficiently compute the times to run the Hamiltonians in the gadget to high precision. This completes the proof.

Case 3: $a'\neq \pm 1, -3$

To correct the phases in $M(t)$, we need to consider the same gadget $N(t)$ which we used in Case 2:
\[\Qcircuit @C=1em @R=1em {
\lstick{\ket{\psi}} & \qw & \multigate{1}{D(t)} &  \qw & \qw & \ket{\psi'} \\
\lstick{\ket{0}} & \gate{U} & \ghost{D(t)} & \gate{U^\dagger}  &\meter & \bra{1}  
}
\]
In other words, instead of using the gate in a teleportation-like protocol, we instead use it to apply phases to $\ket{\psi'}$. This gate performs the following transformation on the input state:
\[
N(t) = \frac{1}{\sqrt{( e^{it}-e^{ia't} )( e^{id't } -e^{it} ) }} \left(\begin{matrix} e^{it}-e^{ia't}   & 0 \\ 0 & e^{id't}  -e^{it}  \end{matrix}\right)
\]
Since $N$ is a diagonal matrix, the only physical quantity that matters is the ratio $r(t)$ of its two entries, which is a complex number given by
\[r(t) = \frac{  e^{it} -e^{ia't} }{e^{id't} -e^{it}}.\]
If $r(t)$ takes on a certain value, then it immediately follows that $N(t) =\pm \left(\begin{smallmatrix} \sqrt{r} & 0 \\ 0 & \sqrt{r}^{-1} \end{smallmatrix}\right)$, because of our normalization. 
Furthermore, if we compose $N(s)N(t)$, then the ratio of the resulting diagonal matrix is $r(s)r(t)$. 
Note the $\pm1$ term is an irrelevant global phase, so we omit it in the further calculations. 

We will now show that for any complex phase $e^{i\theta}$, where $\theta\neq 0, \pi$, there exists a finite set of times $t_1,t_2,...t_k$, $s_1,s_2,...s_{k'}$ such that 
\[
N(t_1)N(t_2)...N(t_k) N(s_1)N(s_2)...N(s_{k'}) = \left(\begin{matrix} e^{i\theta/2} & 0 \\ 0 & e^{-i\theta/2}\end{matrix}\right)
\]

As previously mentioned in Case 2, the construction of such matrices suffices to correct the diagonal phases of $M(t)$, because for any matrix $\left(\begin{matrix} a & b \\ c &d\end{matrix}\right)$ we have that
\[ \left(\begin{matrix} e^{i\theta/2} & 0 \\ 0 & e^{-i\theta/2}\end{matrix}\right)\left(\begin{matrix} a & b \\ c &d\end{matrix}\right) \left(\begin{matrix} e^{i\theta/2} & 0 \\ 0 & e^{-i\theta/2}\end{matrix}\right)= \left(\begin{matrix} ae^{i\theta} & b \\ c &de^{-i\theta}\end{matrix}\right)\]
Hence by choosing $\theta=(d'-a')t$, and multiplying $M(t)$ by this matrix on both sides, we obtain $L^{-1}(t)$ as desired and this will complete the proof.

To prove this, we will prove two separate facts. First, we will show that given $\theta$, there exists a sequence $t_1,t_2,...t_k$ such that $N(t_1)N(t_2)...N(t_k) =\left(\begin{matrix} ce^{i\theta/2} & 0 \\ 0 & \frac{1}{c}e^{-i\theta/2}\end{matrix}\right)$  for some $c\in\mathbb{R}^+$. 
Next, we will show that for any $c\in\mathbb{R}$, there exists a sequence $s_1,s_2,...s_{k'}$ of times such that 
$N(s_1)N(s_2)...N(s_{k'}) = \left(\begin{matrix} 1/c & 0 \\ 0 & c\end{matrix}\right)$. 
Together these imply the claim.

Moreover, we will show this construction is efficiently computable. 
More specifically, suppose you want to find invert $L$.
The for each $L$ the size of the postselection gadget required to invert $L$ is of constant size.
Additionally, the amount of computational time required to compute the values of $t_i$ and $s_i$ to ensure that we find $L^{-1}$ to accuracy $\epsilon$ scales as poly$\log(k_L * 1/\epsilon)$, where $k_L$ is a constant which depends on $L$. 
Furthermore, the times $t_i$ and $s_i$ are upper bounded by a constant which only depends on the value of $a'$. 
For convenience we will refer to these properties as "the construction is efficiently computable."

At first glance it might sound like this definition of ``efficiently computable" is too weak, because the inverses of arbitrary $L$ matrices might require large postselection gadgets, or might require a long time to compute the values of the $t_i$ and $s_i$ to sufficient accuracy. 
However, later in our construction we will use the fact that for any fixed Hamiltonian $H$, we will only need to invert a fixed number of $L$ matrices.
Hence for fixed $H$, the size of the postselection gadgets which appear in our circuit will be upper bounded by a constant depending on $H$ only, but not on the size of the problem we are solving under postselection. Furthermore, for fixed $H$, we can compute the times $t_i,s_i$ required to invert the relevant $L$ matrices to exponential accuracy in poly$\log(1/epsilon)$ time (where a hidden constant $k_L$ depending on $L$ has been absorbed into the big-O notation).

\begin{clm}\label{clm:phase}For any $\theta\in(0,2\pi)$, there exists a sequence $t_1,t_2,...t_k$ such that $N(t_1)N(t_2)...N(t_k) =\left(\begin{matrix} ce^{i\theta/2} & 0 \\ 0 & e^{-u\theta/2}/c\end{matrix}\right)$  for some $c\in\mathbb{R}^+$. Furthermore, this construction is computationally efficient.
\end{clm}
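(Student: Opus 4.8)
The plan is to observe that $\arg r(t)$ is, up to an additive ambiguity of $\pi$, a nonconstant \emph{linear} function of $t$, so that products of finitely many gadgets $N(t_i)$ realize every phase on the unit circle. The first step is an explicit factorization of $r(t)$: using $e^{ix}-e^{iy}=2i\,e^{i(x+y)/2}\sin\!\big(\tfrac{x-y}{2}\big)$ and $d'=-2-a'$, a short computation gives
\[
r(t)=\frac{e^{it}-e^{ia't}}{e^{id't}-e^{it}}=e^{i(a'+1)t}\,\rho(t),\qquad
\rho(t):=-\,\frac{\sin\!\big(\tfrac{(1-a')t}{2}\big)}{\sin\!\big(\tfrac{(3+a')t}{2}\big)}\in\mathbb{R}.
\]
Since Case 3 assumes $a'\neq\pm1,-3$, each of $a'+1$, $1-a'$ and $3+a'$ is nonzero, so $\rho$ is a genuine real-valued function with a discrete zero/pole set and the phase $e^{i(a'+1)t}$ has nonzero angular speed $a'+1$.

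Next I would fix once and for all a nonempty open interval $I\subset(0,2\pi)$ with $\overline I$ contained in a single component of the complement of that discrete set. On $I$ the normalizing factor $\sqrt{(e^{it}-e^{ia't})(e^{id't}-e^{it})}$ is nonzero (so $N(t)$ is well defined), and $\rho$ has constant sign and is bounded away from $0$ and $\infty$; hence $\arg r(t)\equiv(a'+1)t+c_0\pmod{2\pi}$ for a constant $c_0\in\{0,\pi\}$, and $J:=\{\arg r(t):t\in I\}$ is an interval of positive length $\ell=|a'+1|\cdot|I|$ depending only on $a'$ (hence only on $H$).

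The heart of the argument is then a Minkowski-sum observation. By our normalization $N(t)=\pm\diag\!\big(\sqrt{r(t)},1/\sqrt{r(t)}\big)$, so a product $N(t_1)\cdots N(t_k)$ with all $t_i\in I$ equals $\pm\diag\!\big(\sqrt{\textstyle\prod_i r(t_i)},1/\sqrt{\textstyle\prod_i r(t_i)}\big)$, and $\arg\!\big(\prod_i r(t_i)\big)=\sum_i\arg r(t_i)$ ranges over the interval $kJ$ of length $k\ell$. Taking $k_0:=\lceil 2\pi/\ell\rceil+1$ makes $k_0J$ longer than $2\pi$, hence $k_0J$ surjects onto $\mathbb R/2\pi\mathbb Z$. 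So for any $\theta\in(0,2\pi)$ there are $t_1,\dots,t_{k_0}\in I$ with $\sum_i\arg r(t_i)\equiv\theta\pmod{2\pi}$; writing $R:=\big|\prod_i\rho(t_i)\big|>0$, one checks (the sign of $\rho$ cancelling against the $c_0=\pi$ shift when $\rho<0$) that $\prod_i r(t_i)=R\,e^{i\theta}$, whence $N(t_1)\cdots N(t_{k_0})=\diag\!\big(c\,e^{i\theta/2},\,e^{-i\theta/2}/c\big)$ with $c=\sqrt R>0$, up to an irrelevant global phase. This is exactly the matrix in the claim.

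Finally, efficiency. The count $k_0$ is a constant depending only on $a'$, and every $t_i$ lies in the fixed bounded interval $I$. Finding them reduces to the single linear congruence $(a'+1)\big(\sum_i t_i\big)\equiv\theta-c_0k_0\pmod{2\pi}$ with $t_i\in I$: the admissible values of $\sum_i t_i$ fill the interval $(k_0\inf I,\,k_0\sup I)$, which is longer than the period $2\pi/|a'+1|$, so it contains a solution $v$, and one may simply take $t_i=v/k_0\in I$ for all $i$ --- a single division-and-rounding, done to accuracy $\epsilon$ in time polylogarithmic in $1/\epsilon$. Since $\rho$ is bounded on $\overline I$, the constant $c=\sqrt R$ is bounded above and below by constants depending only on $a'$, so accuracy $\epsilon$ in $\theta$ yields the target matrix to accuracy $O(\epsilon)$; this is the sense of ``efficiently computable'' required. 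The step I expect to be the main obstacle is the factorization together with its case check --- verifying that a usable interval $I$ exists with $\rho$ bounded, and handling the $\rho<0$ sign cancellation --- since once that is in place the covering and efficiency arguments are routine; and it is precisely at the factorization that the Case 3 hypothesis $a'\neq\pm1,-3$ is used, as $a'+1\neq0$ keeps the argument moving while $1-a'\neq0$ and $3+a'\neq0$ keep $\rho$ nondegenerate.
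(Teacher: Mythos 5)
Your proof is correct and follows essentially the same plan as the paper's: show that $\arg r(t)$ is a nonconstant linear function of $t$ on a short interval whose length depends only on $a'$, and then cover the full circle by a product of a constant number of $N$ gadgets, with all times efficiently computable. The one noteworthy difference is bookkeeping: your factorization $r(t)=e^{i(a'+1)t}\rho(t)$ with $\rho(t)=-\sin\!\big(\tfrac{(1-a')t}{2}\big)/\sin\!\big(\tfrac{(3+a')t}{2}\big)$ real exposes the angular speed $a'+1$ in one line, whereas the paper expands $\mathrm{Phase}(1-e^{i\phi})$ separately for numerator and denominator and re-parametrizes via $t'=(a'-1)t/2$ and $R$; your route is cleaner and less error-prone (with the paper's stated $R=(3+a')/(1-a')$ one has $Rt'=-(3+a')t/2$, so the substitution as written has a sign slip, though it does not affect the conclusion since only $R\neq 0,-1$ is used). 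Likewise your observation that one may take all $t_i$ equal, so that efficiency reduces to a single linear congruence, is a sharper rendering of the paper's remark that computing the $t_i$ ``just requires simple addition.''
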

\begin{proof}
To see this, consider the expression for the ratio
\[r(t) = \frac{  e^{it} -e^{ia't} }{e^{id't} -e^{it}} = -\frac{  1 -e^{i(a'-1)t} }{1-e^{i(d'-1)t}}.\]
Let $\text{Phase}(c)$ denote the phase of $c$ modulo $2\pi$. Then by direct calculation we have that
\begin{align*}
\text{Phase}(r(t)) &= \pi+\text{Phase}\left(\frac{  1 -e^{i(a'-1)t} }{1-e^{i(-3-a')t}}\right) \\
&= \pi + \text{Phase}\left(  1 -e^{i(a'-1)t} \right) - \text{Phase}\left(1 -e^{i(-3-a')t} \right) \\
&=\pi + \text{Phase}\left(  1 -e^{i(a'-1)t} \right) + \text{Phase}\left(1 -e^{i(3+a')t} \right) \\
&=\left( \pi +\left( \frac{(a'-1)t}{2}\mod \pi\right) +  \left(\frac{(3+a')t}{2}\mod \pi\right) \right) \mod 2\pi \\
&=\left( \pi +\left(t'\mod \pi\right) +  \left(Rt'\mod \pi\right) \right) \mod 2\pi
\end{align*}
Where $t'=(a'-1)t/2$ and $R=\frac{(3+a')}{(1-a')}$. 
Since we are in the case that $a'\neq \pm1, -3$, we are promised that $R$ is well-defined and $R\neq 0, 1$. 
Also note that we cannot have that $R=-1$ because this would imply $3=-1$, a contradiction.

Suppose $R>0$ (an analogous proof holds for $R<0$). Then for $t'\in[0, \min(\pi, \pi/R)]$, we know that $\text{Phase}(r(t')) = \pi + (R+1)t'$, because in this range $t'$ is sufficiently small such that both $t'\mod\pi=t'$ and $Rt' \mod \pi = Rt'$. 
Hence using $t'$ in this interval, we can achieve any phase in $(\pi,\pi+s)$ where $s=(R+1)\min(\pi,\pi/R)$. For any $R\neq 0,-1$ this range is of constant size. 
Thus by multiplying together $1/s$  phases in the range $(\pi,\pi+s)$, one can achieve any phase in $(0,2\pi)$, as desired.

Note that this construction is manifestly efficient; the $t_i$'s are upper bounded by a constant $\min(\pi,\pi/R)$ which is a function of $H$ only, and computing them to polynomially many digits requires polynomial time, as it just requires simple addition.

\end{proof}

\begin{clm}\label{clm:norm}For any $c\in\mathbb{R}^+-\{1\}$, there exists a finite sequence $s_1,s_2,...s_k$ such that 
\[N(s_1)N(s_2)...N(s_k) = \left(\begin{matrix} 1/c & 0 \\ 0 & c\end{matrix}\right)\]
\end{clm}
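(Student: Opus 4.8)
The plan is to reduce the statement to a one-dimensional fact about the modulus of the ratio $r(t)$ and then settle it by a short analysis of the zeros of the two sine factors in $|r(t)|$. Each $N(s)$ is a diagonal element of $SL(2,\mathbb{C})$: writing $N(s)=\diag(\lambda(s),\lambda(s)^{-1})$ one has $\lambda(s)^2=r(s)=\frac{e^{is}-e^{ia's}}{e^{id's}-e^{is}}$, and a product $N(s_1)\cdots N(s_k)$ is diagonal with upper-left entry $\prod_i\lambda(s_i)$. So it suffices, given the target $c\in\mathbb{R}^+$, to produce a finite sequence of times with $\prod_i\lambda(s_i)=1/c$. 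To pin down a genuinely \emph{real, positively oriented} stretch I would use the standing assumption that $-H$, hence the gadget $N(-t)$, is available: since $a',d'$ are real, $r(-t)=\overline{r(t)}$, so $r(t)r(-t)=|r(t)|^2\in\mathbb{R}^+$, and squaring removes the sign ambiguity in the square-root branch defining $\lambda$, giving $\bigl(N(t)N(-t)\bigr)^2=\diag\!\bigl(|r(t)|^2,|r(t)|^{-2}\bigr)$ exactly. Thus a length-four block realizes $\diag(\rho,\rho^{-1})$ for every $\rho$ in $R:=\{\,|r(t)|^2:\ N(t)\ \text{defined}\,\}$, and concatenating blocks realizes it for every $\rho$ in the multiplicative subsemigroup $\langle R\rangle$ of $\mathbb{R}^+$. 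So the claim follows once $1/c\in\langle R\rangle$, and it is enough to show $\langle R\rangle=\mathbb{R}^+$.

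To analyze $R$, use $d'=-2-a'$ and $|e^{i\phi}-e^{i\psi}|=2|\sin((\phi-\psi)/2)|$ to get $|r(t)|^2=\sin^2\!\bigl((a'-1)t/2\bigr)/\sin^2\!\bigl((3+a')t/2\bigr)$. Since $a'\neq\pm1,-3$ both factors are nontrivial, and since $a'\neq-1$ (one of the excluded values) this is a non-constant real-analytic function, so $R$ is a union of non-degenerate intervals. Next I would show $R$ meets both $(0,1)$ and $(1,\infty)$: as $t$ tends to a zero of the numerator that is not a zero of the denominator $|r(t)|^2\to0$, as $t$ tends to a zero of the denominator that is not a zero of the numerator $|r(t)|^2\to\infty$, and as $t\to0$ it tends to $\mu^2:=(a'-1)^2/(3+a')^2\neq1$. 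A short case split — on whether one of the two arithmetic progressions of zeros, $\{2\pi j/|a'-1|\}$ and $\{2\pi k/|3+a'|\}$, is an integer rescaling of the other, which (as $a'\neq-1$) cannot happen in both directions at once — shows that in every allowed case either a ``pure'' numerator zero exists, forcing values of $R$ near $0$, or $\mu^2<1$, and likewise either a ``pure'' denominator zero exists, forcing arbitrarily large values, or $\mu^2>1$; so $R$ contains a value $<1$ and a value $>1$. Finally, taking a component $(a_1,b_1)\subseteq R$ with $a_1<1$ and a component $(a_2,b_2)\subseteq R$ with $b_2>1$: for each fixed $j$ the intervals $(a_1,b_1)^i(a_2,b_2)^j$, $i=0,1,2,\dots$, form an overlapping chain with infimum $0$, and letting $j$ grow (using $b_2>1$) exhausts all of $\mathbb{R}^+$. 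Hence $\langle R\rangle=\mathbb{R}^+$, so some finite product of length-four $N$-blocks equals $\diag(1/c,c)$.

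Every quantity above — the times used and the number of blocks — depends only on $a'$, hence only on $H$, and is bounded by a constant; near each zero/pole $|r(t)|^2$ is strictly monotone, so the required times can be found to accuracy $\epsilon$ by binary search in $\polylog(1/\epsilon)$ time, which gives ``efficiently computable'' in the sense of Case 3. The step I expect to be the real obstacle is the analysis of $R$: the gadget $N(t)$ only hands us a \emph{complex} number $r(t)$ whose phase we do not control here, so one must both extract an honest positive real stretch (the $r(-t)=\overline{r(t)}$ trick together with squaring) and then verify that finite \emph{products} of the attainable moduli fill all of $\mathbb{R}^+$ — the delicate point being the seemingly degenerate subcases where $|r(t)|^2$ is globally bounded above (a Chebyshev identity, when $|a'-1|$ is an integer multiple of $|3+a'|$) or bounded away from $0$, which is exactly where the hypotheses $a'\neq\pm1,-3$ are needed.
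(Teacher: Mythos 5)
Your proof is correct and rests on the same key trick as the paper: pairing $N(t)$ with $N(-t)$ so that the ratio $r(t)\,r(-t)=\lvert r(t)\rvert^2$ is a positive real, and then realizing any target by multiplying such blocks. Where you diverge is in showing that the attainable moduli generate all of $\mathbb{R}^+$, and your route is more circuitous than necessary. You establish only that the attainable set meets both $(0,1)$ and $(1,\infty)$ — via the case split on whether $(3+a')/(a'-1)$ or its reciprocal is an integer — and then run a two-interval multiplicative semigroup argument. The paper instead notices that the limit $(a'-1)^2/(3+a')^2\neq 1$ as $t\to 0^+$ and the \emph{first} zero or pole of $\lvert r\rvert^2$ lie in the same connected component of the domain, so the intermediate value theorem immediately yields a single interval straddling $1$ in the range (namely $(R^{-2},\infty)$ when $\lvert R\rvert>1$ or $(0,R^{-2})$ when $\lvert R\rvert<1$, with $R=(3+a')/(1-a')$). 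From one open interval around $1$, a product of $O(\log c)$ values directly hits any $c^2$, which also gives the efficiency bound with less bookkeeping than your nested-chain argument. Two smaller remarks: the extra squaring $(N(t)N(-t))^2$ to kill the $\pm$ is harmless but unnecessary, since these gadgets act projectively and the paper simply discards the overall sign; and your binary-search claim for computing the times is fine (the paper uses Newton's method, to the same $\polylog(1/\epsilon)$ effect).
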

\begin{proof}
Consider products of matrices of the form $N(s)N(-s)$ for $s\in\mathbb{R}+$. 
Let $f(s)=r(s)r(-s)$. One can check by direct calculation that
\[f(s) = \frac{1-\cos((1-a')s)}{1-\cos((3+a')s)}\]
In other words, the product of the ratios is real and positive, hence the resulting matrix $N(s)N(-s)$ is of the form $\left(\begin{matrix} 1/\ell & 0 \\ 0 & \ell\end{matrix}\right)$ for some $\ell\in \mathbb{R}^+$. 
Note since we are in the case $a'\neq \pm1, -3$ this ratio is well-defined.

If we redefine $s'=s/(1-a')$, and set $R=(1-a')/(3+a')$, then this ratio becomes
\[\frac{1-\cos s'}{1 - \cos Rs'}
\]
We know $R\neq 0, 1$ because we have $a'\neq \pm1, 3$, and furthermore $R\neq -1$ as well, since this would imply $1=-3$, a contradiction.

For clarity of explanation assume $R>0$; an analogous proof holds for the case $R<0$.
 
Next we claim that the range of $f(s)$ as $s$ varies over $R$ includes the interval \[(\min(R^{-2}, R^2),\max(R^{-2},R^2)).\] 
Since $R\neq1$ this is an interval of constant size around $1$. 
To see this, we will break into two cases.

First, assume $R>1$. Consider the value of this function when $s'\in(0, \pi/R)$. The function $f(s')$ in continuous in this range. Additionally $\lim_{s'\rightarrow 0} f(s') = 1/R^2$ by L'H\^{o}pital's rule, and $\lim_{s'\rightarrow \pi/R} = +\infty$. 
Hence the range of $f$ covers $(R^{-2}, +\infty) = (\min(R^{-2},R^2),+\infty)$ by the mean value theorem.

Next, assume $0<R<1$. Now consider the value of the function when $s'\in(0,\pi)$. Again the function is continuous in this range, and we have $\lim_{s'\rightarrow 0} f(s') = 1/R^2$ by L'H\^{o}pital's rule, and $\lim_{s'\rightarrow \pi} =0$. 
Hence the range of $f$ covers $(0,R^{-2}) = (0,\max(R^{-2},R^2))$ by the mean value theorem.

Hence in either case, by choosing an appropriate value of $s'$, we can set $f(s)$ to be any real value in a finite-length interval containing $1$. 
Hence for any target ratio $c^2\in\mathbb{R}^+$, one can take a finite product of $O(\log(c))$ values of $f(s)$ such that $f(s_1)f(s_2)...f(s_k)=c^2$. This implies the claim.

Note that this construction is efficient. First, the times $s_i$ are upper bounded by $\min(\pi,\pi/R)$, which is a constant which depends on the Hamiltonian $H$ only. Second, to compute each individual time $s_i$, one simply needs to solve the problem 
\[
\frac{1-\cos s'}{1 - \cos Rs'} = k
\]
For some $k\in(\min(R^{-2}, R^2),\max(R^{-2},R^2))$ and $s'$ in $(0,\min(\pi,\pi/R))$. In the region of $s$ where the value of this function is between $\min(R^{-2}, R^2)$ and $\max(R^{-2},R^2))$ , the derivatives of this function are bounded by a function of $R$ only. 
Furthermore, the derivatives of these terms are computable to accuracy $\epsilon$ in time poly$\log(1/\epsilon)$ time using the Taylor series for sine and cosine. 
Hence Newton's method can be used to solve this problem, and will achieve quadratic convergence, i.e. for each step you run Newton's method, the error is squared, and the number of digits of accuracy achieved doubles. 
Hence one can compute each time $t_i$ to accuracy $\epsilon$ in poly$\log(1/\epsilon)$ time as desired.
Furthermore, since inverting any particular $L$ only requires inverting some fixed $c\in\mathbb{R}^+$ using Claim \ref{clm:norm}, an error $\epsilon$ in an individual $N(s_i)$ matrices contributes $c\epsilon$ error to the operator norm\footnote{This is because for non-unitary matrices, the norm of the singular values are not one. 
Hence when considering the product $AB$, where $\lambda_{max}$ is the largest singular value of $A$, an $\epsilon$ error in $B$ will induce an $\lambda_{max}\epsilon$ error in AB.} of $N(s_1)...N(s_k)$, and hence $c\epsilon$ error to the operator norm of $L^{-1}$. 
Hence this construction is ``computationally efficient" for each fixed $L$ as defined previously.

\end{proof}

This completes the proof in Case 3 and hence the entire proof.

\end{proof}

\section{Showing density in $SL(2,\mathbb{C})$}\label{app:density}

We now prove Claim \ref{density}.

\begin{proof}[Proof of Claim \ref{density}]
To show that $S=SL(2,\mathbb{C})$, we will first show that $S$ is a group, and then show $S$ is a Lie group.
\begin{clm} $S$ is a group.
\end{clm}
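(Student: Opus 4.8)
The plan is to recognize this as the standard topological-group fact that the closure of a subgroup is again a subgroup, and then to check that its (easy) hypotheses hold in our setting. First I would record that the base set $A \triangleq \{L(t) : t\in(0,\pi)\cup(\pi,2\pi)\} \cup \{L(t)^{-1} : t\in(0,\pi)\cup(\pi,2\pi)\}$ is nonempty and \emph{symmetric} (closed under inversion) — this is precisely what Claim~\ref{clm:inverses} buys us. Consequently $\langle A\rangle$, the set of all finite products of elements of $A$, is genuinely a subgroup of $SL(2,\mathbb{C})$: it is closed under products by definition, it contains the identity (the empty product, or $aa^{-1}$ for any $a\in A$), and it is closed under inversion because $(a_1\cdots a_k)^{-1} = a_k^{-1}\cdots a_1^{-1}$ with each $a_i^{-1}\in A$.

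Next I would invoke that $SL(2,\mathbb{C})$ is a topological group: the entries of a product of matrices are polynomials in the entries of the factors, and by Cramer's rule the entries of $g^{-1}$ are polynomials in the entries of $g$ divided by $\det g = 1$, so the map $\mu\colon SL(2,\mathbb{C})\times SL(2,\mathbb{C})\to SL(2,\mathbb{C})$, $\mu(x,y)=xy^{-1}$, is continuous. The core step is then: for any subgroup $H\le SL(2,\mathbb{C})$ we have $\mu(\overline H\times\overline H)\subseteq\overline H$. Indeed, $\overline H\times\overline H = \overline{H\times H}$ in the product topology, and continuity of $\mu$ gives $\mu(\overline{H\times H})\subseteq\overline{\mu(H\times H)}\subseteq\overline H$, using $\mu(H\times H)\subseteq H$; since $\overline H$ is nonempty it is therefore a subgroup. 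Applying this with $H=\langle A\rangle$ gives that $S=\overline{\langle A\rangle}$ is a subgroup of $SL(2,\mathbb{C})$, i.e.\ a group.

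The one point deserving a sentence of care is the meaning of the bar: $S$ is defined to contain only those limits of sequences in $\langle A\rangle$ that actually lie in $SL(2,\mathbb{C})$. But $SL(2,\mathbb{C})$ is a closed subset of the matrix algebra $M_2(\mathbb{C})$, being $\det^{-1}(1)$ for the continuous function $\det$; hence any limit in $M_2(\mathbb{C})$ of determinant-one matrices automatically has determinant one, so the closure ``within $SL(2,\mathbb{C})$'' coincides with the ordinary closure in $M_2(\mathbb{C})$ and no limit points are discarded. I do not expect a real obstacle here — the argument is soft and standard; the only mild bookkeeping is disentangling ``$\langle A\rangle$'' (finite products) from the closure, and ``closure in $SL(2,\mathbb{C})$'' from ``closure in $M_2(\mathbb{C})$'', both handled above. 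The substantive content of the surrounding argument — that $S$ is not merely a group but a Lie group, and ultimately all of $SL(2,\mathbb{C})$ — is deferred to Cartan's closed-subgroup theorem and the explicit Lie-algebra computation of Claim~\ref{density}, not to this step.
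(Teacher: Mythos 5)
Your argument is correct and in essence the same as the paper's: both reduce to continuity of the group operations of $SL(2,\mathbb{C})$ in the matrix entries. The paper makes this concrete by writing out the inverse of a determinant-one $2\times 2$ matrix entrywise, which is a special case of your observation that $\mu(x,y)=xy^{-1}$ is continuous via Cramer's rule.

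Two small points where you are actually a bit more careful than the printed proof. First, the paper only argues closure of $S$ under inversion, leaving closure under products tacit; your formulation with $\mu(\overline H\times\overline H)\subseteq\overline H$ handles both at once and is the cleaner way to state it. Second, you note correctly that $SL(2,\mathbb{C})=\det^{-1}(1)$ is closed in $M_2(\mathbb{C})$, so the closure taken ``within $SL(2,\mathbb{C})$'' coincides with the ordinary closure — no determinant-one limit point is ever discarded. The paper's closing remark (``it is critical that we've taken the closure in $SL(2,\mathbb{C})$; if we took the closure in the set of $2\times 2$ complex matrices, this would not necessarily be true'') is therefore a red herring: the two closures agree, and the real reason the argument works is precisely the one you give, that $\det$ is continuous so limits of determinant-one matrices have determinant one and hence have inverses. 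Your proof is complete and, if anything, tidies up a slightly confusing aside in the original.
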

\begin{proof}
Clearly, if we only took finite products of these elements, the resulting set of matrices would be a group, because we have the inverses of every element in the generating set. So what we need to show is that taking the closure of this set of matrices still yields a group. To see this, suppose that some element $s\in S\subseteq SL(2,\mathbb{C})$  is the limit of a sequence $L_1, L_2, \ldots$ where each $L_i$ is a finite product of element of the form $L(D(t_1,t_2))$, and $\lim_{i\rightarrow \infty} L_i =s$. Now consider the sequence $L_1^{-1},L_2^{-1}, \ldots$. We claim that $\lim_{i \rightarrow \infty} L_i^{-1} = s^{-1}$. To see this, simply note that for a $2\times 2$ matrix $\left(\begin{smallmatrix}a & b \\ c & d\end{smallmatrix}\right)\in SL(2,\mathbb{C})$, its inverse is given by $\left(\begin{smallmatrix}d & -b \\ -c & a\end{smallmatrix}\right)$. Since the limit point $s$ exists in $SL(2,\mathbb{C})$, the limit of each matrix entry of the $L_i$'s must converge as well to the entries of $s$. Hence the entries of the sequence $L_i^{-1}$ converges to the entries of $s^{-1}$. 
\end{proof}
Note that it is critical that we've taken the closure in $SL(2,\mathbb{C})$; if we took the closure in the set of $2\times 2$ complex matrices, this would not necessarily be true.

We have now established that $S$ is a group. Furthermore, $S$ is a closed subgroup of $SL(2,\mathbb{C})$ by construction, and $SL(2,\mathbb{C})$ is a Lie group. We now invoke a well-known theorem from Lie theory.
\begin{theorem}[Cartan's Theorem \cite{Cartan1952} or the Closed Subgroup Theorem] Any closed subgroup of a Lie group is a Lie group.
\end{theorem}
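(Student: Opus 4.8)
The plan is to follow the standard proof of the closed subgroup theorem (classically attributed to von Neumann for matrix groups and to Cartan in general): produce an explicit candidate Lie algebra for the subgroup, use the exponential map to build a submanifold chart at the identity, and then translate that chart around the group. Let $G$ be the ambient Lie group with Lie algebra $\mathfrak{g}$ and exponential map $\exp\colon\mathfrak{g}\to G$, and let $H\subseteq G$ be the given subgroup, which is \emph{topologically} closed in $G$; that topological closedness is what the whole argument rests on.

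First I would set $\mathfrak{h}\triangleq\{X\in\mathfrak{g}:\exp(tX)\in H\text{ for all }t\in\mathbb{R}\}$ and check it is a Lie subalgebra of $\mathfrak{g}$. Closure under scalar multiples is immediate from the definition; closure under addition and under the bracket follows from the Lie product formula $\exp(X+Y)=\lim_n(\exp(X/n)\exp(Y/n))^n$ and the commutator formula $\exp([X,Y])=\lim_n(\exp(X/n)\exp(Y/n)\exp(-X/n)\exp(-Y/n))^{n^2}$, since every term of the approximating sequence lies in $H$ and hence the limit lies in $\overline{H}=H$. The technical heart of the proof is then the approximation lemma: if $X_j\in\mathfrak{g}\setminus\{0\}$ with $X_j\to0$, $\exp(X_j)\in H$, and $X_j/\lVert X_j\rVert\to Y$, then $Y\in\mathfrak{h}$. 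To prove it I would fix $t\in\mathbb{R}$, put $m_j\triangleq\lfloor t/\lVert X_j\rVert\rfloor$, observe $m_jX_j\to tY$ and hence $\exp(X_j)^{m_j}=\exp(m_jX_j)\to\exp(tY)$; each $\exp(X_j)^{m_j}$ lies in $H$, so $\exp(tY)\in H$ by closedness, and letting $t$ vary gives $Y\in\mathfrak{h}$.

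Next I would fix a complementary subspace $\mathfrak{m}$ with $\mathfrak{g}=\mathfrak{h}\oplus\mathfrak{m}$ and consider $\Phi(X,Y)=\exp(X)\exp(Y)$ on $\mathfrak{h}\times\mathfrak{m}$; since $d\Phi_0=\mathrm{id}$, $\Phi$ is a diffeomorphism from a neighborhood of $0$ onto a neighborhood $V$ of $e$ in $G$. The key claim is that there is a neighborhood $W$ of $0$ in $\mathfrak{m}$ with $\exp(Y)\notin H$ for all $Y\in W\setminus\{0\}$. If not, there are $Y_j\in\mathfrak{m}\setminus\{0\}$ with $Y_j\to0$ and $\exp(Y_j)\in H$; by compactness of the unit sphere in $\mathfrak{m}$ we may pass to a subsequence with $Y_j/\lVert Y_j\rVert\to Y$, $\lVert Y\rVert=1$, $Y\in\mathfrak{m}$, and the approximation lemma forces $Y\in\mathfrak{h}$, contradicting $\mathfrak{h}\cap\mathfrak{m}=\{0\}$. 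Shrinking $V$ accordingly, $\Phi$ identifies $H\cap V$ with $\exp(\mathfrak{h})$ near $0$, an embedded submanifold of $G$ through $e$ of dimension $\dim\mathfrak{h}$.

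Finally I would propagate the chart by group translation: for $h\in H$, left multiplication $L_h$ is a diffeomorphism of $G$ taking $H$ to $H$ and $V$ to a neighborhood of $h$, so $H$ is an embedded submanifold near each of its points, hence an embedded (and closed) submanifold of $G$. Multiplication $H\times H\to H$ and inversion $H\to H$ are restrictions of the smooth group operations of $G$ with image inside the submanifold $H$, so they are smooth, and $H$ with the subspace topology is a Lie group. I expect the only real obstacles to be the approximation lemma together with the compactness argument ruling out directions transverse to $\mathfrak{h}$, and the care needed in using \emph{topological} closedness of $H$ in $G$ exactly at the limit steps — the rest is routine bookkeeping with the exponential chart.
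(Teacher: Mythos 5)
The paper does not prove this theorem; it simply cites Cartan's closed subgroup theorem as a standard result from the Lie theory literature (\cite{Cartan1952}), and the claim that $S$ is a Lie group is a direct corollary of it. So there is no proof in the paper to compare against. Your argument is the standard textbook proof (von Neumann for matrix groups, Cartan in general) and is correct in its essentials: define $\mathfrak{h}=\{X:\exp(tX)\in H\ \forall t\}$, prove the sequential approximation lemma, build a slice chart from $\Phi(X,Y)=\exp(X)\exp(Y)$ on $\mathfrak{h}\oplus\mathfrak{m}$, rule out transverse directions by compactness of the unit sphere in $\mathfrak{m}$, and translate the chart around $H$.

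One small step you glossed over: the commutator formula gives $\exp([X,Y])\in H$, but membership of $[X,Y]$ in $\mathfrak{h}$ requires $\exp(t[X,Y])\in H$ for \emph{every} $t\in\mathbb{R}$. This follows by replacing $X$ with $tX$ (legitimate since $\mathfrak{h}$ is already shown closed under real scalars) and using $[tX,Y]=t[X,Y]$; similarly the floor $m_j=\lfloor t/\lVert X_j\rVert\rfloor$ handles negative $t$ only because $H$ is a group so negative powers of $\exp(X_j)$ stay in $H$. Both are routine but worth making explicit. Beyond that, the proof is complete and correct, and it accurately identifies the two places where topological closedness of $H$ is actually used (the limit step in the approximation lemma and the limit step in the algebra-closure arguments).
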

\begin{corollary}$S$ is a Lie group. \end{corollary}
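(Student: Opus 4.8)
The plan is to apply Cartan's closed subgroup theorem (just stated) directly, so the proof reduces to verifying its three hypotheses for the pair $(S, SL(2,\mathbb{C}))$. First I would recall why $SL(2,\mathbb{C})$ is a Lie group: it is the preimage of the regular value $1$ under the smooth map $\det\colon GL(2,\mathbb{C})\to\mathbb{C}^\times$, hence a closed embedded submanifold of $GL(2,\mathbb{C})$ on which multiplication and inversion are smooth. Second, I would invoke the preceding claim, which establishes that $S$ is a subgroup of $SL(2,\mathbb{C})$; the only delicate point there was that taking the closure does not destroy closure under inversion, and that was handled using the explicit formula $\left(\begin{smallmatrix} a & b \\ c & d\end{smallmatrix}\right)^{-1} = \left(\begin{smallmatrix} d & -b \\ -c & a\end{smallmatrix}\right)$ on $SL(2,\mathbb{C})$ together with continuity of the matrix entries. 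Third, $S$ is closed as a subset of $SL(2,\mathbb{C})$ essentially by fiat: it is defined as a closure taken inside $SL(2,\mathbb{C})$, equipped with the subspace topology it inherits as a manifold (equivalently, the subspace topology from $\mathbb{C}^{2\times 2}$ restricted to the locus $\det = 1$).

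With all three hypotheses in hand, Cartan's theorem immediately yields that $S$ is an embedded Lie subgroup of $SL(2,\mathbb{C})$, which is exactly the assertion of the corollary. I do not anticipate any genuine obstacle: the mathematical content has already been absorbed into the two facts proved just above, namely that $S$ is a group and that it is closed in $SL(2,\mathbb{C})$. The one point that merits a sentence of care is purely topological bookkeeping — one must read ``closed'' in the manifold topology of $SL(2,\mathbb{C})$ rather than in $\mathbb{C}^{2\times 2}$, since $SL(2,\mathbb{C})$ is noncompact and a convergent sequence of determinant-one matrices in $\mathbb{C}^{2\times 2}$ need not have its limit in $SL(2,\mathbb{C})$; but since $S$ is by construction the set of those limits of finite products of generators that actually land in $SL(2,\mathbb{C})$, it is tautologically closed in the relevant topology, so Cartan's theorem applies verbatim.
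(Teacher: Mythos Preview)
Your proposal is correct and follows exactly the paper's approach: the corollary is stated immediately after Cartan's closed subgroup theorem and the claim that $S$ is a group, and is meant to follow directly from these two facts together with the observation that $S$ is closed in $SL(2,\mathbb{C})$ by construction. The paper gives no further argument, so your write-up simply spells out in more detail what the paper leaves implicit.
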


Now that we know $S$ is a Lie group, we can use facts from Lie theory to show $S=SL(2,\mathbb{C})$.
We will summarize the basics here, but a more complete treatment can be found in e.g.\ \cite{Hall} or a more advanced textbook on Lie groups.

 A Lie group is a continuous manifold which is also a group, for which the group operations are smooth. 
In this work we will only consider matrix groups, i.e. continuous groups of complex matrices. 
For any Lie group $G$, one can define the Lie algebra of $G$, denoted $\Lie(G)$, to be the tangent space to the group $G$ at the identity.
 More concretely, suppose that you have a smooth path $\gamma(t):\mathbb{R} \rightarrow G\subseteq GL(n,\mathbb{C})$ in $G$, such that $\gamma(0)=I$. 
Then the matrix $\frac{\partial}{\partial t}\gamma(t)\Bigr|_{t=0}$ belongs to the tangent space of $G$ at the identity. 
One can show that $\Lie(G)$ obeys the following properties \cite{Hall}:
\begin{enumerate}
\item $\mathfrak{g}$ is a real vector space, i.e. $g_1, g_2\in\mathfrak{g} \Rightarrow ag_2+bg_2\in\mathfrak{g}$ for any $a,b\in\mathbb{R}$.
\item $\mathfrak{g}$ is closed under commutators, i.e.  $g_1 , g_2\in\mathfrak{g} \Rightarrow [g_1,g_2]\triangleq g_1g_2-g_2g_1 \in\mathfrak{g}$ for any $a,b\in\mathbb{R}$.
\item \label{lie:exp} Let $\exp(A) = I + A + \frac{A^2}{2} + \frac{A^3}{6} +\ldots + \frac{A^n}{n!}+\ldots$. Then we have that for all $g\in\mathfrak{g}$, $\exp(g)\in G$. In other words, the function $\exp$ maps from the Lie algebra into the Lie group. 
\item \label{lie:conj} $\mathfrak{g}$ is closed under taking commutators with the group $G$. That is, for any  $G_1 \in G$ and $g\in \mathfrak{g}$, we have $G_1gG_1^{-1} \in \mathfrak{g}$. 
\end{enumerate}

To show that $S=SL(2,\mathbb{C})$, we will consider $\mathfrak{g} \triangleq \Lie(S)$. We will then show that $\mathfrak{g} = \mathfrak{sl}(2,\mathbb{C})$, which is the Lie algebra of $SL(2,\mathbb{C})$, which consists of all traceless two by two complex matrices. By property \ref{lie:exp}, this implies that $\exp(\mathfrak{sl}(2,\mathbb{C})) \subseteq S$. 
From this, we will leverage the following fact:
\begin{clm} $\exp(\mathfrak{sl}(2,\mathbb{C}))$ is dense in $SL(2,\mathbb{C})$.
\end{clm}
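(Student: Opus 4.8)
The plan is to prove the slightly stronger statement that $\exp(\mathfrak{sl}(2,\mathbb{C}))$ already contains every element of $SL(2,\mathbb{C})$ whose two eigenvalues are distinct, and then to observe that this set of ``regular'' elements is dense in $SL(2,\mathbb{C})$. Since the claim only asks for density this suffices, and it sidesteps the well-known fact that $\exp\colon \mathfrak{sl}(2,\mathbb{C})\to SL(2,\mathbb{C})$ is not surjective.

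First I would dispose of the diagonalizable case. The scalar exponential $\exp\colon\mathbb{C}\to\mathbb{C}^{*}$ is surjective, since $e^{\ln r + i\phi}=re^{i\phi}$ realizes an arbitrary nonzero complex number. So given $A\in SL(2,\mathbb{C})$ written as $A = P\,\diag(\lambda,\lambda^{-1})\,P^{-1}$ for some $P\in GL(2,\mathbb{C})$ and $\lambda\in\mathbb{C}^{*}$, I would pick $\mu$ with $e^{\mu}=\lambda$ and set $M \triangleq P\,\diag(\mu,-\mu)\,P^{-1}$; this matrix is traceless, hence lies in $\mathfrak{sl}(2,\mathbb{C})$. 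Using conjugation-equivariance of the matrix exponential (the identity $e^{PXP^{-1}}=Pe^{X}P^{-1}$ already used earlier in the paper) gives $\exp(M)=P\,\diag(\lambda,\lambda^{-1})\,P^{-1}=A$. In particular every $A\in SL(2,\mathbb{C})$ with two distinct eigenvalues is diagonalizable and therefore lies in $\exp(\mathfrak{sl}(2,\mathbb{C}))$.

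Finally I would argue density. A matrix $A\in SL(2,\mathbb{C})$ has a repeated eigenvalue exactly when its characteristic polynomial $z^{2}-(\tr A)z+1$ has a double root, i.e.\ when $\tr(A)^{2}=4$; this cuts out the zero set of the non-constant regular function $A\mapsto \tr(A)^{2}-4$ on the irreducible complex manifold $SL(2,\mathbb{C})$, hence a proper analytic subvariety, which is nowhere dense. Its complement, the set of regular elements, is therefore dense, and by the previous step it is contained in $\exp(\mathfrak{sl}(2,\mathbb{C}))$, so $\exp(\mathfrak{sl}(2,\mathbb{C}))$ is dense. The only point requiring care — and the closest thing to an obstacle — is precisely that $\exp$ misses some elements: the matrices conjugate to $-\left(\begin{smallmatrix}1&1\\0&1\end{smallmatrix}\right)$ are not in the image. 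But these are harmless for density, either because they lie in the nowhere-dense set just identified, or, if one prefers to avoid algebraic-geometry language, because each such matrix is the $\epsilon\to0$ limit of the diagonalizable matrices conjugate to $\left(\begin{smallmatrix}-e^{\epsilon}&1\\0&-e^{-\epsilon}\end{smallmatrix}\right)$, all of which lie in $\exp(\mathfrak{sl}(2,\mathbb{C}))$ by the diagonalizable case, so the approximating sequence can be exhibited by hand.
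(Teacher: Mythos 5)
Your proof is correct and complete, but it takes a more self-contained route than the paper, which simply cites Hall's textbook for the fact that $\exp(\mathfrak{sl}(2,\mathbb{C}))$ equals all of $SL(2,\mathbb{C})$ except the non-central matrices of trace $-2$, and then observes this set is dense. Your argument is more economical in what it assumes: you only need the \emph{easy} half of that characterization, namely that every $A$ with distinct eigenvalues is diagonalizable and hence has a traceless logarithm $P\,\diag(\mu,-\mu)\,P^{-1}$, and you never have to determine exactly which trace-$\pm 2$ matrices are hit. Density of the regular locus then follows either from the nonvanishing of $\tr(A)^2-4$ on the irreducible variety $SL(2,\mathbb{C})$, or from your explicit deformation $\left(\begin{smallmatrix}-e^{\epsilon}&1\\0&-e^{-\epsilon}\end{smallmatrix}\right)\to -\left(\begin{smallmatrix}1&1\\0&1\end{smallmatrix}\right)$ (conjugated by a fixed $P$ to reach an arbitrary exceptional element). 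The one small thing worth making explicit is that last conjugation step: when approximating a general $P\left(-\left(\begin{smallmatrix}1&1\\0&1\end{smallmatrix}\right)\right)P^{-1}$ you should conjugate the whole approximating family by the same $P$, which preserves the distinct-eigenvalue property. Overall this is a clean, elementary replacement for the citation, and arguably preferable for a reader who does not have Hall's surjectivity/image result at hand.
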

\begin{proof} It is well known \cite{Hall} that  $\exp(\mathfrak{sl}(2,\mathbb{C}))$ contains all matrices in $SL(2,\mathbb{C})$ except matrices $A$ for which $\Tr (A)=-2$ and $A\neq -I$. This implies the claim. \qedhere
\end{proof}

Hence to prove Claim \ref{density}, it suffices to prove the following claim:

\begin{clm}\label{clm:liealg} $\mathfrak{g}\triangleq \Lie(S)$ spans $\mathfrak{sl}(2,\mathbb{C})$, i.e. all $2\times 2$ traceless matrices.
\end{clm}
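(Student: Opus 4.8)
The plan is to produce concrete elements of $\mathfrak g\triangleq\Lie(S)$ by differentiating the family $L(t)$, and then exploit the three closure properties recalled above — closure under $\mathbb R$-linear combinations, under the bracket $[\cdot,\cdot]$, and under conjugation $X\mapsto gXg^{-1}$ by elements $g\in S$ (property \ref{lie:conj}) — to manufacture six $\mathbb R$-linearly independent traceless matrices. Since $\mathfrak{sl}(2,\mathbb C)$ is a simple Lie algebra of real dimension six and $\mathfrak g$ is a real subalgebra of it, producing six independent elements forces $\mathfrak g=\mathfrak{sl}(2,\mathbb C)$, which is the claim.

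First I would observe that, for any $t$ in the allowed range, the curve $s\mapsto L(t)^{-1}L(t+s)$ lies in $S$, equals the identity at $s=0$, and is smooth, so its derivative $X(t)\triangleq L(t)^{-1}\tfrac{d}{dt}L(t)$ lies in $\mathfrak g$. Writing $e=\left(\begin{smallmatrix}0&1\\0&0\end{smallmatrix}\right)$, $f=\left(\begin{smallmatrix}0&0\\1&0\end{smallmatrix}\right)$, $h=\diag(1,-1)$, a direct computation (using $\det L(t)=1$ and $d'=-2-a'$) gives
\[
X(t)=\frac{a'+1}{2}\bigl(i-\cot 2t\bigr)\,h\;+\;u(t)\,e\;+\;v(t)\,f ,
\]
where $u(t)=-\tfrac{\beta^*}{\alpha^*}\cdot\tfrac{(3+a')\,e^{-i(1+a')t}}{2\sin 2t}$ and $v(t)=-\tfrac{\alpha^*}{\beta^*}\cdot\tfrac{(1-a')\,e^{i(1+a')t}}{2\sin 2t}$. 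In the generic case $a'\notin\{1,-1,-3\}$ — and recall that $\alpha,\beta\neq 0$ and $b+c\neq 0$, so $H$ is genuinely entangling — the coefficient of $h$ sweeps an affine line in $\mathbb C$ as $\cot 2t$ ranges over $\mathbb R$, while $u(t)$ and $v(t)$ are non-constant with non-degenerate phases. Taking finitely many $\mathbb R$-linear combinations of the $X(t)$, then a bracket of a ``raising-type'' combination with a ``lowering-type'' one to isolate a Cartan direction, and finally conjugating by suitable $L(u)$ and bracketing once more, one extracts six $\mathbb R$-independent traceless matrices (for instance a $\mathbb C$-basis of $\mathfrak{sl}(2,\mathbb C)$ together with its $i$-multiples). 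Equivalently, one checks that $\mathfrak g$ contains an element with genuinely complex eigenvalues — ruling out the real forms $\mathfrak{su}(2)$ and $\mathfrak{sl}(2,\mathbb R)$ — and that two of the $X(t)$ have no common eigenvector — ruling out every solvable (Borel-type) subalgebra — leaving $\mathfrak g=\mathfrak{sl}(2,\mathbb C)$ as the only possibility.

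It remains to treat the degenerate parameter values. When $a'=1$ every $X(t)$ is upper triangular, when $a'=-3$ every $X(t)$ is lower triangular, and when $a'=-1$ every $X(t)$ is a real multiple of the single off-diagonal matrix $K\triangleq\left(\begin{smallmatrix}0&\beta^*/\alpha^*\\\alpha^*/\beta^*&0\end{smallmatrix}\right)$; in each of these the $L(t)$ alone (with their inverses, products, and conjugations) remain in a proper subalgebra. The case $a'=-1$ with $|\alpha|=|\beta|$ is exactly the excluded Hamiltonian $X(\theta)\otimes X(\theta)$: there $L(t)=e^{-it}I+e^{it}K$ commutes with $K$, so $S$ is one-dimensional and no enlargement is possible — consistent with that case being handled by the separate hardness result of Fefferman, Foss-Feig, and Gorshkov. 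For the remaining degenerate cases ($a'=1$, $a'=-3$, and $a'=-1$ with $|\alpha|\neq|\beta|$, with $H$ still entangling) one adjoins the auxiliary postselection gadgets built in Appendix \ref{app:inverses} — in particular the diagonal phase gadgets $N(t)$, which realize a full one-parameter family of diagonal elements of $SL(2,\mathbb C)$ and are closed under inverses, so that the enlarged $S$ is still a closed subgroup and hence still a Lie group. These supply the Cartan subalgebra $\{\diag(z,-z)\}$ to $\mathfrak g$; bracketing it against the (triangular, or purely off-diagonal) $X(t)$-type generators then produces the missing raising and lowering directions, and once more $\mathfrak g=\mathfrak{sl}(2,\mathbb C)$.

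The main obstacle is the case analysis inside the two middle paragraphs: it is genuinely fiddly to verify, for every admissible triple of parameters $(a',d',\alpha/\beta)$, that the explicit generators together with a finite number of their brackets and conjugates really do span all six real dimensions, rather than collapsing into a conjugate of $\mathfrak{su}(2)$, of $\mathfrak{sl}(2,\mathbb R)$, or of a Borel subalgebra — and to confirm that the unique point at which such a collapse is forced is precisely the excluded Hamiltonian $X(\theta)\otimes X(\theta)$. Handling the degenerate values $a'\in\{1,-1,-3\}$ also requires importing exactly the right extra gadget and checking that it enlarges $\mathfrak g$ as claimed.
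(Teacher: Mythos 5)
Your proposal follows the same Lie-theoretic strategy as the paper's Appendix D: differentiate the $L$ curves to get tangent vectors, use closure of $\mathfrak g$ under $\mathbb R$-linear combinations, brackets, and conjugation, and split into cases according to $a'$. (You take $X(t)=L(t)^{-1}L'(t)$ where the paper uses $g(v)=L'(v)L(v)^{-1}$; these are conjugate, so the choice is immaterial.) Your generic-case sketch, and in particular the alternative of ruling out the real forms $\mathfrak{su}(2)$, $\mathfrak{sl}(2,\mathbb R)$, and the Borel subalgebras via the classification of subalgebras of $\mathfrak{sl}(2,\mathbb C)$, is a legitimate and arguably cleaner route than the paper's explicit linear-algebra argument, though you do not carry it out. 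Your treatment of $a'=1$ and $a'=-3$ is also essentially the paper's: the paper adjoins the diagonal one-parameter family which it denotes $P(t)\propto\diag(e^{2it},e^{-2it})$ (built from the degenerate $N$ gadget composed with the entangling gate), not $N(t)$ itself, but the idea is the same.

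The genuine gap is in your handling of $a'=-1$ with $|\alpha|\neq|\beta|$. You assert that for $a'=-1$ ``the $L(t)$ alone (with their inverses, products, and conjugations) remain in a proper subalgebra'' and propose to repair this by adjoining the diagonal gadgets $N(t)$. Both claims are wrong. First, conjugation \emph{does} help here: the paper's Case 3, Subcase A, shows that $T(s,v)=L(s)\,g(v)\,L(s)^{-1}$ has diagonal entries proportional to $|\beta|^4-|\alpha|^4$, which are nonvanishing exactly when $|\alpha|\neq|\beta|$, and that varying $s$ makes the conjugates (together with $g(v)$ itself) span all of $\mathfrak{sl}(2,\mathbb C)$; the only obstruction to spanning occurs when $|\alpha|=|\beta|$, which is precisely the excluded $X(\theta)\otimes X(\theta)$ Hamiltonian. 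Second, your proposed repair cannot work, because when $a'=d'=-1$ the gadget
\[
N(t)=\frac{1}{\sqrt{(e^{it}-e^{-it})(e^{-it}-e^{it})}}\begin{pmatrix}e^{it}-e^{-it}&0\\0&e^{-it}-e^{it}\end{pmatrix}
=\pm\begin{pmatrix}i&0\\0&-i\end{pmatrix}
\]
is \emph{constant} in $t$; it is a single element of $SL(2,\mathbb C)$, not a one-parameter family, so it contributes nothing new to the Lie algebra. (The one-parameter diagonal family $P(t)$ used in the paper for $a'=\pm\{1,-3\}$ is built from the projector $N(\pi/4)=\left(\begin{smallmatrix}0&0\\0&1\end{smallmatrix}\right)$, which only arises when $a'=1$ or $a'=-3$; there is no analogue at $a'=-1$.) So for $a'=-1$, $|\alpha|\neq|\beta|$ you must use conjugation, as the paper does, and your current argument leaves that case unproved.

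Two minor remarks. The triangularity you assert for $a'=1$ versus $a'=-3$ is swapped relative to the paper's $g(v)$; this is harmless because $X$ and $g$ are conjugate, but it is worth sorting out which convention you are using before filling in the ``fiddly case analysis'' you flag at the end. Also, in the generic case you should make explicit that the nonzero conditions on the entries of $X(t)$ really do follow from $a'\notin\{1,-1,-3\}$ together with $\alpha,\beta\neq 0$ — the paper's Case 1 does the requisite bookkeeping (introducing $k_1,k_2,k_3$ and verifying the relevant phases are nondegenerate), and your sketch currently relies on this without verifying it.
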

\begin{proof}

Consider elements of the form 
\[
M(t,s) \triangleq L(t) L(s)^{-1}.
\]
As $t,s$ vary over $(0,\pi) \cup (\pi,2\pi)$, these form continuous paths within $S$. 
In particular, at the point where $s=t$, this path passes through the identity. 
Now consider 
\[
g(v) \triangleq \frac{\partial}{\partial t} \left[M(t,s)\right] \Bigr|_{\substack{s=t=v}}
\]
These are tangent vectors to paths in $S$, evaluated as they pass through the identity. 
Hence we have that $g(v) \in \mathfrak{g}$ for all $v \in  (0,\pi) \cup (\pi,2\pi)$. 
By direct calculation, one can show that
\[
g(v) = -\frac{1}{2\sin(2v)}\left(\begin{matrix} (a'+1)e^{-2iv} & \frac{\alpha}{\beta} (1-a') e^{i(1+a')v} \\ \frac{\beta}{\alpha} (3+a')e^{i(-1-a')v} & -(a'+1)e^{-2iv}  \end{matrix}\right)
\]
where we have simplified using the fact that $d'=-2-a'$. 

We will now break into cases to show that these matrices span the entire Lie algebra. We begin with the generic case and then give the special cases. In the special cases, we will also add additional postselection gadgets to our model in order to get single-qubit transformations which span all traceless matrices. 
The gadgets introduced are inherently closed under taking inverses. So this simply reflects that for very particular Hamiltonians, our $L$ matrices need additional help to span all $1$-qubit operations.

\textbf{Case 1}: $a'\neq \pm1,-3$

In this case all of the entries of $g(v)$ are non-zero. 
\[
g(v) = -\frac{1}{2\sin(2v)}\left(\begin{matrix} (a'+1)e^{-2iv} & \frac{\alpha}{\beta} (1-a') e^{i(1+a')v} \\ \frac{\beta}{\alpha} (3+a')e^{i(-1-a')v} & -(a'+1)e^{-2iv}  \end{matrix}\right)
\]
We can therefore rewrite $g(v)$ with four non-zero parameters $k_1\in\mathbb{R}$, $k_2,k_3\in\mathbb{C}$, and using a new parameter $v'=-2v$:
\[
g(v) \propto \left(\begin{matrix} e^{v'} & k_2 e^{ik_1v'} \\ k_3e^{-ik_1v'} & -e^{iv'}  \end{matrix}\right)
\]
Here we omit real coefficients as the Lie algebra is closed under scalar multiplication by $\mathbb{R}$. The fact that $a'\neq \pm1, -3$ also implies that $k_4\neq \pm1$

Now consider the value of $g(v')$ for small values of $v'$. In particular, pick a $\theta << 1$. Then we have that 
\[
g(\pm \theta) \propto \left(\begin{matrix} (A\pm Bi) & k_2(C\pm Di) \\  k_3(C \mp Di) & -(A\pm Bi)  \end{matrix}\right)
\]
for some nonzero real coefficients $A,B,C,D\in\mathbb{R}$.
Taking the sum and difference of these matrices, we see the following are elements of the Lie algebra:
\[ \begin{matrix} \left(\begin{matrix} A & k_2C \\  k_3C & -A  \end{matrix}\right)
&
\left(\begin{matrix} Bi & k_2Di \\  -k_3Di & -Bi  \end{matrix}\right)
\end{matrix}
\]
Likewise, by considering taking the sum and difference of $g(\pm 2\theta)$, we  get there exist nonzero $A',B',C',D' \in\mathbb{R}$ such that the lie algebra contains.
\[ \begin{matrix} \left(\begin{matrix} A' & k_2C' \\  k_3C' & -A'  \end{matrix}\right)
&
\left(\begin{matrix} B'i & k_2D'i \\  -k_3D'i & -B'i  \end{matrix}\right)
\end{matrix}
\]
Furthermore, since sine and cosine are nonlinear, and $k_1\neq 0,\pm1$, the vectors $(A,C)$ and $(A',C')$ are linearly independent. Likewise the vectors $(B,D)$ and $(B',D')$ are linearly independent. Hence by taking linear combinations of these matrices, we have that any matrix of the form
\[ 
\left(\begin{matrix} E & k_2F \\  k_3F^* & -E  \end{matrix}\right)
\]
is in the Lie algebra for any $E,F\in\mathbb{C}$. Hence our Lie algebra spans at least these two complex dimensions. Now we take the closure of such matrices under commutators. Suppose $A,B,C,D\in\mathbb{C}$. We have that
\[
\left[ \left(\begin{matrix} A & k_2B \\  k_3B^* & -A  \end{matrix}\right), \left(\begin{matrix} C & k_2D \\  k_3D^* & C  \end{matrix}\right) \right] = \left(\begin{matrix} k_2k_3(BD^*-B^*D) & 2k_2(AD-BC)\\ 2k_3(B^*C-AD^*)  &  k_2k_3(B^*D-BD^*) \end{matrix}\right)
\]
Since we previously showed all traceless diagonal matrices are in the Lie algebra, this implies the following matrices are in the Lie algebra:
\[
 \left(\begin{matrix} 0 & 2k_2(AD-BC)\\ 2k_3(B^*C-AD^*)  &  0 \end{matrix}\right)
\]
By setting $A,D,B,C$ such that $(AD-BC)^*\neq (B^*C-AD^*)$, we can see that these matrices span the remaining two real dimensional space of off-diagonal matrices. Hence our Lie algebra spans all traceless matrices. This completes the proof in Case 1.

\textbf{Case 2}: $a'=1$ or $a'=-3$

We will prove the claim for $a'=1$; an analogous proof holds for $a'=-3$. (These are the Hamiltonians $\text{diag}(1,1,1,-3)$ and $\text{diag}(-3,1,1,1)$, which are identical except the role of 0 and 1 is switched.).

In this case we have that
\[
g(v) \propto \left(\begin{matrix} e^{-2iv} &  0 \\ 2\frac{\beta}{\alpha} e^{-2iv} & -e^{-2iv}  \end{matrix}\right)
\]
By evaluating $g(v)$ at $\pm\theta$ and $\pm2\theta$ for some small value of $\theta$, by the same arguments put forth in Case 1, these matrices span the space of matrices of the form
\[ \left(\begin{matrix}A+Bi &  0 \\ 2\frac{\beta}{\alpha} (A+Bi) & -A-Bi  \end{matrix}\right)\]
Where $A,B\in\mathbb{R}$ are arbitrary real parameters.

We will now use another postselection gadget, which is inherently closed under taking inverses, to boost the span of the algebra to all of $\mathfrak{sl}(2,\mathbb{C})$. This is the same gadget which appears in the construction of $L^{-1}$ in appendix \ref{app:inverses}. 
\[\Qcircuit @C=1em @R=1em {
\lstick{\ket{\psi}} & \qw       & \multigate{1}{D(t)} & \qw                          & \qw & \qw &  \ket{\psi'}   \\
\lstick{\ket{0}}    & \gate{U} & \ghost{D(t)}          &\multigate{1}{D(\pi/4)} & \gate{U^\dagger} &  \meter  \\
\lstick{\ket{0}} & \gate{U} & \qw                         & \ghost{D(\pi/4)} & \gate{U^\dagger} & \meter & \bra{1} \
}
\]
This gadget performs the operation 
\[
P(t) \propto  \left(\begin{matrix} e^{it} &0 \\ 0 & e^{-3it} \end{matrix}\right) \propto \left(\begin{matrix} e^{2it} &0 \\ 0 & e^{-2it}\end{matrix}\right)
\]
Hence its Lie algebra spans the space of traceless diagonal imaginary matrices. Combining this with the previous result, we see the Lie algebra now spans the space
\[ \left(\begin{matrix}A+Bi &  0 \\ 2\frac{\beta}{\alpha} (A+Ci) & -A-Bi  \end{matrix}\right)\]
Where $A,B,C\in\mathbb{R}$ are arbitrary real parameters.

Now consider taking commutators of such matrices; one can easily see that for $A,B,C,D,E,F\in\mathbb{R}$, 
\[
\left[ \left(\begin{matrix} A+Bi & 0 \\  2\frac{\beta}{\alpha}(A+Ci) & -A-Bi  \end{matrix}\right), \left(\begin{matrix} D+Ei & 0 \\  2\frac{\beta}{\alpha}(D+Fi) & -D-Ei  \end{matrix}\right) \right] = \left(\begin{matrix}0 & 0\\ 4\frac{\beta}{\alpha}(A+Ci)(D+Ei)  &  0 \end{matrix}\right)
\]
Hence by appropriate choice of $A,C,D,E$ these commutators span all complex values in the lower left hand corner. So our Lie algebra now spans
\[ \left(\begin{matrix}A+Bi &  0 \\  C+Di & -A-Bi  \end{matrix}\right)\]
Where $A,B,C,D\in\mathbb{R}$ are arbitrary real parameters.
In other words we span all traceless lower triangular matrices.

Next we will use the fact that the Lie algebra is closed under conjugation by the group. 
Therefore it must contain all elements of the form
\[ L(t) \left(\begin{matrix}A &  0 \\  B & -A  \end{matrix}\right) L^{-1}(t)\]
where $A,B$ are now complex parameters

Since we already span lower triangular matrices, the only relevant entry of the above matrix is the upper-right entry, as we can zero out the other entries by adding lower triangular matrices. 
This upper left entry is proportional to
\[i\left(-2\alpha\beta^*|\alpha|^2  e^{2it}A - \alpha^2\beta^{*2}e^{2it}B\right)\]
Since $\alpha$ and $\beta$ are non-zero, and setting $B=0$, we can see that by choosing $A$ we can set this value to be any complex number. Hence our Lie algebra must span
\[ L(t) \left(\begin{matrix}A &  C \\  B & -A  \end{matrix}\right) L^{-1}(t)\]
Where $A,B,C \in\mathbb{C}$, that is all of $\mathfrak{sl}(2,\mathbb{C})$, as desired. This completes the proof of Claim 2.

\textbf{Case 3:} $a'=-1$

In this case we have that 
\[g(v) = -\frac{1}{\sin(2v)}\left(\begin{matrix} 0 & \frac{\alpha}{\beta}  \\ \frac{\beta}{\alpha}  & 0 \end{matrix}\right)\]
Thus the matrices $g(v)$ span a one-dimensional space. 
Since the Lie algebra is closed under scalar multiplication by reals, the factor of $\frac{-1}{\sin(2v)}$ out front is irrelevant, and we will drop real prefactors in future calculations.

We will now use the fact the Lie algebra is closed under conjugation by the group. 
Consider matrices of the form
\[
T(s,v)=L(s)g(v)L(s)^{-1} \propto i \left(\begin{matrix} |\beta|^4 - |\alpha|^4&  |\alpha|^4 \frac{\alpha}{\beta} e^{-2is} -\alpha\beta^*|\beta|^2e^{2is} \\ \frac{\beta}{\alpha} |\beta|^4e^{-2is} - \alpha^*\beta|\alpha|^2e^{2is}& |\alpha|^4-|\beta|^4\end{matrix}\right)
\]
where the proportionality is over real scalar multiples. 
Here we have simplified using the fact we are in the case $a'=d'=-1$. 
This is well defined for any $s$ and $v$ which are not integer multiples of $\pi$.

Now we break into two subcases:

\textbf{Subcase A}: $|\alpha|^2\neq |\beta|^2$

In this case, the matrix $T(s,v)$ has a nonzero entry on the diagonals. Hence the matrix $T(s,v)$ has the form
\[
T(s,v) \propto i \left(\begin{matrix} k_1 &  k_2 e^{-2is} -k_3e^{2is} \\ k_4e^{-2is} - k_5e^{2is}& -k_1\end{matrix}\right)
\]
Where $k_1\in\mathbb{R}$ is nonzero, $k_2,k_3,k_4,k_5\in\mathbb{C}$ are nonzero. 
One can easily check that the constraint $|\alpha|^2\neq|\beta|^2$ further implies that $k_2,k_3,k_4,k_5$ have four distinct values, i.e. $k_i\neq k_j$ for any $i\neq j$, $i,j\geq 2$. 
For instance, to see that $k_2\neq k_3$, note that if $k_2=k_3$ then $|\alpha|^4\frac{\alpha}{\beta} = \alpha\beta^*|\beta|^2$, which implies $|\alpha|^4=|\beta|^4$, a contradiction.

Furthermore, one can show that there cannot exist a constant\footnote{ If this were the case,   the matrices $T(s,v)$ would only span matrices of the form $\left(\begin{matrix}Ai &  B+Ci \\  K(B+Ci) & -Ai  \end{matrix}\right) $. Fortunately this does not happen in this case.} $K$ such that $k_2=Kk_4$ and $k_3=Kk_5$, because this would imply $|K|=|\frac{\alpha}{\beta}|^6=|\frac{\alpha}{\beta}|^2$ which is a contradiction if $|\alpha|\neq |\beta|$. Hence the matrices $T(s,v)$ span matrices of the form
\[ 
\left(\begin{matrix}Ai &  B+Ci \\  D+Ei & -Ai  \end{matrix}\right)
\]
where $A,B,C,D,E\in\mathbb{R}$ are arbitrary real parameters. 
Now taking the closure of such matrices under commutators, one can easily see this spans all traceless matrices. 
Hence the Lie algebra spans $\mathfrak{sl}(2,\mathbb{C})$ as desired.

\textbf{Subcase B}: $|\alpha|^2= |\beta|^2=1/2$

In this case the Hamiltonians generated are of the form $X(\theta)\otimes X(\theta)$, so are not covered in the scope of this theorem. 
Note that the Lie alebgra of the $L$ gadgets here only span a two dimensional subspace of the form
\[\left(\begin{matrix}0& e^{-i\theta} (A+Bi) \\ e^{i\theta}(A+Bi) & 0\end{matrix}\right)\]
where $A,B\in\mathbb{R}$. This is closed under conjugation and does not span $\mathfrak{sl}(2,\mathbb{C})$.

\end{proof}
\end{proof}

\section{Proof of postselected universality when $b\neq c$}\label{app:edgecase}

Here we consider the postselected universality of circuits with entangling Hamiltonians for which $H\neq THT$. The proof in this case will follow analogously to the main proof. Furthermore, the construction of the inverse gadgets will have a much cleaner construction than the case $H=THT$.

Suppose we have a commuting Hamiltonian $H$ such that $H\neq THT$. By Claim \ref{clm:localdiag}, we know that $H = (U\otimes U) \diag (a,b,c,d) (U^\dagger \otimes U^\dagger)$ for some one-qubit unitary $U=\left(\begin{smallmatrix} \alpha & -\beta^* \\ \beta & \alpha^*\end{smallmatrix}\right)$ and some real parameters $a,b,c,d$.
The trace of $H$ contributes an irrelevant global phase to the unitary operator it generates, so without loss of generality we can assume $H$ is traceless, i.e., $a+b+c+d=0$. Since $H\neq THT$ we have $b\neq c$. As before, the fact $H$ can generate entanglement starting from the computational basis implies $\alpha\neq0,\beta\neq0$, and $b+c\neq 0$. 

Now consider the Hamiltonians 
\[
	H_1 = \frac{1}{c^2-b^2}(cH_{12} - bH_{21}), \quad H_2 = \frac{1}{b^2-c^2}(bH_{12} - cH_{21})
\]
Since we can apply both $H$, $-H$, $THT$, and $-THT$, this allows us to apply $H_1$ and $H_2$ for independent amouts of time.  Let $V(t_1,t_2)$ be the two-qubit unitary we obtain from running $H_1$ for time $t_1\in \mathbb{R}$ and $H_2$ for time $t_2\in \mathbb{R}$. We have
\[
V(t_1,t_2) = e^{it_1H_1}e^{it_2H_2} = (U^{\otimes 2}) D(t_1,t_2) ({U^\dagger}^{\otimes 2}),
\]
where $D(t_1,t_2)\triangleq \diag(e^{ia'(t_1+t_2)},e^{it_1},e^{it_2},e^{id'(t_1+t_2)})$. 

Now following our previous proof, we consider the following postselection gadget: 
\[\Qcircuit @C=1em @R=1em {
\lstick{\ket{\psi}} & \qw & \multigate{1}{D(t_1,t_2)} & \gate{U^\dagger} & \meter & \bra{0}   \\
\lstick{\ket{0}} & \gate{U} & \ghost{D(t_1,t_2)} & \qw & \ket{\psi'} 
}
\]
This performs the following transformation on the input state:
\[L(t_1,t_2) = \frac{1}{|\alpha||\beta|\sqrt{\left( e^{-i(t_1+t_2)} - e^{i(t_1+t_2)}\right)}}  \left( \begin{matrix} |\alpha|^2 e^{ia'(t_1 + t_2)}& \alpha \beta^* e^{it_2} \\ \alpha^* \beta e^{it_1} & |\beta|^2 e^{id'( t_1 + t_2)} \end{matrix} \right).\]

As before, this is a non-unitary transformation, and hence it is unclear how to invert $L$. Fortunately, when $H\neq THT$ we have the freedom to apply $H_1$ and $H_2$ for separate times, and this allows us to make a much simpler postselecting gadget to invert $L$, as follows:

\begin{clm} \label{clm:inversesedge} Given $L(t_1,t_2)$, where $t_i\in (0,\pi)\cup(\pi,2\pi)$, it is possible to construct $L(t_1,t_2)^{-1}$ by introducing three postselections into the circuit. Furthermore, this construction is efficiently computable in the manner described above.
\end{clm}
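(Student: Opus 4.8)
The plan is to run the argument of Claim~\ref{clm:inverses} (Appendix~\ref{app:inverses}) with one crucial simplification: since $H\neq THT$ we may run $H_1$ and $H_2$ for independent times, and this extra parameter collapses the three-subcase analysis of the $H=THT$ case into essentially one case. As $L(t_1,t_2)$ is normalized to $SL(2,\mathbb{C})$, its inverse is $L(t_1,t_2)^{-1}\propto\left(\begin{smallmatrix}L_{22}&-L_{12}\\-L_{21}&L_{11}\end{smallmatrix}\right)$, so it suffices to realize this matrix up to a scalar with a constant-size postselection gadget. First I would write down the variant of the $L$-gadget in which the ancilla is prepared in $\ket{1}$ and postselected on $\ket{1}$; a direct $2\times2$ computation gives a linear map $M(t_1,t_2)$ whose two off-diagonal entries already equal those of $L(t_1,t_2)^{-1}$ (up to the common scalar) and whose diagonal entries have the correct moduli $|\beta|^2,|\alpha|^2$ but carry the phases $e^{ia'(t_1+t_2)}$ and $e^{id'(t_1+t_2)}$ interchanged relative to $L^{-1}$. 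Thus $M=L^{-1}$ already when $a'=d'=-1$ (costing a single postselection), and in general $L(t_1,t_2)^{-1}=P\,M(t_1,t_2)\,P$ with $P=\diag(e^{i\phi/2},e^{-i\phi/2})$ and $\phi=(d'-a')(t_1+t_2)$, using the identity $\left(\begin{smallmatrix}e^{i\phi/2}&0\\0&e^{-i\phi/2}\end{smallmatrix}\right)\left(\begin{smallmatrix}a&b\\c&d\end{smallmatrix}\right)\left(\begin{smallmatrix}e^{i\phi/2}&0\\0&e^{-i\phi/2}\end{smallmatrix}\right)=\left(\begin{smallmatrix}ae^{i\phi}&b\\c&de^{-i\phi}\end{smallmatrix}\right)$.

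It then remains to realize the single phase gate $P$ with one postselection, and this is where the two time parameters are spent. The natural candidate is the two-parameter analog of the gadget $N(t)$ of Appendix~\ref{app:inverses}: feeding $D(s_1,s_2)$ into that gadget produces a diagonal linear map whose ratio of entries is $r(s_1,s_2)=\bigl(e^{is_1}-e^{ia'(s_1+s_2)}\bigr)/\bigl(e^{id'(s_1+s_2)}-e^{is_2}\bigr)$, and one checks by direct calculation — using $|1-e^{i\psi}|=2|\sin(\psi/2)|$ and $d'=-2-a'$ — that $|r|=1$ on a one-parameter locus of $(s_1,s_2)\in((0,\pi)\cup(\pi,2\pi))^2$ along which $\arg r$ is affine with nonzero slope, so that $r$ attains every value on the unit circle. (For $a'=-1$ one uses the second branch of this locus; this is the only residual case split, and it is far milder than the $a'=\pm1,-3$ split of the single-parameter construction, where $N(\pi/4)$ had to be pressed into service as a $\ket{1}$-postselector.) Hence $P$ costs one postselection, and applying it on each side of $M$ uses at most three postselections in total, as claimed. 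Each gadget preserves the $\sampIQPH$ normal form — every wire begins in $U\ket{0}$, ends with $U^\dagger$ followed by a computational-basis measurement, and carries only $D(\cdot,\cdot)$ gates in between — exactly as the $L$- and $N$-gadgets do in Appendix~\ref{app:inverses}. Efficiency is then immediate: $\phi$ is a fixed real number mod $2\pi$ determined by $a'$ and the given $t_1,t_2$; the times $s_1,s_2$ solve an equation whose derivatives are bounded in terms of $a'$ only, so they are bounded by an $H$-dependent constant and computable to accuracy $\epsilon$ in $\polylog(1/\epsilon)$ time by Newton's method; and since the only non-unitary factors are $M$ and the two copies of $P$, whose operator norms depend only on $H$, an $\epsilon$ error in any constituent propagates to an $O_H(\epsilon)$ error in $L^{-1}$.

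The step I expect to be the main obstacle is the verification that a single two-parameter $N$-type gadget realizes an \emph{arbitrary} diagonal phase gate, i.e.\ that the entry-ratio $r(s_1,s_2)$ is surjective onto the unit circle. This is precisely what keeps the construction within three postselections rather than forcing us to compose a growing number of phase gadgets, as the $\theta$-construction in Claims~\ref{clm:phase} and~\ref{clm:norm} does for $H=THT$; pinning down the unit-modulus locus, checking surjectivity of $\arg r$ on it, and handling the small $a'=-1$ exception is the one place where a genuine (if short) computation is needed. The remaining ingredients — the explicit form of $M(t_1,t_2)$ and the normal-form bookkeeping that glues the three gadgets into a legal $\PostIQPH$ circuit — parallel Appendix~\ref{app:inverses} and are routine.
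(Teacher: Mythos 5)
Your decomposition is the same as the paper's: realize $M(t_1,t_2)$ by the $\ket{1}$-prepared, $\ket{1}$-postselected variant of the $L$-gadget, observe that $M$ and $L^{-1}$ differ only by the swapped phases $e^{ia'(t_1+t_2)}$ and $e^{id'(t_1+t_2)}$ on the diagonal, and then conjugate by a diagonal phase gate $P=\diag(e^{i\phi/2},e^{-i\phi/2})$ with $\phi=(d'-a')(t_1+t_2)$, realized via the $N$-type gadget, for a total of three postselections. Where you diverge is in \emph{verifying} that the $N$-gadget can produce an arbitrary phase gate: you propose to locate the full unit-modulus locus $\{(s_1,s_2):\abs{r(s_1,s_2)}=1\}$, check that $\arg r$ is affine with nonzero slope along it, and you flag a residual case split at $a'=-1$. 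The paper instead just substitutes $s_1=\theta$, $s_2=-\theta$, which kills the $a'$- and $d'$-dependence entirely since $s_1+s_2=0$, and gives $r(\theta,-\theta)=(e^{i\theta}-1)/(1-e^{-i\theta})=e^{i\theta}$ in one line — no locus analysis, no slope computation, no surjectivity argument, and in particular no case split (for $a'=-1$ one has $d'=-1$ as well, so $\phi=0$, $P=I$, and $M=L^{-1}$ already; the paper notes this in a footnote). Your "main obstacle" is therefore a phantom: the step you expect to require a genuine computation collapses under the right specialization, and the $a'=-1$ branch-switching you anticipate never arises. The rest of your argument — the $PMP$ identity, the normal-form bookkeeping, and the error-propagation bound using that $M$ and $P$ have $H$-dependent operator norms — matches the paper's proof.
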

\begin{proof}
We will need two additional gadgets for our construction. First, consider a modification of the gadget for $L(t_1, t_2)$, where we start the qubit in the $\ket{1}$ state and postselect on the $\ket{1}$ state:
\[\Qcircuit @C=1em @R=1em {
\lstick{\ket{\psi}} & \qw & \multigate{1}{D(t_1,t_2)} & \gate{U^\dagger} & \meter & \bra{1}   \\
\lstick{\ket{1}} & \gate{U} & \ghost{D(t_1,t_2)} & \qw & \ket{\psi'} 
}
\]
By a direct calculation, one can show the linear transformation performed on $\ket{\psi}$ is given by
\[M(t_1,t_2) = \frac{1}{|\alpha| |\beta| \sqrt{\left( e^{-i(t_1+t_2)} - e^{i(t_1+t_2)}\right)}}  \left( \begin{matrix} |\beta|^2 e^{ia'(t_1 + t_2)}& -\alpha \beta^* e^{it_2} \\ -\alpha^* \beta e^{it_1} & |\alpha|^2 e^{id'( t_1 +  t_2)} \end{matrix} \right)\]
This is tantalizingly close to the inverse of $L$, which is 
\[L(t_1,t_2) ^{-1} = \frac{1}{|\alpha| |\beta|\sqrt{\left( e^{-i(t_1+t_2)} - e^{i(t_1+t_2)}\right)}}  \left( \begin{matrix} |\beta|^2 e^{id'(t_1 + t_2)}& -\alpha \beta^* e^{it_2} \\ -\alpha^* \beta e^{it_1} & |\alpha|^2 e^{ia'( t_1 +  t_2)} \end{matrix} \right)\]
The only thing that is off is that the phase of the upper left and bottom right entries are incorrect. To correct these phases, we need to introduce another gadget:
\[\Qcircuit @C=1em @R=1em {
\lstick{\ket{\psi}} & \qw & \multigate{1}{D(t_1,t_2)} &  \qw & \qw & \ket{\psi'} \\
\lstick{\ket{0}} & \gate{U} & \ghost{D(t_1,t_2)} & \gate{U^\dagger}  &\meter & \bra{1}  
}
\]
In other words, instead of using the gate in a teleportation-like protocol, we instead use it to apply phases to $\ket{\psi'}$. This gate performs the following transformation on the input state:
\[
N(t_1,t_2) = \frac{1}{\sqrt{( e^{it_1}-e^{ia'(t_1 +  t_2)} )( e^{id'(t_1 + t_2) } -e^{it_2} ) }} \left(\begin{matrix} e^{it_1}-e^{ia'(t_1 +  t_2)}   & 0 \\ 0 & e^{id'(t_1 +  t_2)}  -e^{it_2}  \end{matrix}\right)
\]
Since $N$ is a diagonal matrix, the only physical quantity that matters is the ratio $r(t_1,t_2)$ of its two entries, which is a complex number given by
\[r(t_1,t_2) = \frac{  e^{it_1} -e^{ia'(t_1 +  t_2)} }{e^{id'(t_1 +  t_2)} -e^{it_2}}.\]
If $r = r(t_1,t_2)$ takes on a certain value, then it immediately follows that $N(t_1,t_2) = \left(\begin{smallmatrix} \sqrt{r} & 0 \\ 0 & \sqrt{r}^{-1} \end{smallmatrix}\right)$, because of our normalization. 

We will now show that by setting $t_1$ and $t_2$, we can choose $r(t_1,t_2)$ to be any complex phase $e^{i\theta}$ that we like. 
In fact, if $\frac{a'}{d'}$ is irrational, one can also show that one can choose $t_1,t_2$ to approximate any complex number; however, this will not be necessary for our construction, so we omit this here.
\begin{clm} For any $\theta \in (0,2\pi)$, there exist $t_1,t_2 \in \mathbb{R}$ such that $r(t_1,t_2) = e^{i\theta}$.
\end{clm}
\begin{proof}
Set $t_1=\theta$ and $t_2=-\theta$. We immediately have
\[
r(\theta,-\theta) = \frac{e^{i\theta} - 1}{1 - e^{-i\theta}} = \frac{e^{i\theta} - 1}{e^{-i\theta} (e^{i\theta} - 1)} = e^{i\theta}.
\]
Note that this only works if $e^{i\theta} \neq 1$ - this is why we have omitted $\theta=0$ from our range of $\theta$. In other words, this gadget can be used to perform any diagonal matrix other than the identity. 
\end{proof}

Putting this all together, we now show how to invert $L(t_1,t_2)$. 
Set $s_1 = i(d'(t_1+t_2) - a'(t_1+t_2)) $ and $s_2 =-s_1 $. Then we have\footnote{This is possible as long as $e^{i(d'(t_1+t_2) - a'(t_1+t_2))} \neq 1$. If this quantity is one, then $L(t_1,t_2)^{-1} = M(t_1,t_2)$, so no additional gadgets are necessary to obtain inverses.}  
\[N(s_1,s_2) = \left(\begin{matrix} e^{\frac{i}{2}(d'(t_1+t_2) - a'(t_1+t_2))}& 0 \\ 0 & e^{-\frac{i}{2}(d'(t_1+t_2) - a'(t_1+t_2))}\end{matrix}\right)
\]
Now one can easily check that
\[
L(t_1,t_2)^{-1} = N(s_1,s_2) M(t_1,t_2) N(s_1,s_2)
\]
And therefore the following gadget performs $L(t_1,t_2)^{-1}$:
\[\Qcircuit @C=1em @R=1em {
\lstick{\ket{0}}    & \gate{U} & \multigate{1}{D(s_2,s_1)}   & \qw                                      & \qw                                    &\gate{U^\dagger}      &\meter   & \bra{1} &  \\
\lstick{\ket{\psi}} & \qw       & \ghost{D(s_2,s_1)}             &  \multigate{1}{D(t_2,t_2)}      &  \qw                                    & \gate{U^\dagger}  &\meter & \bra{1} \\
\lstick{\ket{1}} & \gate{U} & \qw                                      &   \ghost{D(t_1,t_2)}                  & \multigate{1}{D(s_1,s_2)}  & \qw                      & \qw     & \ket{\psi'}          \\
\lstick{\ket{0}} & \gate{U} & \qw                                      & \qw                                         & \ghost{D(s_1,s_2)}             & \gate{U^\dagger}    & \meter & \bra{1} } 
\]
(Note that $s_1$ and $s_2$ are switched in the first diagonal matrix, as we have switched the usual order of the qubits.)
 
Hence using these postselection gadgets, we can generate not only $L(t_1,t_2)$, but also its inverse. 
Furthermore, this construction is manifestly efficient, since $s_1$ and $s_2$ are efficiently computable given $t_1$ and $t_2$.  
\end{proof}

We can therefore apply both $L(t_1,t_2)$ and $L(t_1,t_2)^{-1}$ in our postselected circuits. This once again allows us to apply Lie theory to determine which subset of transformations can be applied by taking products of $L$ matrices. Following our proof of the main theorem, we now show the Lie algebra of the $L$ matrices spans $\mathfrak{sl}(2,\mathbb{C})$.  This completes the proof of postselected universality in this case in analogy with the main theorem.

\begin{clm} The Lie algebra of the $L$ matrices spans  $\mathfrak{sl}(2,\mathbb{C})$ in the case where $T\neq THT$.
\end{clm}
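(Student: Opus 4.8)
The plan is to follow the same strategy as the proof of Claim~\ref{clm:liealg}: read off tangent vectors to $S$ at the identity from the two-parameter family $L(t_1,t_2)$ of gadgets, and then close up under real linear combinations, commutators, and conjugation by the group. The pleasant feature of this case is that having two independent times makes the argument strictly simpler than for $b=c$ --- none of the exceptional sub-cases ($a'=\pm1,-3$, and $X(\theta)\otimes X(\theta)$) survives.

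First I would produce elements of $\mathfrak g\triangleq\Lie(S)$. With the normalization given above each $L(t_1,t_2)$ already lies in $SL(2,\mathbb{C})$, and by Claim~\ref{clm:inversesedge} so does $L(t_1,t_2)^{-1}$; hence for generic $(v_1,v_2)$ (in particular with $v_1+v_2\notin\pi\mathbb{Z}$) the path $\epsilon\mapsto L(v_1+\epsilon,v_2-\epsilon)\,L(v_1,v_2)^{-1}$ lies in $S$ and equals $I$ at $\epsilon=0$, so its derivative lies in $\mathfrak g$. Writing $\widetilde L$ for the unnormalized matrix part of $L$, this derivative is $g:=(\partial_{t_1}\widetilde L-\partial_{t_2}\widetilde L)\widetilde L^{-1}$ (the normalizing scalar, a function of $\sigma:=t_1+t_2$ only, cancels in the difference). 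A short computation --- using $a'+d'=-1$, which holds since $H$ is traceless, together with $\det\widetilde L=-2i|\alpha|^2|\beta|^2\sin\sigma$ --- gives
\[ g=\frac{-1}{2\sin\sigma}\left(\begin{matrix} e^{i\sigma} & -\dfrac{\alpha}{\beta}\,e^{i\psi} \\[4pt] \dfrac{\beta}{\alpha}\,e^{-i\psi} & -e^{i\sigma}\end{matrix}\right),\qquad \psi:=\Big(\tfrac12+a'\Big)\sigma-\tfrac{t_1-t_2}{2}. \]
Two observations drive the rest: the prefactor $-1/(2\sin\sigma)$ is \emph{real}, hence discardable since $\mathfrak g$ is a real vector space; and every entry of the matrix is nonzero --- the diagonal entries $\pm e^{i\sigma}$ always, the off-diagonal entries because $\alpha,\beta\neq0$ in the entangling case. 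This is exactly why the $b=c$ exceptional cases evaporate: there, degeneracy came from a coefficient $(a'+1)$, $(1-a')$, or $(3+a')$ vanishing, whereas here the diagonal of $g$ is the always-present $e^{i(t_1+t_2)}$ term, independent of $a'$.

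From here I would assemble a spanning set of $\mathfrak g$. Fix $\sigma$ generic; as $t_1-t_2$ varies over an open interval, $\psi$ varies over an open interval $I$. For $\psi_1,\psi_2\in I$ the difference of the corresponding matrices is purely off-diagonal and equals, up to a real scalar, $\left(\begin{smallmatrix}0 & -\alpha u/\beta\\ \beta\bar u/\alpha & 0\end{smallmatrix}\right)$ with $u=e^{i\psi_1}-e^{i\psi_2}$; since $\{e^{i\psi_1}-e^{i\psi_2}:\psi_1,\psi_2\in I\}$ real-spans $\mathbb{C}$, this puts the whole two-real-dimensional family $\big\{\left(\begin{smallmatrix}0 & -\alpha u/\beta\\ \beta\bar u/\alpha & 0\end{smallmatrix}\right):u\in\mathbb{C}\big\}$ into $\mathfrak g$. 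Subtracting such a term from a single $g$ then puts $\diag(e^{i\sigma},-e^{i\sigma})$ into $\mathfrak g$, and letting $\sigma$ range over $(0,\pi)$ --- so that $e^{i\sigma}$ real-spans $\mathbb{C}$ --- puts all traceless diagonal matrices into $\mathfrak g$. Finally, $[\diag(1,-1),\,\cdot\,]$ carries the off-diagonal family above to $\big\{\left(\begin{smallmatrix}0 & -\alpha v/\beta\\ -\beta\bar v/\alpha & 0\end{smallmatrix}\right):v\in\mathbb{C}\big\}$, and the sum of these two families is the entire four-real-dimensional space of off-diagonal matrices. Together with the traceless diagonals this is all of $\mathfrak{sl}(2,\mathbb{C})$, so $\exp(\mathfrak{sl}(2,\mathbb{C}))\subseteq S$ is dense in $SL(2,\mathbb{C})$ (as in the proof of Claim~\ref{density}); hence $S=SL(2,\mathbb{C})$, which is the claimed postselected universality.

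The only laborious step is the explicit computation of $g$, and the only thing to be careful about is that $\mathfrak g$ is closed under real, not complex, scalar multiplication: the off-diagonal directions must be promoted to a full complex span through the commutator with $\diag(1,-1)$ and the diagonal directions through varying $\sigma$, rather than by rescaling. I do not expect a genuine obstacle here --- the disappearance of the $b=c$ exceptional cases is precisely the payoff of having two independently controllable times.
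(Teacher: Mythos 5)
Your proof is correct, and it takes a genuinely different and noticeably cleaner route than the paper's. Where the paper computes the two partial derivatives $g(v_1,v_2)=\partial_{t_1}M|$ and $h(v_1,v_2)=\partial_{t_2}M|$ separately and then wrangles real linear combinations of them (with $\arcsin$/$\arccos$ gymnastics, and a case split on $a'\in\{0,-1\}$ because the diagonal coefficients $-a'i$ and $-i(a'+1)$ can vanish), you differentiate along the single path $\epsilon\mapsto L(v_1+\epsilon,v_2-\epsilon)L(v_1,v_2)^{-1}$, which holds $\sigma=t_1+t_2$ fixed so that the normalizing scalar drops out automatically and the $a'$-dependent diagonal contributions of $\partial_{t_1}$ and $\partial_{t_2}$ cancel against each other. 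The resulting tangent vector has diagonal $\pm e^{i\sigma}$ and off-diagonal entries $\mp\frac{\alpha}{\beta}e^{\pm i\psi}$, all of which are manifestly nonvanishing for every $a'$ (given $\alpha,\beta\neq0$ and $\sigma\notin\pi\mathbb{Z}$). That single observation eliminates the paper's Cases 2 and 3 entirely. Your subsequent spanning argument — real linear combinations of $g$ at fixed $\sigma$ and varying $\psi$ to get the antiholomorphic $2$-real-dimensional off-diagonal family and the diagonal $\diag(e^{i\sigma},-e^{i\sigma})$, then varying $\sigma$ to fill out the diagonals, then one commutator with $\diag(1,-1)$ to fill out the remaining two off-diagonal dimensions — is tight and checks out (the two off-diagonal families have trivial intersection, so their sum is the full $4$-real-dimensional off-diagonal space). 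I verified your formula for $g$ by direct multiplication, including $\det\widetilde L=-2i|\alpha|^2|\beta|^2\sin\sigma$, the identity $t_2+a'\sigma=\psi$ and $t_1+d'\sigma=-\psi$ under $a'+d'=-1$, and the consistency check that your $g$ equals the paper's $g(\theta)-h(\theta)$ at $\sigma=\pi/2$. In short: same high-level strategy (tangent vectors at the identity plus real-span, commutators, and conjugation), but your choice of differentiation direction gives a case-free proof with a visibly simpler spanning step, and nicely exposes why the extra freedom in the $b\neq c$ setting makes the $b=c$ exceptional cases disappear.
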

\begin{proof}

Consider elements of the form 
\[
M(t_1,t_2,s_1,s_2) \triangleq L(D(t_1,t_2)) L(D(s_1,s_2))^{-1}.
\]
As $t_1,t_2,s_1,s_2$ vary over the set $ \left\{t_1,t_2 : t_1+t_2 \in (0,\pi)\cup(\pi,2\pi) \right\} \times \left\{s_1,s_2 : s_1+s_2 \in (0,\pi)\cup(\pi,2\pi) \right\}$, these form continuous paths within $S$. In particular, at the point where $s_1=t_1$ and $s_2=t_2$, this path passes through the identity. Now consider 
\[
g(v_1,v_2) \triangleq \frac{\partial}{\partial t_1} \left[M(t_1,t_2,s_1,s_2)\right] \Bigr|_{\substack{s_1=t_1=v_1\\s_2=t_2=v_2}}
\]
and
\[
h(v_1,v_2) \triangleq \frac{\partial}{\partial t_2} \left[M(t_1,t_2,s_1,s_2)\right] \Bigr|_{\substack{s_1=t_1=v_1\\s_2=t_2=v_2}}.
\]
These are tangent vectors to paths in $S$, evaluated as they pass through the identity. Hence we have that $g(v_1,v_2)$ and $h(v_1,v_2) \in \mathfrak{g}$ for all $v_1,v_2 \in  \left\{v_1,v_2 : v_1+v_2 \in (0,\pi)\cup(\pi,2\pi) \right\}$. By direct calculation, one can show that
\[
g(v_1,v_2) = -\frac{1}{2\sin(v_1+v_2)}\left(\begin{matrix} a'e^{-i(v_1+v_2)} + \cos(v_1+v_2) & -\frac{\alpha}{\beta} a' e^{i(a'v_1+(a'+1)v_2)} \\ \frac{\beta}{\alpha} (2+a')e^{i((d'+1)v_1 + d' v_2)} & -a'e^{-i(v_1+v_2)} - \cos(v_1+v_2) \end{matrix}\right)
\]
and
\[
h(v_1,v_2) = -\frac{1}{2\sin(v_1+v_2)}\left(\begin{matrix} a'e^{-i(v_1+v_2)} - i \sin(v_1+v_2) & \frac{\alpha}{\beta} (1-a')e^{i(a'v_1) + (a'+1)v_2} \\ \frac{\beta}{\alpha}(1+a')e^{i((d'+1)v_1 + d'v_2)} & -a'e^{-i(v_1+v_2)} + i\sin(t+1+v_2) \end{matrix} \right)
\]
where we have simplified using the fact that $d'=-1-a'$. Now suppose that we evaluate these matrices at the points where $v_1=\theta$ and $v_2 = \frac{\pi}{2} - \theta$ for some real parameter $\theta$; this ensures that $v_1,v_2$ are in the allowed set, and simplifies the above expressions to
\begin{align*}
g(\theta)& = -\frac{1}{2}\left(\begin{matrix} -a'i & -\frac{\alpha}{\beta} a' e^{i(-\theta + (a'+1)\frac{\pi}{2})} \\ \frac{\beta}{\alpha}(2+a')e^{i(\theta + d'\frac{\pi}{2})} & a'i \end{matrix}\right) \\
&= -\frac{1}{2} \left(\begin{matrix} -a'i & -\frac{\alpha}{\beta}a'e^{i\theta'} \\ \frac{\beta}{\alpha}(2+a')e^{-i\theta'} & a'i \end{matrix}\right),
\end{align*}
here we define $\theta' = -\theta + (a'+1)\frac{\pi}{2}$; this follows from the fact that $d'=-1-a'$. 
Likewise, we can consider $h(v_1,v_2)$ evaluated when $v_1=\theta$ and $v_2 = \frac{\pi}{2} - \theta$; this evaluates to
\begin{align*}
h(\theta) &= -\frac{1}{2}\left(\begin{matrix} -ia' - i  & \frac{\alpha}{\beta} (1-a')e^{i(-\theta + (a'+1)\frac{\pi}{2})} \\ \frac{\beta}{\alpha}(1+a')e^{i(\theta + d'\frac{\pi}{2})} & ia' + i \end{matrix} \right)
\\ &= -\frac{1}{2} \left(\begin{matrix} -i(a' +1)  & \frac{\alpha}{\beta} (1-a')e^{i\theta'} \\ \frac{\beta}{\alpha}(1+a')e^{-i\theta'} & i(a' +1) \end{matrix} \right).
\end{align*} 
By setting the value of $\theta$ in the range $[0,2\pi)$, we can select any values of $\theta'$ we like; hence we will work with $\theta'$ from this point forward.

For now we will assume that $a'\neq 0$ and $a'\neq 1$; we will handle the cases $a'=0$ and $a'=-1$ separately. The proof of the general case is the most difficult one.

\textbf{Case 1:} $a'\neq 0$ and $a'\neq -1$.

We know that $g(\theta') \in \mathfrak{g}$ and $h(\theta') \in \mathfrak{g}$ . Furthermore, since $\mathfrak{g}$ is a real Lie algebra, it is closed as a vector space over $\mathbb{R}$. Hence we must also have that 
\[ j(\theta_1,\theta_2) \triangleq -2\left( \frac{1}{a'+1}h(\theta_2) - \frac{1}{a'}g(\theta_1)\right)  = \left(\begin{matrix} 0 & \frac{\alpha}{\beta} \left( \frac{1-a'}{1+a'}e^{i\theta_2} + e^{i\theta_1}   \right) \\ \frac{\beta}{\alpha}\left( e^{-i\theta_2} - \frac{2+a'}{a'}e^{-i\theta_1}   \right) & 0 \end{matrix} \right) \in \mathfrak{g}\]
Where we have used the assumption that $a' \neq 0$ and $a' \neq -1$. We will now show that as we vary $\theta_1$ and $\theta_2$, these elements $j(\theta_1,\theta_2)$ span all two by two matrices of the form $\left(\begin{smallmatrix}0 & c_1 \\ c_2 & 0 \end{smallmatrix}\right)$, where $c_1,c_2 \in \mathbb{C}$.

To prove this, we will break into two subcases. For convenience, define
\[
	k = \frac{a'-1}{a'+1}.
\]

\textbf{Subcase A:} $a'>0$, i.e., $-1< k < 1$.

In this subcase, consider the matrices 
\begin{equation} \label{eq:span1} \frac{-a'(1+a')}{4} \left[ j\left( \arcsin k, \frac{\pi}{2}\right) + j\left( \pi - \arcsin k, \frac{\pi}{2}\right) \right]= \left(\begin{matrix}0 & 0 \\ \frac{\beta}{\alpha}i & 0\end{matrix}\right) \end{equation}
\begin{equation} \label{eq:span2} \frac{1+a'}{4\sqrt{a'}} \left[ j\left( \arcsin k, \frac{\pi}{2}\right) - j\left( \pi - \arcsin k, \frac{\pi}{2}\right) \right]= \left(\begin{matrix}0 & \frac{\alpha}{\beta} \\ -\frac{\beta}{\alpha} \frac{2+a'}{a'}& 0\end{matrix}\right) \end{equation}
and
\begin{equation}\label{eq:span3} \frac{a'(1+a')}{4} \left[ j\left( \arccos k, 0\right) + j\left( - \arccos k, 0 \right) \right]= \left(\begin{matrix}0 & 0 \\ \frac{\beta}{\alpha}& 0\end{matrix}\right) \end{equation}
\begin{equation}\label{eq:span4}  \frac{1+a'}{4\sqrt{a'}} \left[ j\left( \arccos k, 0\right) - j\left( - \arccos k, 0 \right) \right]= \left(\begin{matrix}0 & \frac{\alpha}{\beta}i \\ \frac{\beta}{\alpha} \frac{2+a'}{a'}i & 0\end{matrix}\right).\end{equation}

These are well-defined as we have $a'>0$ in this case. Clearly matrices (\ref{eq:span1}) and (\ref{eq:span3}) span the space of all matrices with a single complex entry in the bottom left hand corner. Hence, when combined with matrices (\ref{eq:span2}) and (\ref{eq:span4}), they clearly span the space of all matrices with complex entries in the off diagonal elements.

\textbf{Subcase B:} $a'<0$ and $a'\neq -1$, i.e., $-1< 1/k <1$

This subcase follows similarly; consider the matrices
\begin{equation} \label{eq:span5} \frac{a'(1-a')}{4} \left[ j\left( \frac{\pi}{2}, \arcsin \frac{1}{k} \right) + j\left(\frac{\pi}{2}, \pi - \arcsin\frac{1}{k}  \right) \right]= \left(\begin{matrix}0 & 0 \\ \frac{\beta}{\alpha}i & 0\end{matrix}\right) \end{equation}

\begin{equation} \label{eq:span6} \frac{1+a'}{4\sqrt{-a'}} \left[ j\left( \frac{\pi}{2}, \arcsin\frac{1}{k}\right) - j\left(\frac{\pi}{2}, \pi - \arcsin\frac{1}{k}  \right) \right]= \left(\begin{matrix}0 & \frac{\alpha}{\beta} \\ -\frac{\beta}{\alpha} \frac{1+a'}{1-a'}& 0\end{matrix}\right) \end{equation}

and
\begin{equation}\label{eq:span7} \frac{-a'(1-a')}{4} \left[ j\left( 0, \arccos\frac{1}{k}\right) + j\left( 0, - \arccos\frac{1}{k} \right) \right]= \left(\begin{matrix}0 & 0 \\ \frac{\beta}{\alpha}& 0\end{matrix}\right) \end{equation}

\begin{equation}\label{eq:span8} \frac{1+a'}{4\sqrt{-a'}} \left[ j\left( 0, \arccos\frac{1}{k}\right) - j\left( 0, - \arccos\frac{1}{k} \right) \right]= \left(\begin{matrix}0 & \frac{\alpha}{\beta} i  \\ \frac{\beta}{\alpha}\frac{1+a'}{1-a'}i & 0\end{matrix}\right). \end{equation}

These are well-defined as we have $a'<0$ in this case, as well as $a'\neq-1$. Again, clearly we have that (\ref{eq:span5}) and (\ref{eq:span7}) span all matrices with a single complex entry in the bottom left of the matrix. Hence, adding in (\ref{eq:span6}) and (\ref{eq:span8}), we span all off-diagonal complex matrices, which is what we wanted to show.

In either subcase, our $j$ matrices span all matrices of the form
\[\left(\begin{matrix}0 & A+Bi \\ C+Di & 0\end{matrix}\right)\] where $A,B,C,D \in\mathbb{R}$. Additionally, our $g$ and $h$ matrices are also in $\mathfrak{g}$, and clearly combining these with the $j$ matrices increases the span to 
\[\left(\begin{matrix}Ei & A+Bi \\ C+Di & -Ei\end{matrix}\right)\] where $A,B,C,D,E \in\mathbb{R}$. This is a five-dimensional subspace of $\mathfrak{sl}(2,\mathbb{C})$. Now to show that we can span all 6 dimensions of $\mathfrak{sl}(2,\mathbb{C})$, we invoke the fact that $\mathfrak{g}$ is closed under commutation, so $\mathfrak{g}$ contains $\left[\left(\begin{smallmatrix}0 & 1 \\ 0 & 0 \end{smallmatrix}\right)  , \left(\begin{smallmatrix}0 & 0 \\ 1 & 0 \end{smallmatrix}\right)  \right] = \left(\begin{smallmatrix}1 & 0 \\ 0 & -1 \end{smallmatrix}\right)$. Hence $\mathfrak{g}$ must include all matrices of the form
\[\left(\begin{matrix}F+Ei & A+Bi \\ C+Di & -F-Ei\end{matrix}\right)\] 
where $A,B,C,D,E,F \in \mathbb{R}$. In other words, $\mathfrak{g}=\mathfrak{sl}(2,\mathbb{C})$.

We've now shown Claim \ref{clm:liealg} in the case where $a'\neq 0$ and $a'\neq -1$. We now prove the claim in these remaining two cases.

\textbf{Case 2:} $a'=0$.

In this case we have
\[g(\theta')  = -\frac{1}{2} \left(\begin{matrix} 0 & 0 \\ \frac{\beta}{\alpha}2e^{-i\theta'} &0 \end{matrix}\right)\]
As $\theta$ varies these matrices clearly span all matrices a single complex number in the bottom left entry. Now in this case we also have that
\[
h(\theta')= -\frac{1}{2} \left(\begin{matrix} -i  & \frac{\alpha}{\beta} e^{i\theta'} \\ \frac{\beta}{\alpha}e^{-i\theta'} & i \end{matrix} \right)
\]
Since $\mathfrak{g}$ is closed under addition and scalar multiplication by $\mathbb{R}$, and applying

\[h(\theta') - h(\theta'') =  -\frac{1}{2} \left(\begin{matrix} 0  & \frac{\alpha}{\beta} (e^{i\theta'}-e^{i\theta''}) \\ \frac{\beta}{\alpha} (e^{-i\theta'} -e^{i\theta''})& 0 \end{matrix} \right)\in \mathfrak{g}\]
Now adding in multiples of $g$, we have that $\mathfrak{g}$ contains matrices of the form 
\[ \left(\begin{matrix} 0  & \frac{\alpha}{\beta} (e^{i\theta'}-e^{i\theta''}) \\  0 & 0 \end{matrix} \right)
\]
which clearly span all matrices with a complex entry in the upper right corner. Hence we span all off-diagonal matrices. Now adding in $h(\theta)$ for any $\theta$, we span all matrices of the form $\left(\begin{matrix}Ei & A+Bi \\ C+Di & -Ei\end{matrix}\right)$ where $A,B,C,D,E \in\mathbb{R}$. As discussed in Case 1, by taking the closure of these under commutation we have that $\mathfrak{g} = \mathfrak{sl}(2\mathbb{C})$ as desired, which completes the proof of Case 2.

\textbf{Case 3:} $a'=-1$

This case follows very similarly to Case 2. When $a'=-1$ we have that
\[h(\theta')= -\frac{1}{2} \left(\begin{matrix}0  & \frac{\alpha}{\beta} 2e^{i\theta'} \\ 0 & 0 \end{matrix} \right)\]
which clearly span all complex matrices with a single entry in the upper right corner. In this case, we also have that
\[g(\theta')= -\frac{1}{2} \left(\begin{matrix} i & -\frac{\alpha}{\beta}-e^{i\theta'} \\ \frac{\beta}{\alpha}e^{-i\theta'} & -i \end{matrix}\right),\]
By considering the difference  $g(\theta')- g(\theta'')$, and noting that we already span matrices with a single entry in the upper right corner, this shows that we span all off-diagonal matrices. Now adding in $g(\theta')$ for any $\theta'$ we see that we span all matrices of the form $\left(\begin{matrix}Ei & A+Bi \\ C+Di & -Ei\end{matrix}\right)$ where $A,B,C,D,E \in\mathbb{R}$. As discussed in Case 1, by taking the closure of these under commutation we have that $\mathfrak{g} = \mathfrak{sl}(2, \mathbb{C})$ as desired. This completes the proof of Case 3, hence the proof of the claim.
\end{proof}

\end{document}